\newcommand{\parhead}[1]{{\textbf{#1.}\xspace}}
\newcommand{\ADN}{{Anonymous Dynamic Network}\xspace}
\newcommand{\ADNs}{{Anonymous Dynamic Networks}\xspace}
\newcommand{\MC}{\textsc{Methodical Counting}\xspace}
\newcommand{\LLMC}{\textsc{Leaderless \MC}\xspace}
\newcommand{\nameD}{MMC\xspace}
\newcommand{\nameR}{LLMC\xspace}
\newcommand{\nameT}{MMCT\xspace}
\newcommand{\ellp}{\ell'\xspace}
\newcommand{\ldr}{{black}\xspace}
\newcommand{\notldr}{{white}\xspace}
\newcommand{\cA}{{\mathcal{A}}\xspace}
\newcommand{\gadget}{{gadget}\xspace}
\newcommand{\gadgets}{{gadgets}\xspace}
\newcommand{\miguel}[1]{\textcolor{blue}{#1}}
\renewcommand{\miguel}[1]{#1}
\newtheorem{proposition}{Proposition}
\newtheorem{definition}{Definition}
\newtheorem{claim}{Claim}
\newcommand{\remove}[1]{}
\begin{document}

\title{Polynomial Anonymous Dynamic Distributed Computing\\ without a Unique Leader\thanks{An extended abstract of this work has been presented in~\cite{icalp19counting}.}}

\author{
Dariusz~R.~Kowalski~\thanks{School of Computer and Cyber Sciences, Augusta University, GA, USA; SWPS University of Social Sciences and Humanities, Warsaw, Poland; \href{mailto:dkowalski@augusta.edu}{dkowalski@augusta.edu}}
\and
Miguel~A.~Mosteiro~\thanks{Computer Science Department, Pace University, New York, NY, USA; \href{mailto:mmosteiro@pace.edu}{mmosteiro@pace.edu}}
}

%\address[augustaaddress]{School of Computer and Cyber Sciences, Augusta University, GA, USA}
%\address[swpsaddress]{SWPS University of Social Sciences and Humanities, Warsaw, Poland}
%\address[paceaddress]{Computer Science Department, Pace University, New York, NY, USA}

%\ead{dkowalski@augusta.edu}
%\ead{mmosteiro@pace.edu}

\date{}

\maketitle

\begin{abstract}

Counting the number of nodes in \ADNs is enticing from an algorithmic perspective: an important computation in a restricted platform with promising applications.
Starting with Michail, Chatzigiannakis, and Spirakis~\cite{spirakis}, a flurry of papers sped up the running time guarantees from doubly-exponential to polynomial~\cite{KowalskiMicalp18}. There is a common theme across all those works: a distinguished node is assumed to be present, because Counting cannot be solved deterministically without at least one. 

In the present work we study challenging questions that naturally follow: how to 
efficiently count with more than one distinguished node, or how to count without any distinguished node. More importantly, what is the minimal information needed about these distinguished nodes and what is the best we can aim for (count precision, stochastic guarantees, etc.) without any. 
We present negative and positive results to answer these questions. To the best of our knowledge, this is the first work that addresses them.

\end{abstract}

%\linenumbers

\section{Introduction}\label{sec:introduction}

In the \ADN (ADN) model, network nodes lack any form of identifiers, that is, they cannot be distinguished among themselves, and communication links between nodes may change arbitrarily and continuously over time.

The recent excitement around the problem of \emph{Counting}, i.e., computing the number of nodes of an \ADNs comes as no surprise. On one hand, knowing the number of computational entities is a fundamental requirement to decide termination in a myriad of distributed algorithms. 
On the other hand, ADN is a very restrictive scenario from an algorithmic perspective. 
The combination of an important problem with harsh computational conditions is any algorithmist's delight. 

Node anonymity in ADNs is motivated by expected applications of such communication infrastructure. For instance, in ad-hoc networks embedded in the Internet of Things, nodes may have to be deployed in a massive scale, and having unique identifiers may simply be impractical or inconvenient. Moreover, low node-cost expectations may introduce uncertainty about the number of nodes that will effectively startup. Hence, the need of Counting. 

Strikingly, the progress on deterministic Counting speed-ups over various works ranged over a broad spectrum: from unbounded~\cite{conscious, oracle} to polynomial time~\cite{KowalskiMicalp18}, going through doubly-exponential~\cite{conscious} and exponential time~\cite{LunaB15,ChakrabortyMM18}. In all that fruitful work, the ADN model has been equipped with one distinguishable node\footnote{Exactly one, usually called \emph{leader}. We refrain from using the name leader to avoid confusion: in our model we may have more than one, and in our algorithm they are not going to select a single one among them. Moreover, they cannot even be local leaders, because due to ADN's dynamicity they may all be connected being their own (local) leaders.}. The assumption is well motivated: in a seminal paper~\cite{spirakis}, it was shown that, without at least one such node, Counting cannot be solved deterministically. 
But what if we have more than one special node? 
%Do these nodes really need to be \emph{special}?
In what sense these nodes need to be special?
%In other words, 
Is it enough to put one of two different programs on each of the nodes, that otherwise are all identical?
Moreover, can we let the nodes choose at random which program to run and have no special nodes?
To the best of our knowledge, this is the first work that considers these questions.

As the more general case where all nodes are identical, only differentiated by the program they run, let the set of $n$ network nodes be formed by $\ell$ \emph{\ldr} nodes (the ``special'' ones) and $n-\ell$ \emph{\notldr} nodes (the ``regular'' ones).   
 {\bf Our first contribution} is negative results. On one hand, if $\ell$ is unknown, Counting cannot be solved deterministically. 
 %On the other hand, for randomized Counting algorithms, we show that if $\ell$ is unknown or zero, there exist executions when the algorithm does not stop.
On the other hand, we show that 
there exist ADNs such that, if $\ell = 0$ or $\ell$ is unknown to the nodes, there is no Counting randomized algorithm such that, with constant probability, each node outputs the correct count when it stops.

Even knowing how many \ldr nodes are in the network, straightforward application of previous ideas for Counting is not clear. Indeed, each \ldr node may carry its own count, but how do we combine or compare final counts? Even passing messages among \ldr nodes is challenging, because \ldr nodes are also indistinguishable among them and communication links change arbitrarily. For instance, a \ldr node is not able to tell whether a received message is even its own, coming back after being previously sent for dissemination. 

{\bf Our second contribution} is a deterministic Counting protocol that computes exactly the number of network nodes. 
Our protocol uses no information about the network, except the number of \ldr nodes $\ell\geq 1$. 
After completing its execution, all nodes obtain the exact size of the network and stop.
Moreover, they stop all at the same time, allowing the algorithm to be concatenated with other computations.

This protocol resembles our \MC protocol presented in~\cite{KowalskiMicalp18}. So, we call it \textsc{Methodical multi-Counting} (\nameD). However, it is not a simple combination of multiple instances of \MC to handle multiple \ldr nodes. We overcome the challenge of how to combine the actions of multiple indistinguishable \ldr nodes by careful design of a set of alarms, so that \emph{all} \ldr nodes can simultaneously detect when a running estimate of the size is correct. 
Moreover, the asymptotic performance of \nameD is $O(n^{4+\epsilon}(\log^3 n)/\ell)$, for an arbitrarily small $\epsilon>0$. This is a speed-up by a factor arbitrarily close to $n\ell/\log n$ with respect to \MC. That is, even for $\ell=1$, \nameD is faster than the best previous work (cf. Table~\ref{table}). 

In face of the impossibility of deterministic Counting without a distinguished node~\cite{spirakis}, a natural question is what is possible introducing randomness. Our impossibility results show that if $\ell=0$ or $\ell$ is unknown it is not possible to solve Counting with a constant probability. Thus, we focus on a weaker version of the problem called \emph{Unconscious Counting}~\cite{conscious}. In Unconscious Counting nodes still compute the network size, but may not know whether the number they have at the moment is precise 
or will be corrected later (by additional communication received).

{\bf Our third contribution} is an Unconscious Counting protocol that computes exactly the number of nodes in an ADN where $\ell=0$ (no special nodes). That is, 
%we show that randomness breaks the impossibility result of~\cite{spirakis}, and 
for the first time it is possible to consider an ADN model where \emph{all} nodes are identical, indistinguishable, and run the same program.
Our protocol, called \textsc{\LLMC} (\nameR), is Monte Carlo. That is, there is a small probability at most $\zeta>0$ of obtaining the wrong size, and the time of $O(\eta^{4+\epsilon}\log^3 \eta)$, for $\eta=\max\{n,\lceil\lceil12/\epsilon\rceil\rceil\}$ and $\epsilon>0$, to obtain the correct network size holds with probability at least $1-\zeta$ (cf. Table~\ref{table}).

To the best of our knowledge, this is the first comprehensive study of Counting in ADNs where $\ell$ may be different than one. 

\parhead{Roadmap} 
The rest of the paper is organized as follows. We specify the model and notation details in Section~\ref{prelim}. 
%Then, we overview previous work in Section~\ref{relwork}. 
The directly related work overviewed in comparison with our results is summarized in Table~\ref{table}.
Further details on previous work 
%is left to the appendix for brevity.
are included in Section~\ref{relwork}.
We present impossibility results in Section~\ref{sec:impossibility}. 
Sections~\ref{algorithm} and~\ref{sec:randomized} include the details of \nameD and \nameR. 
References to algorithms lines are given as $\langle algorithm\#\rangle.\langle line\#\rangle$.
We conclude this study with some open problems in Section~\ref{sec:conclude}.

%%%%%%%%%%%%%%%%%%%%%%%%%%%%%%%%%%%%%%%%%%%%%%%%%%%%%%%%%%%%%%%%%%%%%%%%%
\section{Model, Problem, and Notation}
\label{prelim}

\subsubsection*{\ADNs.}
%\parhead{\ADNs}
The following model is customary in the \ADNs literature.

We consider a network composed by a set $V$ of $n>1$ network \emph{nodes} with processing and communication capabilities. 
Each pair of nodes that are able to communicate defines a communication \emph{link}, and the set of links is called the \emph{topology} of the network. The nodes in a communication link are called \emph{neighbors}.

Without loss of generality, we discretize time in \emph{communication rounds} or \emph{time slots} indistinctively (or \emph{rounds} or \emph{slots} for short).
In any given round, a node may send a \emph{message} through all its communication links, receive all messages from all sending neighbors, and carry out some (local) computations, in that order. 
%The time taken by the computations is assumed to be negligible. 
The time needed for computations is assumed negligible with respect to the time needed for communication, and the size of messages is not bounded.

The set of links among nodes may change from round to round\footnote{This dynamicity is the reason for the term "Dynamic" in \ADN.}.
%\sout{, and nodes have no way of knowing which were the neighbors they had before.}
These topology changes are arbitrary, limited only to maintain 
%the network connected in each round. 
the connectedness of the network in each round. 
That is, at any given round the topology is such that there is a \emph{path}, i.e., a sequence of links, between each pair of nodes, but the set of links may change arbitrarily from round to round. This adversarial model of dynamics is known as \emph{$1$-interval connectivity} in~\cite{KuhnLO2010}.

The event of sending a message to neighbors is called a \emph{broadcast} or \emph{transmission}. 
Nodes and links are reliable, in the sense that no 
link or node failures occur.
Hence, a broadcasted message is received by all current neighbors.
Moreover, links are \emph{symmetric}, that is, if node $a$ is able to send a message to node $b$, then $b$ is able to send a message to $a$ in the same round.

Nodes in the ADN model lack any form of identifiers\footnote{The lack of identifiers is the reason for the term "Anonymous" in the model's name ``\ADN''.}. That is, they are indistinguishable from any other node. 
However, it was shown in~\cite{spirakis} that the Counting Problem (defined below) cannot be solved 
deterministically
in Anonymous Networks without the availability of at least one distinguished node in the network.
Hence, all previous studies of Counting in ADNs included in the model the presence of such node called the \emph{leader}.
However, to the best of our knowledge, nothing is known about deterministic Counting in presence of \emph{multiple} distinguished nodes.
In this work, we generalize the ADN model assuming that the number of distinguished nodes is $\ell\geq 1$.
Aside from the distinction between distinguished and not-distinguished, all nodes are indistinguishable within their group. Therefore, here we call them \ldr nodes and \notldr nodes, respectively. All \ldr nodes execute exactly the same program, and all \notldr nodes execute exactly the same program. That is, there are no identifiers that allow to distinguish one \ldr (resp. \notldr) node from another \ldr (resp. \notldr) node\footnote{Nodes are labeled throughout the paper only for the sake of presentation and analysis.}.

%%%%%%%%%%%%%%%%%%%%%%%%%%%%%%%%%%%%%%%%%%%%%%%%%%%%%

\subsubsection*{The Counting Problem.}
%\parhead{The Counting Problem}

We define the Counting problem as follows.
%The definition of the problem is simple. An algorithm solves the Counting Problem if, after completing its execution, all network nodes have obtained the exact size of the network and stop.
\begin{definition}
An algorithm $\mathcal{A}$ solves the \emph{Counting Problem} if, after completing its execution, all network nodes running $\mathcal{A}$ have obtained the size of the network and stop (not necessarily concurrently).
\end{definition}
Following up on~\cite{conscious}, we also define a weaker version of Counting, as follows.
\begin{definition}
An algorithm $\mathcal{A}$ solves the \emph{Unconscious Counting Problem} if all network nodes running $\mathcal{A}$ eventually obtain the size of the network in finite time.
\end{definition}
Notice that the definition of Unconscious Counting does not require the nodes to know that they have obtained the correct count by the current round. Consequently, Unconscious Counting algorithms may not satisfy a termination condition. 

Notice also that we focus on \emph{exact} Counting. That is, all nodes obtain the exact size of the network $n$, rather than an approximation as other works.

We define now a class of Counting algorithms.
\begin{definition}
For a given algorithm $\mathcal{A}$, let an \emph{execution} of $\mathcal{A}$ be a sequence of steps of $\mathcal{A}$ followed in one of the possible sequence of choices made by $\mathcal{A}$. Let $\mathcal{X}(\mathcal{A})$ be the set of all possible executions of $\mathcal{A}$. 
An algorithm $\mathcal{A}$ is called \emph{eventually stopping} if, for all $X\in \mathcal{X}(\mathcal{A})$, $X$ has finite length.
\end{definition}
We will model worst case scenarios assuming the presence of an adversary that controls the topology of the network. In particular, we consider the following adversaries. 
\begin{definition}
Let the sequence $\mathcal{E}=\langle E_1,E_2,\dots\rangle$ be the sets of communication links of an ADN for time slots $t_1,t_2,\dots$.
Consider the execution of an algorithm $\mathcal{A}$.
We say that an adversary is \emph{oblivious} if it determines the sequence $\mathcal{E}$ completely before the execution of $\mathcal{A}$ begins. 
On the other hand, we say that an adversary is \emph{adaptive} if it determines the sequence $\mathcal{E}$ during the execution of $\mathcal{A}$, according to the actions of $\mathcal{A}$.
\end{definition}
Notice that the distinction between oblivious and adaptive makes sense only for randomized algorithms, given that for deterministic algorithms the actions of the algorithm are defined before the execution.

%%%%%%%%%%%%%%%%%%%%%%%%%%%%%%%%%%%%%%%%%%%%%%%%%%%%%

\subsubsection*{Notation.}
The following notation will be used. 
In this work, the maximum number of neighbors that any node may have at any given time, called the \emph{dynamic maximum degree}, 
%is denoted as $\Delta$ or $d_{\max}$ indistinctively. 
is denoted as $d_{\max}$.\footnote{Previous literature on ADNs denotes the dynamic maximum degree as $\Delta$, which is classically used in static networks to denote the maximum degree. We have chosen $d_{\max}$ to avoid confusion with static networks notation. This choice is additionally consistent with previous work on lazy random walks, where the degree is denoted by $d$.} 
The maximum length of a path between any pair of nodes at any given time is called the \emph{dynamic diameter} and it is denoted as $D$. 
The maximum length of an opportunistic path between any pair of nodes over many time slots (i.e., a path that may not exist in its entirety at any given time, but that it can be followed by a message along time as new links appear while links already traversed may disappear) is called the \emph{chronopath}~\cite{FCFMMZ:randomgeocast} and it is denoted as $\mathcal{D}$.

\begin{table}[htbp]
\centering
\caption{Statement of results and comparison with previous work.}
\label{table}
%\hskip-3.45cm
\resizebox{\textwidth}{!}{
%\begin{tabularx}{400pt}{|c|X|c|c|c|c|X|}
%\hspace{-1.25in}
\begin{tabular}{|c|c|c|c|c|c|c|c|}
\hline
\rule{0pt}{4ex}
\multirow{2}{*}{algorithm}&\multicolumn{3}{c|}{needs}&\multirow{2}{*}{computes}&\multirow{2}{*}{stops?}&\multicolumn{2}{c|}{complexity}\\
[.1in]
\cline{2-4}
\cline{7-8}
\rule{0pt}{4ex}
&\begin{tabular}{c}distinguished\\nodes\end{tabular}&\begin{tabular}{c}size\\ upper\\ bound\\ $N$\end{tabular}&\begin{tabular}{c}dynamic\\ maximum\\ degree u.b.\\$d_{\max}$\end{tabular}&&&time&space\\
[.1in]
\hline
\hline
\rule{0pt}{4ex}
\begin{tabular}{c}\emph{Degree}\\\emph{Counting}~\cite{spirakis}\end{tabular}&$1$ &&\checkmark&$O(d_{\max}^n)$&\checkmark&$O(n)$&\\
[.1in]
\hline
\rule{0pt}{4ex}
\emph{Conscious}~\cite{conscious}&$1$ &\checkmark&\checkmark&$n$&\checkmark&\begin{tabular}{c}$O(e^{N^2}N^3) \Rightarrow$\\ $O(e^{d_{\max}^{2n}}d_{\max}^{3n})$ using~\cite{spirakis}\end{tabular}&\\
[.1in]
\hline
\rule{0pt}{4ex}
\emph{Unconscious}~\cite{conscious}&$1$ &&&$n$&No&\begin{tabular}{c}No theoretical\\ bounds\end{tabular}&\\
[.1in]
\hline
\rule{0pt}{4ex}
$\mathcal{A}_{\mathcal{O}^P}$~\cite{oracle}&$1$ &&\begin{tabular}{c}%Degree\\ 
Oracle\\ for each\\ node\end{tabular}&$n$&Eventually&Unknown&\\
[.1in]
\hline
\rule{0pt}{4ex}
\textsf{EXT}~\cite{LunaB15}&$1$ &&&$n$&\checkmark&$O(n^{n+4})$&EXPSPACE\\
[.1in]
\hline
\rule{0pt}{4ex}
\begin{tabular}{c}\textsc{Incremental}\\\textsc{Counting}~\cite{opodisCounting}\end{tabular}&$1$ &&\checkmark&$n$&\checkmark&$O\left(n\left(2d_{\max}\right)^{n+1}\frac{\ln n}{\ln d_{\max}}\right)$&\\
[.1in]
\hline
\rule{0pt}{4ex}
\begin{tabular}{c}\textsc{Methodical}\\ \textsc{Counting}\\\cite{KowalskiMicalp18}\end{tabular}&$1$ &&&$n$&\checkmark&$O(n^5\ln^2 n)$&PSPACE\\
[.1in]
\hline
\hline
\rule{0pt}{4ex}
\begin{tabular}{c}\textsc{Methodical}\\ \textsc{multi-Counting}\\$[$This work$]$\end{tabular}&$\ell\geq 1$ &&&$n$&\checkmark&\begin{tabular}{c}$O((n^{4+\epsilon}/\ell)\log^3 n)$\\ for any $\epsilon>0$\end{tabular}&PSPACE\\
[.1in]
\hline
\rule{0pt}{4ex}
\begin{tabular}{c}\nameR\\$[$This work$]$\end{tabular}&$0$ &&&\begin{tabular}{c}$n$ \\ prob. $\ge 1-\zeta$\end{tabular}&No&\begin{tabular}{c}$O((n+1/\zeta)^{4+\epsilon}\log^3 (n+1/\zeta))$\\ for any $\epsilon>0$ and $\zeta>0$\end{tabular}&PSPACE\\
[.1in]
\hline
%\end{tabularx}
\end{tabular}
}
\end{table}

\section{Previous Work}
\label{relwork}
%In this section we overview previous work directly related to this paper. 
A comprehensive overview of work related to ADNs can be found in a survey by Casteigts et al.~\cite{arnaudSurvey} and references in the papers cited here.
%The directly related work overviewed in comparison with our results is summarized in Table~\ref{table}.

With respect to lower bounds, it was proved in~\cite{baldoni} that at least $\Omega(\log n)$ rounds are needed, even if $D$ is constant. Also, %a trivial observation is that 
$\Omega(\mathcal{D})$ is a lower bound since at least one node needs to hear about all other nodes to obtain the right count.
%, and the chronopath $\mathcal{D}$ is the largest number of hops that a message from some node needs to take to reach other node in the network, possibly along multiple time slots.

%Counting was already studied in~\cite{spirakis}, together with the problem of \emph{Naming}, 
Counting and  \emph{Naming} was already studied in~\cite{spirakis} 
for dynamic and static networks, showing 
%. It was shown in this work 
that it is impossible to solve Counting without the presence of a distinguished node, even if nodes do not move. 
The Counting protocol 
%presented for ADNs 
requires knowledge of an upper bound on the dynamic maximum degree of the network $d_{\max}$, and 
%the count obtained is 
obtains
only an upper bound, 
%on the network size, 
which may be as bad as exponential. 

%An exact count is obtained by the Conscious Counting algorithm presented in~\cite{conscious}. However, the computation relies on knowing initially an upper bound on the network size. The running time of this protocol is exponential only if the initial upper bound is tight. 
Conscious Counting~\cite{conscious} computes the exact count, but it needs to start from an upper bound on the size, and it takes exponential time only if such upper bound is tight up to constants. 
In the same work and follow-up papers~\cite{oracle,experimentalConscious}, %the authors presented protocols under 
more challenging scenarios where $d_{\max}$ is unknown are studied,
%. However, either the protocol does 
but protocols either do not terminate~\cite{conscious}, %and hence the running time cannot be bounded, 
or the protocol is terminated heuristically~\cite{experimentalConscious}. In experiments~\cite{experimentalConscious}, such heuristic was found to perform well on dense topologies, but for other topologies the error rate was high. 
%That is, the results only apply to dense ADNs. 
Another protocol in~\cite{oracle} is shown to terminate eventually, without running-time guarantees and under the assumption of having for each node an estimate of the number of neighbors in each round. In~\cite{spirakis} it was conjectured that some knowledge of the network  such as the latter would be necessary, but the conjecture was disproved later in~\cite{LunaB15}. On the other hand the protocol in~\cite{LunaB15} requires exponential space. 

%Recently, a protocol called Incremental Counting was presented in~\cite{opodisCounting}. This algorithm 
Incremental Counting, presented recently in~\cite{opodisCounting},
reduced exponentially the best-known running time guarantees.
% with respect to previous works developed under the same model. 
%Incremental Counting 
The protocol obtains the exact count, all nodes terminate simultaneously, the topology dynamics is only limited to $1$-interval connectivity, it only requires polynomial space, and it only requires knowledge of 
%the dynamic maximum degree 
$d_{\max}$. 
%The superpolynomial running time proved still does not provide enough guarantee for practical application, but reducing from doubly-exponential to exponential 
The running time is still exponential, but reducing from doubly-exponential
was an important step towards understanding the complexity of Counting. 

In a follow-up paper~\cite{netysCounting}, Incremental Counting was tested experimentally showing a promising polynomial behavior. The study was conducted on pessimistic inputs designed to slow the convergence, such as bounded-degree trees rooted at the leader uniformly chosen at random for each round, and a single path starting at the leader with non-leader nodes permuted uniformly at random for each round. The protocol was also tested on static versions of the inputs mentioned, classic random graphs, and networks where some disconnection is allowed. 
%The results exposed important observations. Indeed, 
The results showed that,
even for topologies that stretch the dynamic diameter, the running times obtained are below $d_{\max} n^3$. It was also observed that random graphs, as used in previous experimental studies~\cite{experimentalConscious}, reduce the convergence time, and therefore are not a good choice to indicate worst-case behavior. These experiments showed good behavior even for networks that sometimes are disconnected, indicating that more relaxed models of dynamics
%, such as ($\alpha,\beta$)-connectivity~\cite{FCFMMZ:randomgeocast,geocast}, are worth to study.
(e.g.~\cite{FCFMMZ:randomgeocast,geocast}) are worth studying.

All in all, the experiments in~\cite{netysCounting} showed that Incremental Counting behaves well in a variety of pessimistic inputs, but not having a proof of what a worst-case input looks like, 
%and being the experiments restricted 
and because the experiments were restricted
to a range of values of $n$ far from the expected massive size of an ADN, a theoretical proof of polynomial time remained an open problem even from a practical perspective.

A polynomial Counting algorithm was presented in a manuscript~\cite{BaldoniTR}, relying on the availability of an algorithm to compute %average with polynomial convergence time. 
the average of input values, one at each node.
Such average computation is modeled as a Markov chain with underlying doubly-stochastic matrix, which requires topology information within two hops (cf.~\cite{nedic2009}). In the pure model of ADN, such information is not available, and gathering it may not be possible due to possible topology changes from round to round.

Recently, we presented the first polynomial-time deterministic Counting algorithm for ADNs in~\cite{KowalskiMicalp18}, called~\MC. 
Unlike previous works, \MC does not require any knowledge of network characteristics, such as dynamic maximum degree or an upper bound on the size. That is, it works in the pure model of ADN. 
Like previous works, \MC requires the presence of a distinguished node.
%, which has been proven to be necessary in order to solve Counting~\cite{spirakis}.
In the present work, we generalize that assumption assuming the presence of $\ell\geq 1$ distinguished nodes.   
As in~\cite{KowalskiMicalp18}, we leverage previous work on lazy random walks to analyze \nameD, but the alarms to detect wrong computations have been completely re-designed to deal with multiple distinguished nodes. Moreover, with respect to \MC, \nameD achieves a $\Omega((\log^2 n)/(n\ell))$ speed-up (cf. Table~\ref{table}). That is, even for $\ell=1$, \nameD also provides a speed-up with respect to previous work. 

In the same paper~\cite{KowalskiMicalp18}, we also presented extensions of \MC to compute more complex functions, such as the sum of input values held by nodes and other algebraic and Boolean functions.

\miguel{
With respect to randomized Counting, a linear Counting algorithm  for dynamic networks was presented in~\cite{KuhnLO2010}. The algorithm requires unique identifiers (i.e., it is not applicable to ADNs), knowledge of an upper bound on the size of the network, and only guarantees an approximation to the network size. To the best of our knowledge, no randomized Counting algorithms for ADNs have been studied before. 
}

Other studies also dealing with the time complexity of information gathering exist~\cite{
chen2013role,
banerjee2014epidemic,
sanghavi2007gossiping,
boyd2006randomized,
mosk2008fast,
sarma2015distributed}, 
but include in their model additional assumptions, such as the network having the same topology frequently enough or node identifiers. 

\section{Impossibility of Counting}
\label{sec:impossibility}

In this section, we show impossibility results for Counting in \ADNs. 
After showing the existence of adversarial networks in a technical lemma, 
we establish that deterministic Counting is not possible in \ADNs without knowing the number of distinguished nodes. 
For randomized algorithms, we show that
%if $\ell=0$ or $\ell$ is unknown, 
%there is no algorithm that eventually stops and obtains the count, \textcolor{red}{[[MM: should we say ``in all networks"?]]}
%even with approximation smaller than a constant and even against an oblivious adversary. 
for any $\ell\geq 0$, there exists an ADN such that, if $\ell = 0$ or $\ell$ is unknown to the nodes, there is no randomized algorithm that with constant probability solves Counting.
These results complement the previously known impossibility for deterministic algorithms without distinguished nodes~\cite{spirakis}. 
Note that our impossibility results hold even for static anonymous networks.

\subsection{Deterministic Algorithms}

%Note that the results of this section hold even for static anonymous networks.
%The proof of the following lemma is left to the appendix for brevity.

\begin{lemma}
\label{lem:impossibility}
For every positive integer $\lambda$ there are two networks
of $n=\lambda+4$ and $n=2(\lambda+4)$ nodes, respectively, such that: 
%\begin{itemize}
%	\item[(i)]
	no deterministic algorithm could successfully accomplish Counting
	on both of them in finite time, even if some $\ell=\lambda$
		nodes are \ldr in the former and $\ell=2\lambda$ nodes
		are \ldr in the latter network.
%	\item[(ii)] 
%	\dk{for any randomized algorithm there is an execution in which no node outputs a correct count in a finite time,
%		if no node is initially \ldr in any of them or
%		even if some $\ell=\lambda$
%		nodes are \ldr in the former and $\ell=2\lambda$ nodes
%		are \ldr in the latter network.}
%\end{itemize}
\end{lemma}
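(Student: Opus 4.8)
The plan is to exploit the impossibility of distinguishing two networks that a deterministic algorithm would have to treat identically whenever it makes the same sequence of observations. The key idea is an indistinguishability (symmetry) argument: I will construct the larger network on $2(\lambda+4)$ nodes essentially as two disjoint ``copies'' of the smaller network on $\lambda+4$ nodes, glued together by the dynamic topology in a way that makes the \ldr nodes in the large network see exactly the same local views, round by round, as the \ldr nodes in the small network. Because all \notldr nodes run the same program and all \ldr nodes run the same program, and there are no identifiers, a node cannot tell which copy it is in; hence any deterministic execution on the small network induces a ``doubled'' execution on the large one in which every node is in the same state at every round as its counterpart.

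First I would fix the structure: take the small network $G$ with $n = \lambda + 4$ nodes, of which $\lambda$ are \ldr, and take the large network $H$ to consist of two copies $G^{(1)}, G^{(2)}$ of the node set, so $|H| = 2(\lambda+4)$ with $2\lambda$ \ldr nodes, which matches the statement. The adversary controlling $H$'s topology will, in each round $t$, choose the link set $E_t^{H}$ to be the ``union of two mirrored copies'' of whatever link set $E_t^{G}$ the adversary uses on $G$, plus possibly a matching of cross-copy edges chosen symmetrically so that connectivity of $H$ is maintained in each round (this is the place where $1$-interval connectivity must be respected — $G$ and its two copies might each be connected but $H$ as a disjoint union is not, so I need to add symmetric bridging edges without breaking the symmetry). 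The crucial invariant, proved by induction on the round number, is: for every node $v$ and its two images $v^{(1)}, v^{(2)}$ in $H$, after round $t$ the states of $v$, $v^{(1)}$, and $v^{(2)}$ are identical, and every $H$-node receives in round $t$ the same multiset of messages as its $G$-counterpart. The bridging edges must be arranged so that a node $v^{(1)}$ adjacent across copies to $w^{(2)}$ receives from $w^{(2)}$ exactly the message $w$ would have sent in $G$ — which by the induction hypothesis equals the message $w^{(1)}$ sends, so this is consistent with $v$'s view in $G$ only if, in $G$, $v$ was also adjacent to $w$. So the cross edges have to be chosen to ``shadow'' edges that are already present inside each copy; concretely, at each round pick one already-present intra-copy edge and add its cross-copy mirror, which changes nobody's received multiset (the message was already being received) while ensuring $H$ is connected.

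With the invariant established, the conclusion is immediate: suppose a deterministic algorithm $\mathcal A$ solves Counting in finite time on both networks. On $G$ every node eventually outputs $\lambda + 4$ and halts. By the invariant, on $H$ every node is in the identical state at the identical round, so every node on $H$ also outputs $\lambda + 4$ and halts — but $|H| = 2(\lambda+4) \neq \lambda+4$, so $\mathcal A$ is wrong on $H$. Hence no deterministic $\mathcal A$ can be correct on both, which is exactly the claim (the algorithm does not know $\ell$, or equivalently is not told whether it is the $\ell=\lambda$ or the $\ell=2\lambda$ instance).

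The main obstacle is the connectivity requirement interacting with the symmetry: I must show that for any $1$-interval-connected schedule $\langle E_t^G\rangle$ on $G$ there is a $1$-interval-connected schedule $\langle E_t^H\rangle$ on $H$ that preserves the doubling symmetry, i.e. that the ``add the mirror of one existing intra-copy edge'' trick (or some refinement of it) really does make the union of two copies connected in every round without altering any node's received messages. A clean way to handle this is to note that if $E_t^G$ spans $G$, then in each copy of $H$ the same edge set spans that copy, and adding a single mirrored edge between the copies yields a connected graph on all of $H$; since that mirrored edge duplicates a message already delivered, the received multisets are unchanged, so the induction goes through. I would also double-check the base case (all nodes start in the program-determined initial state, which depends only on whether a node is \ldr or \notldr, not on $n$ or $\ell$ — this is where the assumption that the algorithm gets no information about $\ell$ is used) and note, as the statement itself points out, that since the schedules can be taken static (e.g. if $G$ is static we just need one connected $E^H$) the impossibility already holds for static anonymous networks.
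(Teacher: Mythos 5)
Your overall strategy---indistinguishability between a small network and a ``doubled'' version of it, proved by a round-by-round induction on identical views---is the same idea as the paper's proof, which exhibits a concrete static pair $G_{\lambda,1}$ (4 \notldr nodes, $\lambda$ \ldr nodes) and $G_{\lambda,2}$ (8 \notldr nodes, $2\lambda$ \ldr nodes) in which every \ldr node has exactly four \notldr neighbors and every \notldr node has exactly $\lambda$ \ldr neighbors and one \notldr neighbor, in both graphs. However, your concrete mechanism for connecting the two copies has a genuine flaw. You add the cross-copy mirror of an already-present intra-copy edge and claim this ``changes nobody's received multiset (the message was already being received).'' It does change it: the endpoints of the added edge now receive one \emph{more} message than their counterparts in $G$ (a duplicate in content, but an extra element of the multiset). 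In this model nodes observe how many messages they receive in a round---the paper's own algorithms branch on $|N|$, and the Degree-Counting algorithm of Michail et al.\ counts using exactly this information---so a deterministic algorithm can detect the larger degree, the views diverge, and your invariant (hence the whole impossibility argument, which must defeat \emph{every} deterministic algorithm, including degree-sensitive ones) collapses at that step.

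The standard repair is to \emph{swap} rather than add: remove the mirrored pair of intra-copy edges joining $v^{(1)}w^{(1)}$ and $v^{(2)}w^{(2)}$ and insert the cross edges $v^{(1)}w^{(2)}$ and $v^{(2)}w^{(1)}$. By your induction hypothesis $w^{(1)}$ and $w^{(2)}$ send identical messages, so every node's degree and received multiset is exactly as in $G$, and connectivity of the doubled network is preserved provided the swapped edge is not a bridge (so you must choose $G$, which you are free to do, with a suitable cycle, or cross an entire edge class at once). This is precisely what the paper's construction achieves: $G_{\lambda,2}$ is obtained from two copies of $G_{\lambda,1}$ by systematically crossing the \ldr--\notldr edges (group~1 of \ldr nodes to the first elements of the \notldr pairs, group~2 to the second elements), which preserves every node's count of \ldr and \notldr neighbors while making the doubled graph connected and static. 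With that modification your argument goes through; as written, the ``add a mirrored edge'' step does not.
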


\begin{proof}
We prove the claim showing two different networks of different size such that 
any algorithm 
executed in these two networks cannot distinguish between 
these two  
executions. Hence, it cannot obtain the correct count in both.

Throughout the proof, solely for the purpose of definition of networks and analysis
of executions of the algorithms considered,
we distinguish between nodes; this obviously is not possible
for the algorithms due to the definition of the ADN communication model
and the deterministic nature of the algorithms, in the sense that when receiving messages and changing state a node does not know from which specific nodes they have arrived.

Consider a positive integer $\lambda$ and the two associated networks
$G_{\lambda,1}$ and $G_{\lambda,2}$ defined as follows, c.f., Figure~\ref{fig:}.
%(Below, solely for the purpose of definition of networks and analysis of executions of considered algorithms,
%we distinguish between nodes; this obviously may not be possible
%for considered algorithms themselves due to the definition of ADN communication model
%and deterministic nature, in the sense that when receiving messages and changing state a node does not know from which specific nodes they have arrived.)

\begin{figure}[ht]
  \centering
\begin{subfigure}{0.45\textwidth}
  \centering
\psfrag{l}{$\lambda$ \ldr}
\psfrag{n}{nodes}
\includegraphics[width=\textwidth]{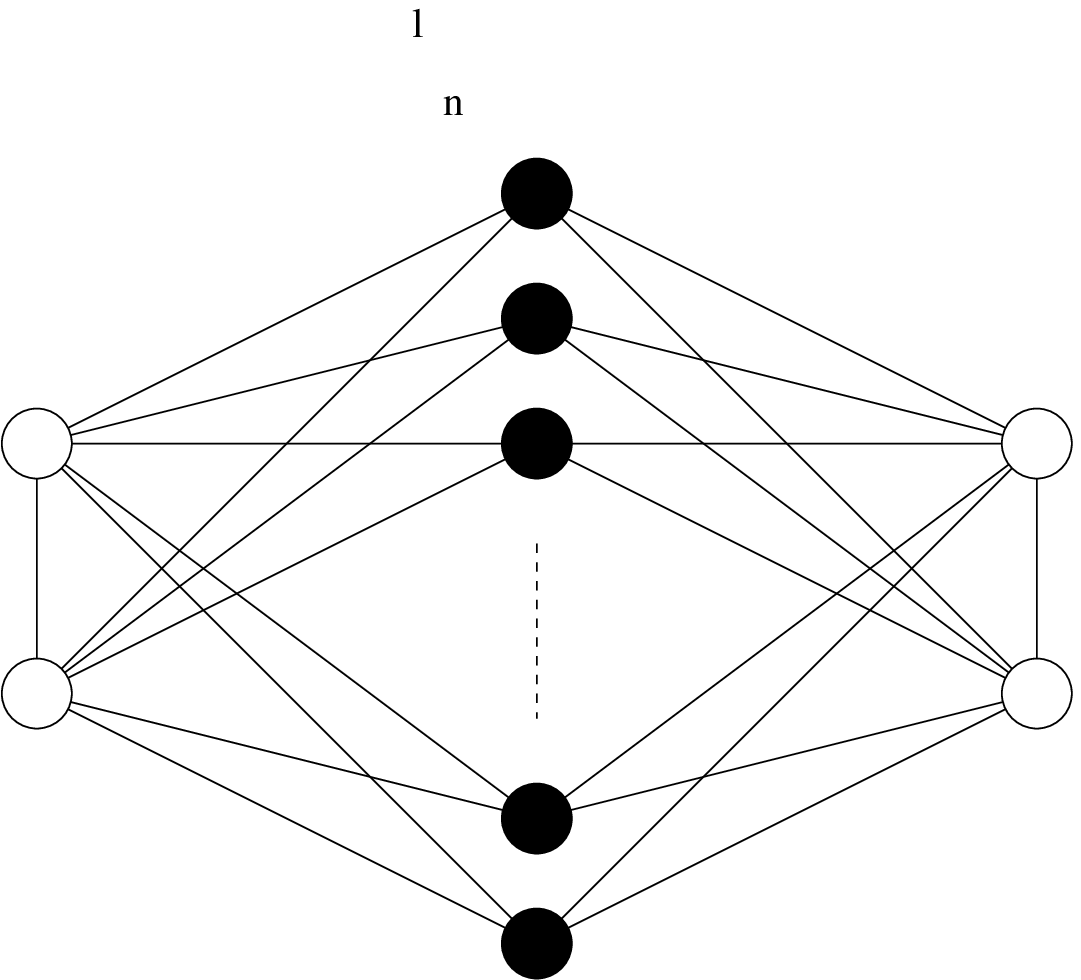}
  \caption{$G_{\lambda,1}$}
  \label{fig:Gell1}
\end{subfigure}
\hspace{.3in}
\begin{subfigure}{.45\textwidth}
  \centering
\psfrag{l}{$\lambda$ \ldr}
\psfrag{n}{nodes}
\psfrag{4}{$4$ connected}
\psfrag{p}{pairs of}
\psfrag{w}{\notldr nodes}
\includegraphics[width=\textwidth]{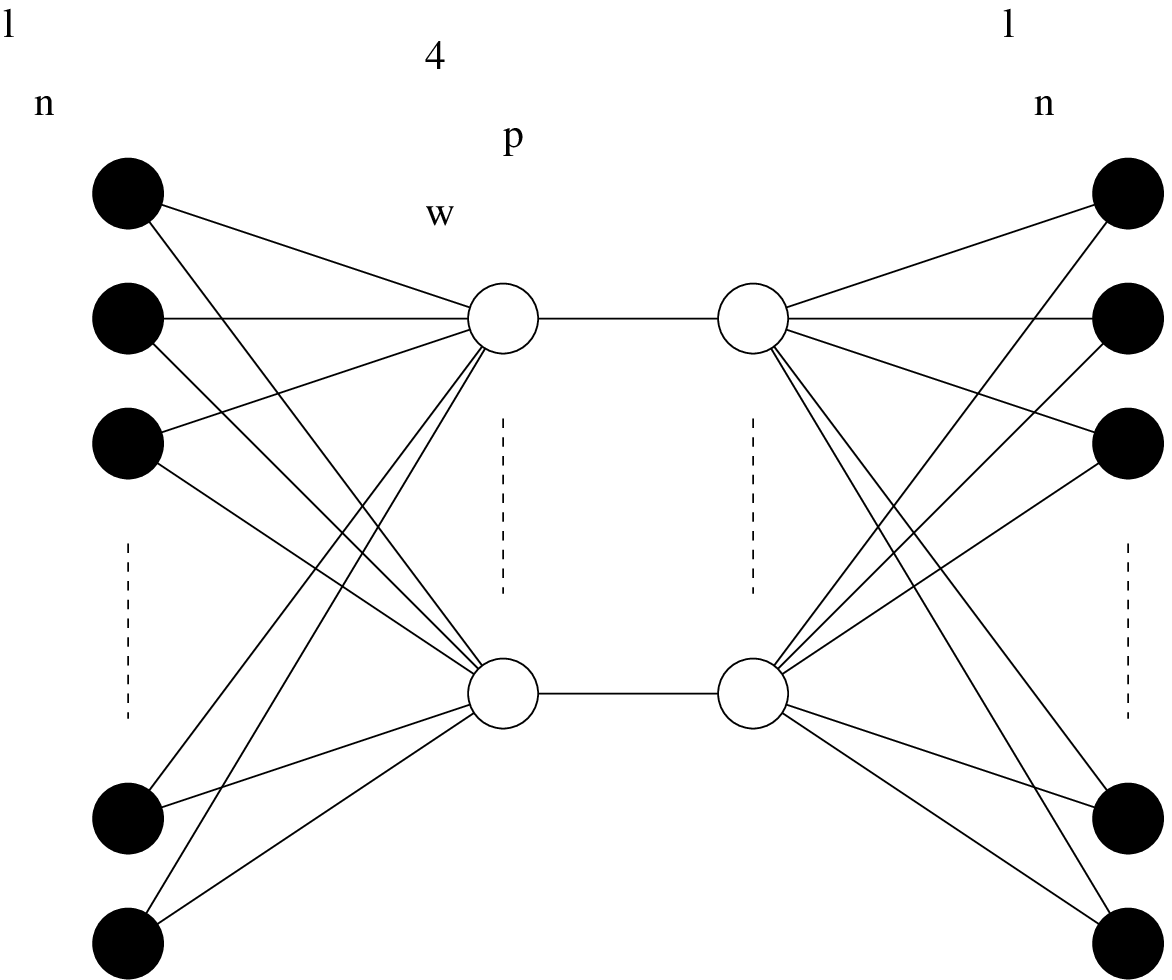}
  \caption{$G_{\lambda,2}$}
  \label{fig:Gell2}
\end{subfigure}
\caption{Illustration of Lemma~\ref{lem:impossibility}.}
\label{fig:}
\end{figure}

%$G_{\lambda,1}$ consists of $\lambda$ \ldr nodes and $4$ \notldr nodes,
%connected such that some two of the \notldr nodes and any of the \ldr
%nodes form a cycle, and the remaining two \notldr nodes also form a cycle
%with any \ldr node. It follows that the graph is connected, 
%each \ldr node has four neighbors (mainly, all \notldr nodes),
%and each \notldr node has $\lambda+1$ neighbors (specifically,
%$\lambda$ \ldr nodes and one \notldr node).
$G_{\lambda,1}$ consists of $\lambda$ \ldr nodes and $4$ \notldr nodes,
connected in such a way that some two of the \notldr nodes 
%and any of the \ldr nodes form a cycle, 
form a cycle of length $3$ with each of the \ldr nodes,
and the remaining two \notldr nodes also 
%form a cycle with any \ldr node. 
form a cycle of length $3$ with each of the \ldr nodes.
It follows that the graph is connected. 
Each \ldr node has $4$ neighbors (namely, all \notldr nodes),
and each \notldr node has $\lambda+1$ neighbors (specifically,
$\lambda$ \ldr nodes and one \notldr node).

%$G_{\lambda,2}$ consists of $2\lambda$ \ldr nodes and $8$ \notldr nodes.
%Distinguished nodes are grouped into two groups of size $\lambda$ each,
%while the \notldr nodes are grouped into four pairs.
%Each \ldr node in the first group is connected with all first elements
%of the pairs of \notldr nodes.
%Similarly, each \ldr node in the second group is connected with all second elements
%of the pairs of \notldr nodes.
%Additionally, for each pair, both nodes in the pair are also connected.
%It is easy to check that the graph is connected, 
%each \ldr node has four neighbors (mainly, either all \notldr nodes
%at first position in pairs or all \notldr nodes
%at second position in pairs, depending on the group it belongs to),
%and each \notldr node has $\lambda+1$ neighbors (specifically,
%$\lambda$ \ldr nodes, either from the first or the second group
%depending on whether the node is first or second in its pair, 
%and one \notldr node from the same pair).
$G_{\lambda,2}$ consists of $2\lambda$ \ldr nodes and $8$ \notldr nodes.
The \ldr nodes are grouped into two groups of size $\lambda$ each,
while the \notldr nodes are grouped into four pairs.
Each \ldr node in the first group is connected with all first elements
of the pairs of \notldr nodes.
Similarly, each \ldr node in the second group is connected with all second elements
of the pairs of \notldr nodes.
Additionally, for each pair of \notldr nodes, both nodes in the pair are also connected.
It is easy to check that the graph is connected, 
each \ldr node has four neighbors (namely, either all \notldr nodes
at first position in pairs or all \notldr nodes
at second position in pairs, depending on the group it belongs to),
and each \notldr node has $\lambda+1$ neighbors (specifically,
$\lambda$ \ldr nodes, either from the first or the second group
depending on whether the node is first or second in its pair, 
and one \notldr node from the same pair).

In order to prove 
%part (i) of 
the lemma, 
consider executions of a deterministic Counting algorithm on $G_{\lambda,1}$
and on $G_{\lambda,2}$. 
%We prove the following invariant on the round number $t$:
We prove that the following invariant holds for each communication round $t$:
\begin{itemize}
\item
All \ldr nodes in graphs $G_{\lambda,1}$ and $G_{\lambda,2}$ have the same history (of received messages and input value) in the beginning of round $t$;
\item
All \notldr nodes in graphs $G_{\lambda,1}$ and $G_{\lambda,2}$ have the same history (of received messages and input value) in the beginning of round $t$.
\end{itemize}
%\mm{MM:[[can we define ``history''? ]]} \darek{[[[ We could say history (of received messages and input value) ]]]}
The proof is by induction. In the beginning of round one
all \ldr nodes in both graphs know only that they
are \ldr, and similarly \notldr nodes
in both graphs know only that they are not \ldr. Hence
the invariant holds. 

Suppose the invariant holds for some $t\ge 1$, we show
that it is also true for $t+1$. 
Each \ldr node in both graphs is a neighbor
of four \notldr nodes.
By the invariant for $t$, they had the same history at the beginning
of round $t$ and, since they follow a deterministic algorithm,
they all either broadcast exactly the same message or
do not broadcast in round $t$.
Hence, at the end of round $t$ each \ldr node
gets the same messages (or no message, respectively), and since
by the invariant for $t$ they all had the same history at the
beginning of round $t$, their histories are also the same
at the end of round $t$, that is, also at the beginning
of round $t+1$. Hence, the first bullet of the invariant for $t+1$
holds.

Similarly, we argue about \notldr nodes.
Every \notldr node in both graphs is a neighbor
of one \notldr node and $\lambda$ \ldr nodes.
By the invariant for $t$, all these \ldr neighbors had the same history at the beginning
of round $t$, 
%and also all \notldr neighbors had. 
and the same applies to the \notldr neighbors.
Since they all follow a deterministic algorithm,
all \ldr neighbors either broadcast exactly the same message or do not broadcast in round $t$;
similar property holds for \notldr neighbors.
Hence, at the end of round $t$ each \notldr node
gets the same set of messages (or no message, if both \ldr and \notldr nodes do not transmit in round $t$), and since
by the invariant for $t$ they all had the same history at the
beginning of round $t$, their histories are also the same
at the end of round $t$, which means, also at the beginning
of round $t+1$. Hence the second bullet of the invariant for $t+1$
holds as well.

To complete the proof, %of part (i), 
note that if nodes in both networks
had accomplished Counting at some round $t$, 
\ldr nodes in $G_{\lambda,1}$ would return $\lambda+4$
while \ldr nodes in $G_{\lambda,2}$ would return $2(\lambda+4)$, all based on the same history, which is impossible. More formally, consider the first time $t$
in which some \ldr node in any of these two networks
returns the correct number of nodes. By the invariant, all \ldr nodes in both networks have the same history, 
and since they follow the same deterministic algorithm they all return the same value as the number of nodes. This however
is a contradiction, since \ldr nodes in the first network should return a different value. Hence, such
time $t$ does not exist and no deterministic algorithm
accomplishes Counting in a finite time.

%In order to prove part (ii) of the lemma, 
%observe that the invariant could also be preserved with a non-zero probability for any randomized algorithm. Indeed,
%in the beginning the invariant holds with probability $1$
%in both scenarios 
%\sout{(as no node is distinguished by the algorithm in the first scenario ($\ell=0$), and the corresponding nodes are initially
%\ldr in the second scenario ($\ell>0$)).}
%\mm{(i.e., $\ell=0$ and $\ell>0$).} 
%%
%\sout{
%By induction, if we extend the execution by fixing same random decisions
%by all nodes \ldr nodes in each of the two networks,
%which happens with non-zero probability due to 
%same state of nodes across \ldr and \notldr ones,
%we end up with the invariant being true after the next round
%with positive probability (as random choices are independent across rounds).
%}
%%
%\mm{
%Assume now that the invariant holds at the beginning of some $t\ge 1$, 
%we show that it is also true at the beginning of $t+1$ as follows. 
%By the invariant, we know that nodes have the same history as the other nodes in their
%group (either \ldr or \notldr) at the beginning of round $t$.  
%We extend the execution of the algorithm by fixing the same random decisions
%by all nodes in each of the two networks.
%Such event happens with non-zero probability because 
%all \ldr nodes have the same history and neighborhoods in both networks, and  
%all \notldr nodes have the same history and neighborhoods in both networks.  
%Therefore, the invariant is true at the beginning of round $t+1$
%with positive probability (as random choices are independent across rounds).
%}
\end{proof}

The following follows from the above lemma.

\begin{theorem}
If the number of \ldr nodes is not given as a parameter,
then 
there is no deterministic Counting algorithm where all nodes stop and return the correct count in all ADNs, 
not even with an approximation smaller than $\sqrt{2}$. 
\end{theorem}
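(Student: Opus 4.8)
The plan is to derive the theorem directly from Lemma~\ref{lem:impossibility}, using the two families of networks $G_{\lambda,1}$ and $G_{\lambda,2}$ as a single indistinguishable pair for each $\lambda$. First I would fix an arbitrary deterministic Counting algorithm $\mathcal{A}$ that does not receive $\ell$ as a parameter, and recall that the lemma already establishes that $\mathcal{A}$, run on $G_{\lambda,1}$ (with $n=\lambda+4$) and on $G_{\lambda,2}$ (with $n=2(\lambda+4)$), produces identical histories at every \ldr node and identical histories at every \notldr node in both networks, at every round. In particular, if $\mathcal{A}$ ever causes some node to stop and output a value, that node outputs the \emph{same} value in both executions.

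Next I would turn this into the quantitative (approximation) statement. Suppose, for contradiction, that $\mathcal{A}$ is eventually stopping and that in every ADN every node outputs some $\hat{n}$ with $n/\sqrt{2} < \hat{n} < \sqrt{2}\,n$ (i.e.\ a multiplicative approximation strictly better than $\sqrt{2}$); the exact-count claim is the special case where the inequalities are equalities. Running $\mathcal{A}$ on $G_{\lambda,1}$, every node must eventually stop and output some value $\hat{n}_1$ with $(\lambda+4)/\sqrt{2} < \hat{n}_1 < \sqrt{2}(\lambda+4)$. By the invariant of Lemma~\ref{lem:impossibility}, the same value $\hat{n}_1$ is output when $\mathcal{A}$ is run on $G_{\lambda,2}$, so we also need $(2(\lambda+4))/\sqrt{2} < \hat{n}_1 < \sqrt{2}\cdot 2(\lambda+4)$, i.e.\ $\sqrt{2}(\lambda+4) < \hat{n}_1$. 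Combining the two upper/lower bounds gives $\sqrt{2}(\lambda+4) < \hat{n}_1 < \sqrt{2}(\lambda+4)$, a contradiction. (One has to be a touch careful about which node stops first and whether \ldr and \notldr nodes could stop at different rounds, but since the histories of both node types coincide across the two networks, each node's output is determined identically in both executions regardless of timing, so the argument goes through verbatim.)

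There is essentially no hard step here; the theorem is a corollary, and the only thing to get right is the arithmetic of the approximation ratio. The mild subtlety worth spelling out is why ``$\ell$ unknown'' is the operative hypothesis: it is exactly what lets us feed the \emph{same} algorithm $\mathcal{A}$ the two networks with different values of $\ell$ ($\lambda$ vs.\ $2\lambda$) and invoke the indistinguishability invariant; if $\ell$ were supplied as input, the \ldr nodes' initial histories in the two networks would differ and the induction base case of Lemma~\ref{lem:impossibility} would fail. I would close by noting that this holds even for static networks, since $G_{\lambda,1}$ and $G_{\lambda,2}$ are fixed graphs and the adversary never needs to change any links.
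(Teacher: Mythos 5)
Your proposal is correct and follows essentially the same route as the paper: both derive the theorem from Lemma~\ref{lem:impossibility} via the indistinguishability of $G_{\lambda,1}$ and $G_{\lambda,2}$, observing that a single consistent output cannot be within a factor smaller than $\sqrt{2}$ of both $\lambda+4$ and $2(\lambda+4)$. Your added remarks on timing of stops and on why knowledge of $\ell$ would break the invariant's base case are consistent with the paper's argument and do not change the substance.
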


\begin{proof}
The impossibility of obtaining an exact count in all networks follows from Lemma~\ref{lem:impossibility}.
Regarding the approximation, since the algorithm could not distinguish between the two networks in Lemma~\ref{lem:impossibility} of sizes $\lambda+4$ and $2(\lambda+4)$ respectively, for any $\lambda>0$, even if the algorithm stopped and returned a consistent value (across all nodes) it would be within at least 
$\sqrt{2}$ factor from some of these two sizes $\lambda+4$ and $2(\lambda+4)$;
since any of these two is feasible in the execution, the approximation is at least $\sqrt{2}$,
which occurs if the algorithm outputs a count $x$ such that $x/(\lambda+4) = 2(\lambda+4)/x$.
\end{proof}

Remark: the inapproximability result could be extended from $\sqrt{2}$ to {\em any} constant by considering networks with $c\cdot (\lambda+4)$ nodes, for an arbitrary constant $c$ instead of $c=2$ used in the above proofs for the ease of arguments.

%%%%%%%%%%%%%%%%%%%%%%%%%%%%%%%%%%%%%%%%%%%%%%%%%%%%%%%

\subsection{Randomized Algorithms}

\begin{theorem}
\label{thm:randimposs}
For any constant $0<c<1$ and any $\ell\geq 0$, 
there exists an \ADN such that, 
if $\ell=0$ or $\ell$ is unknown to the nodes, 
there is no Counting randomized algorithm such that, with probability $c$,
each node outputs the correct count when it stops. 
\end{theorem}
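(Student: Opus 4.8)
The plan is to adapt the indistinguishability construction from Lemma~\ref{lem:impossibility} to the randomized setting, using the same two networks $G_{\lambda,1}$ and $G_{\lambda,2}$ (now specialized to $\ell=0$, i.e., all \notldr nodes, or with $\ell$ treated as unknown so the nodes cannot use it to tell the graphs apart). First I would fix $\lambda$ large enough and consider any randomized algorithm $\cA$ that claims to solve Counting with probability $c$. Since there are no identifiers and (in the $\ell=0$ case) every node runs the same program, the only thing a node's behavior depends on is its random bits and its received-message history. I would couple the executions of $\cA$ on $G_{\lambda,1}$ and on $G_{\lambda,2}$: run the same randomized program on each node, and have the oblivious adversary schedule links so that in every round the multiset of messages delivered to a node of a given ``type'' is identical across the two networks. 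The point is that in both graphs every node sees the same local view, so the distribution over (history, random-bit) traces of any individual node is the same in $G_{\lambda,1}$ as in $G_{\lambda,2}$.

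The key step is to make this coupling precise despite randomness. In Lemma~\ref{lem:impossibility} determinism guaranteed that all \ldr (resp. \notldr) nodes broadcast identical messages; with randomness this is no longer automatic, because distinct nodes flip distinct coins. The fix is to couple the coin flips: pair up nodes of $G_{\lambda,2}$ with nodes of $G_{\lambda,1}$ (each node of $G_{\lambda,1}$ corresponds to two ``twin'' nodes of $G_{\lambda,2}$, one in each symmetric half), and in the coupled probability space force each twin in $G_{\lambda,2}$ to use a copy of the random string used by the corresponding node in $G_{\lambda,1}$. Then I would show by induction on $t$ that, in the coupled space, every node of $G_{\lambda,1}$ and both of its twins in $G_{\lambda,2}$ have identical state at the start of round $t$: the induction step is exactly the message-counting argument of Lemma~\ref{lem:impossibility} (each node of $G_{\lambda,1}$ has neighbor-multiset of states matching that of each twin in $G_{\lambda,2}$, because the adversary duplicated the topology in the obvious symmetric way), with ``same deterministic message'' replaced by ``same message, since same state and same coins.'' Hence for every node $v$ of $G_{\lambda,1}$, the pair (output value, round of stopping) of $v$ has exactly the same distribution as that of each twin of $v$ in $G_{\lambda,2}$.

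Now I would finish by a simple union/averaging argument. Under $\cA$ run on $G_{\lambda,1}$, the event ``$v$ stops and outputs the correct value $\lambda+4$'' has probability $\geq c$ for each $v$. By the coupling, the twin $v'$ in $G_{\lambda,2}$ stops and outputs $\lambda+4$ with the same probability $\geq c$; but correctness of $\cA$ on $G_{\lambda,2}$ would require $v'$ to output $2(\lambda+4)\neq\lambda+4$ with probability $\geq c$. Since the two output events are disjoint for $\lambda+4\neq 2(\lambda+4)$, both cannot simultaneously have probability $\geq c$ whenever $c>1/2$; and more carefully, in the coupled space the single random variable ``output of $v'$'' would have to equal $\lambda+4$ with probability $\geq c$ and $2(\lambda+4)$ with probability $\geq c$, forcing $2c\leq 1$. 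This already gives the claim for all $c>1/2$. To reach every constant $0<c<1$ I would, exactly as in the Remark after the previous theorem, replace the factor $2$ by using $k$ disjoint symmetric copies of the structure so that there are $k$ mutually indistinguishable networks of pairwise distinct sizes; the same coupling makes any fixed node's output distribution identical across all $k$ networks, and correctness would force that single distribution to put mass $\geq c$ on each of $k$ distinct values, which is impossible once $kc>1$, i.e., for $k>1/c$.

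The main obstacle I anticipate is setting up the coupling cleanly enough that the inductive invariant is literally an almost-sure equality of states (not merely equality in distribution), since that is what lets me transfer the per-node correctness probability from one network to the other; once the coupling and the twin-correspondence are stated correctly, the induction is the same bookkeeping as in Lemma~\ref{lem:impossibility} and the concluding counting argument is immediate. A secondary point worth stating explicitly is that this works against an oblivious adversary (the link schedule is the fixed symmetric duplication and does not depend on the coins), so the impossibility is as strong as possible; and, as in Lemma~\ref{lem:impossibility}, the networks can be taken static, so the result is not an artifact of dynamicity.
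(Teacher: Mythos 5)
There is a genuine gap, and it is at the heart of your plan: the coupling does not give you what you need. When you force the two twins in $G_{\lambda,2}$ to reuse the random string of their counterpart in $G_{\lambda,1}$, the resulting process on $G_{\lambda,2}$ is no longer a faithful execution of $\cA$ on $G_{\lambda,2}$ --- in a genuine execution the $2(\lambda+4)$ nodes flip \emph{independent} coins, whereas your coupled space correlates twins. So the invariant you prove (``$v$ and its twins have identical state'') holds only in the coupled measure, and the conclusion ``$v'$ outputs $\lambda+4$ with probability $\geq c$'' says nothing about the probability under the algorithm's own randomness on $G_{\lambda,2}$, which is the only thing the correctness hypothesis constrains. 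This is not a presentational issue: the statement you would need --- that a node's view has the same distribution in $G_{\lambda,1}$ and $G_{\lambda,2}$ under independent coins --- is simply false. Randomness breaks exactly the symmetry that the deterministic Lemma~\ref{lem:impossibility} exploits: nodes can draw long random strings as identifiers, which are all distinct with probability arbitrarily close to $1$ on any \emph{fixed} pair of small networks, and with distinct identifiers Counting in $1$-interval connected networks is solvable (causal-influence arguments in the style of~\cite{KuhnLO2010}). Hence no fixed pair such as $G_{\lambda,1},G_{\lambda,2}$ (nor your $k$-fold variant) can witness Theorem~\ref{thm:randimposs}; the adversarial network must be chosen \emph{after} the algorithm.

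That is precisely what the paper's proof does, and it is a different kind of argument from the one you propose. Fix $\cA$ and let $T(n,\ell)$ be a time by which it stops correctly on a single-gadget cycle $G_{1,n,\ell}$ with probability $\geq c$; some specific winning configuration $\Gamma$ then occurs with probability at least $c/2^{nT(n,\ell)}$. The hard instance is a huge cycle $G_{x,n,\ell}$ of $x$ gadgets containing many disjoint ``witness'' regions (a core gadget padded by buffers longer than $T(n,\ell)$), whose executions during the first $T(n,\ell)$ rounds are independent. Each witness reproduces $\Gamma$ on its core with the small but positive probability above (here the buffers, not a coupling, guarantee that the core's view is consistent with the small cycle), so for $x$ large enough, with probability $>1-c$ some core stops and outputs $n$, which is wrong in $G_{x,n,\ell}$. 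If you want to salvage your write-up, you should abandon the two-network indistinguishability/coupling framing and adopt this ``probabilistic existence of a fooled local region in one algorithm-dependent large network'' structure.
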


\begin{proof}
For the sake of contradiction, assume such an algorithm exists, call it $\cA$.

For an arbitrary number of \ldr nodes $\ell\geq 0$, consider the following network that we call a \emph{\gadget}, depicted in Figure~\ref{fig:gadget}.
If $\ell=0$, a \gadget is a simple path of $n$ \notldr nodes connected by $n-1$ links, for arbitrary $n>1$.
If on the other hand it is $\ell>0$, a \gadget is also a path of \notldr nodes, where each \notldr node is linked to a separate \ldr node (a caterpillar). That is, the number of \notldr nodes is also $\ell$ and the total number of nodes in the \gadget is $n=2\ell$. 

\begin{figure}[ht]
  \centering
\begin{subfigure}{0.45\textwidth}
  \centering
\psfrag{n}{$n$ nodes}
\includegraphics[width=\textwidth]{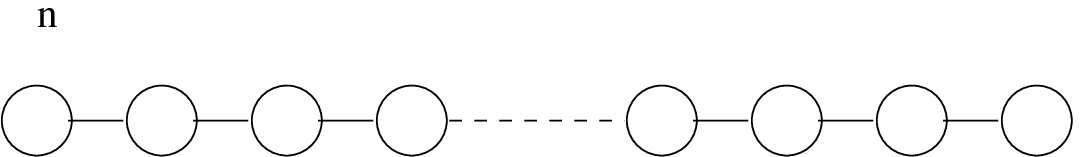}
  \caption{$\ell=0$}
\end{subfigure}
\hspace{.2in}
\begin{subfigure}{.45\textwidth}
  \centering
\psfrag{l}{$\ell$ nodes}
\includegraphics[width=\textwidth]{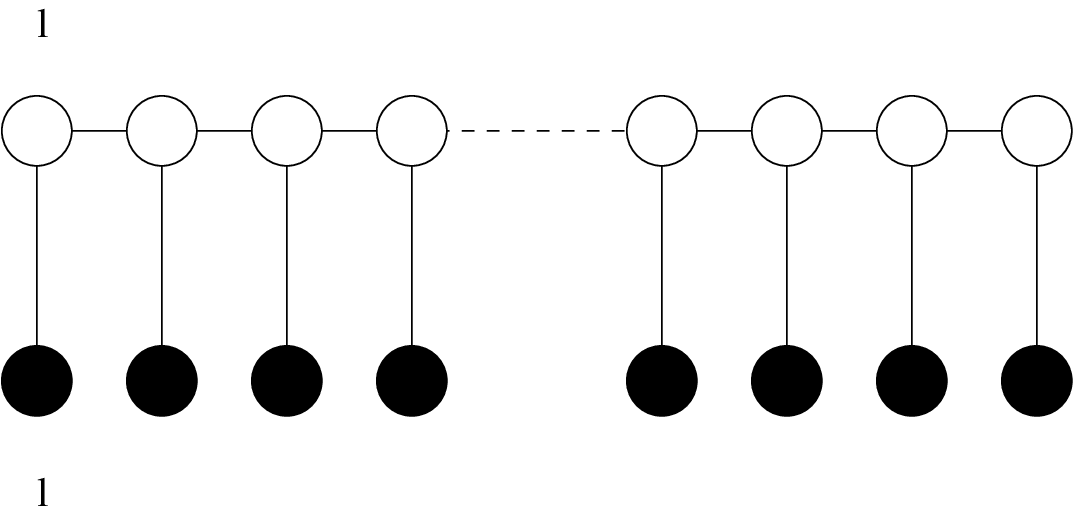}
  \caption{$\ell>0$}
\end{subfigure}
\caption{Illustration of a \gadget.}
\label{fig:gadget}
\end{figure}

Let $G_{1,n,\ell}$ be a network formed by one \gadget with an additional edge connecting the (\notldr) end nodes of the gadget as a cycle (see Figure~\ref{fig:1cycle} for $\ell>0$).
It follows that every node has the same number of neighbors of each color.  
With probability at least~$c$, all nodes running Algorithm $\cA$ on $G_{1,n,\ell}$ must stop and output the correct count by some time $T(n,\ell)$.

\begin{figure}[ht]
  \centering
\psfrag{l}{$\ell$ nodes}
\includegraphics[width=0.45\textwidth]{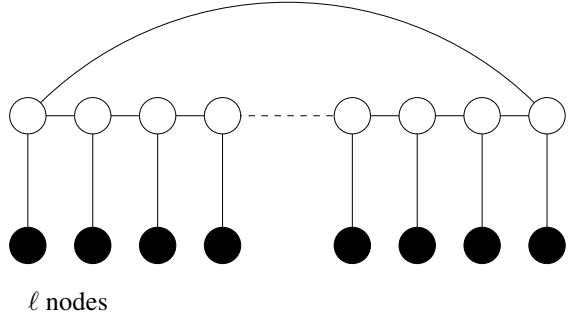}
  \caption{$G_{1,n,\ell}$ for $\ell>0$.}
  \label{fig:1cycle}
\end{figure}

Without loss of generality, assume that in $\cA$ each node draws one random bit per round of communication. 
(If more random bits per round are used, the same argument can be extended to more outcomes.) 

Consider an execution of $\cA$ on $G_{1,n,\ell}$ in the first $T(n,\ell)$ rounds.
Starting from the initial state where nodes have no information besides their own color (recall that nodes do not have identifiers), in each round a node makes decisions, based on the random bits drawn and the received states of its neighbors to move to another state. With respect to the random bits drawn, by time $t$ a node may be in one of $2^t$ states. 
We call the states of all network nodes at a given time a \emph{configuration} of states. 
A configuration in $G_{1,n,\ell}$ where $\cA$ stops at all nodes with an exact count of $n$ is called a \emph{winning configuration}. 
By definition of $\cA$, the probability of ending at a winning configuration is at least $c$, and there are $2^{T(n,\ell)n}$ possible configurations. Hence, there must exist some winning configuration $\Gamma$ that occurs in $\cA$ with probability at least $c/2^{nT(n)}$.
Denote by $\Gamma_{\to t}$ the part of configuration $\Gamma$ by round $t$.

Consider an execution of $\cA$ on a network $G_{x,n,\ell}$, where $x$ will be defined later, formed by $x$ \gadgets connected in sequence by the (\notldr) end points so that \notldr nodes form a cycle (see Figure~\ref{fig:xcycle}).

\begin{figure}[ht]
  \centering
\psfrag{g}{gadget}
\psfrag{x}{$x$ gadgets}
\includegraphics[width=\textwidth]{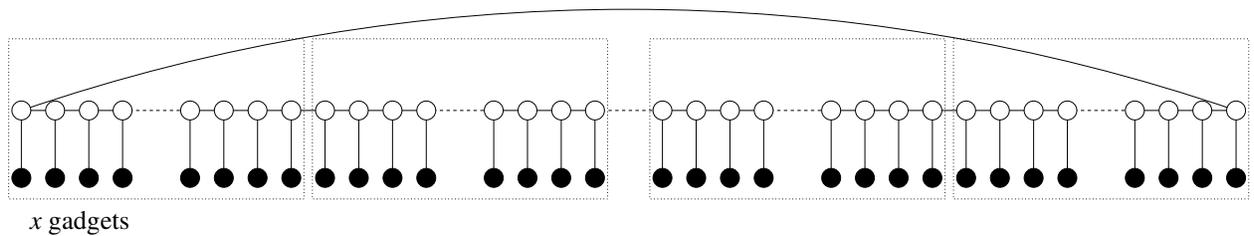}
  \caption{$G_{x,n,\ell}$ for $\ell>0$.}
  \label{fig:xcycle}
\end{figure}

Let a path of $2\lceil T(n,\ell)/(n-\ell)\rceil + 1$ \gadgets in $G_{x,n,\ell}$ be called a \emph{witness}, 
where the \gadget in the center of a witness is called the \emph{core},
and the $\lceil T(n,\ell)/(n-\ell)\rceil$ \gadgets in each side of the core are called a \emph{buffer}
(see Figure~\ref{fig:witness}).

\begin{figure}[ht]
  \centering
\psfrag{T}{$\lceil T(n,\ell)/(n-\ell)\rceil$ \gadgets}
\psfrag{1}{1 \gadget}
\psfrag{c}{core}
\psfrag{b}{buffer}
\includegraphics[width=\textwidth]{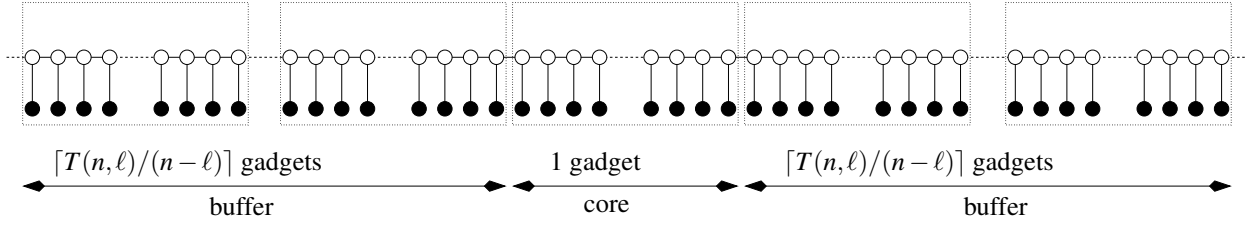}
  \caption{A witness in $G_{x,n,\ell}$ for $\ell>0$.}
  \label{fig:witness}
\end{figure}

We show now that, for $x$ large enough, after executing $\cA$ on $G_{x,n,\ell}$ for $T(n,\ell)$ rounds, the core of some witness has the configuration $\Gamma$, hence nodes in that core stop with an incorrect count of $n$ with probability larger than $1-c$, which proves the theorem.

Let $x$ be such that witnesses are separated by $2\lceil T(n,\ell)/(n-\ell)\rceil$ \gadgets. That is, they are disjoint and between any pair of consecutive witnesses there are $2\lceil T(n,\ell)/(n-\ell)\rceil$ \gadgets that do not belong to any considered witness.
Therefore, during the first $T(n,\ell)$ rounds of the execution of $\cA$ on $G_{x,n,\ell}$, the configurations on witnesses are independent.

Consider a single witness in $G_{x,n,\ell}$. 
Let any connected pair of \notldr and \ldr nodes be called a \emph{pair}.
Define a $t$-semi-core of the witness, for $0\leq t\leq \lceil T(n,\ell)/(n-\ell)\rceil (n-\ell)$, as a set containing the pairs in the core and all the pairs in the buffer, such that their \notldr nodes are located at distance at most $T(n,\ell)-t$ from the core. In particular, the $0$-semi-core is the whole witness, and the $\lceil T(n,\ell)/(n-\ell)\rceil (n-\ell)$-semi-core is the core (see Figure~\ref{fig:semicore}).

\begin{figure}[ht]
  \centering
\psfrag{c}{core}
\psfrag{b}{buffer}
\psfrag{Tcore}{$\mathcal{T}$-semi-core}
\psfrag{tcore}{$t$-semi-core}
\psfrag{t}{$t$ pairs}
\psfrag{0core}{$0$-semi-core}
\includegraphics[width=\textwidth]{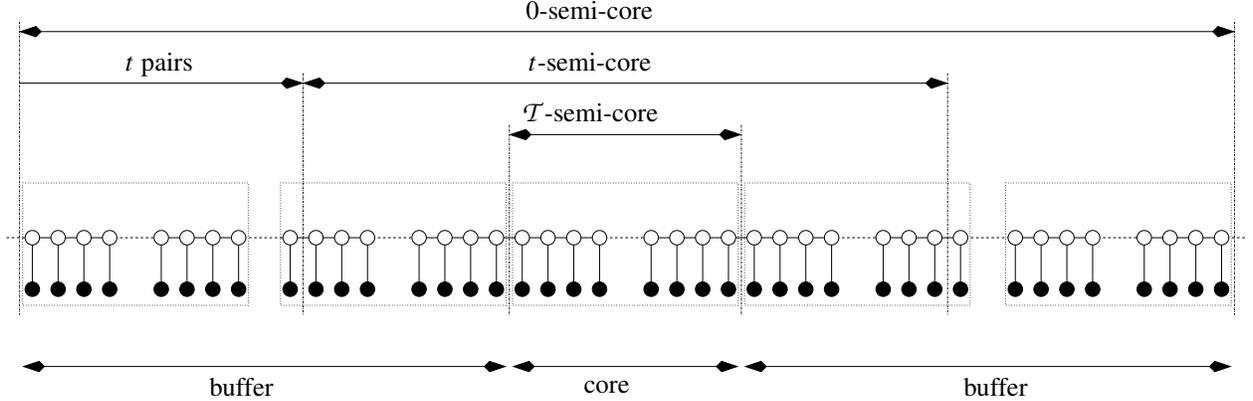}
  \caption{Semi-cores in a witness in $G_{x,n,\ell}$, for $\ell>0$. $t$ is such that $0\leq t\leq \mathcal{T}$ and $\mathcal{T}=\lceil T(n,\ell)/(n-\ell)\rceil (n-\ell)$.}
\label{fig:semicore}
\end{figure}

Using the above notation, we prove the following invariant (illustrated in Figure~\ref{fig:invariant}).

\begin{itemize}
\item[]
\emph{For any $0\leq t\leq \lceil T(n,\ell)/(n-\ell)\rceil (n-\ell)$, 
with probability at least $c/2^{nt}$, any pair $(v,v')$ in the $t$-semi-core located at distance $x\leq \lceil T(n,\ell)/(n-\ell)\rceil (n-\ell)-t$ from the core, has the same configuration in $\Gamma_{\to t}$ as the pair $(w,w')$ in the core at distance $x\bmod (n-\ell)$ from the other buffer.} 
\end{itemize}

\begin{figure}[ht]
  \centering
\psfrag{c}{core}
\psfrag{b}{buffer}
\psfrag{t}{$t$}
\psfrag{v}{$v$}
\psfrag{v'}{$v'$}
\psfrag{w}{$w$}
\psfrag{w'}{$w'$}
\psfrag{x}{$x$}
\psfrag{xmod}{$x \mod (n-\ell)$}
\includegraphics[width=\textwidth]{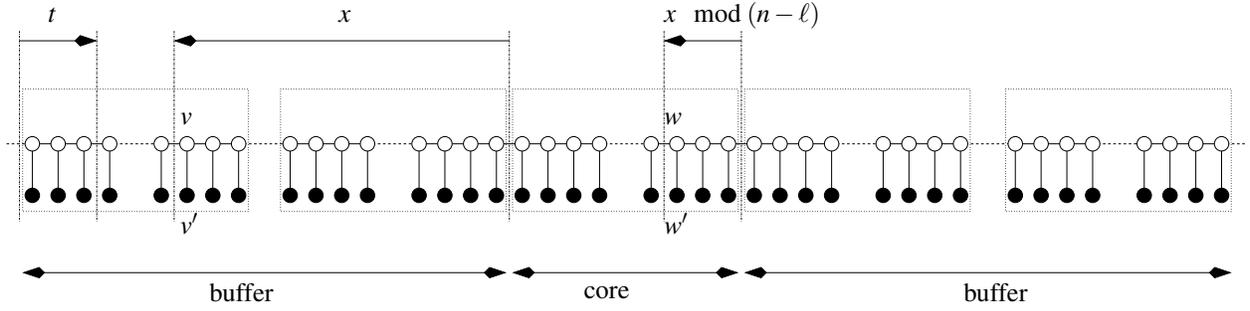}
  \caption{Invariant in a witness in $G_{x,n,\ell}$, for $\ell>0$. $0\leq t\leq \mathcal{T}$, $0\leq x\leq \mathcal{T}-t$, and $\mathcal{T}=\lceil T(n,\ell)/(n-\ell)\rceil (n-\ell)$.}
\label{fig:invariant}
\end{figure}

We prove the invariant by induction as follows.
When the execution of $\cA$ starts, all pairs in the $0$-semi-core are in the same state. 
Then, assuming inductively that the invariant holds for some round $0\leq t< \lceil T(n,\ell)/(n-\ell)\rceil (n-\ell)$,
we can extend the configuration of the $t$-semi-core at all but the end pairs of this semi-core
to satisfy the invariant for $t+1$. This is because all these pairs receive messages from other nodes satisfying the invariant
for $t$. Thus, cyclically modulo $(n-\ell)$ from a core boundary, nodes mimick 
the behavior of $\cA$ on $G_{1,n,\ell}$ at round $t+1$ of the protocol, which leads to the partial configuration $\Gamma_{\to t+1}$ on $G_{1,n,\ell}$, and thus on the corresponding nodes of the $(t+1)$-semi-core, cyclically modulo $(n-\ell)$ from a core boundary.

It follows from the invariant for $T(n,\ell)$ that the $n$ nodes in the core end up in configuration $\Gamma$.
Since $\Gamma$ is a winning configuration on $n$ nodes, all nodes in the core stop and output a count of $n$, 
which violates the correctness during the considered execution of $\cA$ on $G_{x,n,\ell}$.

It remains to prove that the union of the above events, over all the considered $x/(4\lceil T(n,\ell)/(n-\ell)\rceil+1)$ witnesses, holds with probability larger than $1-c$. 
For $y=x/(4\lceil T(n,\ell)/(n-\ell)\rceil+1)$, we show that the complementary event holds with probability smaller than $c$, that is,
$\left(1-\left(c/2^{nT(n,\ell)}\right)^2\right)^y < c$. 
This is indeed implied by
$\exp\left(-c^2y/2^{2nT(n,\ell)}\right) < c$, which holds for
$y > \ln (1/c) 2^{2nT(n,\ell)} / c^2$.
That is, for $x=\left(1+\ln (1/c) 2^{2nT(n,\ell)} / c^2\right) (4\lceil T(n,\ell)/(n-\ell)\rceil+1)$, with probability larger than $1-c$, there exists at least one witness in $G_{x,n,\ell}$ with a winning configuration $\Gamma$ after running $\cA$ for $T(n,\ell)$ rounds. 
Hence, the core nodes stop and output an incorrect count, and the claim follows.
%
%It remains to prove that the union of the above events, over all the considered $N/(4T(n)+n)$ witnesses,
%holds with probability larger than $1-c$. 
%For $N = x (4T(n)+n)$, 
%we need to show that the complementary event holds with probability smaller than $c$, that is,
%$\left(1-\left(c/2^{nT(n)}\right)^2\right)^x < c$. This is indeed implied by
%$\exp\left(-c^2x/2^{2nT(n)}\right) < c$, which holds for
%$x > \ln (1/c) 2^{2nT(n)} / c^2$.
%That is, for $N=\left(1+\ln (1/c) 2^{2nT(n)} / c^2\right) (4T(n)+n)$, with probability larger than $1-c$, there exists at least one witness in $C_N$ with a winning configuration $\Gamma$ after running $\cA$ for $T(n)$ rounds. Hence, the core nodes stop and output an incorrect count, and the claim follows.
%
\end{proof}

Since there is no deterministic algorithm that solves Counting if the number of \ldr nodes is unknown, and Counting cannot be solved even with a constant probability by randomized algorithms if the number of \ldr nodes is unknown or there is no \ldr node, we pursue two directions.
One is to design a polynomial-time deterministic algorithm for exact Counting if the exact number of \ldr nodes is a priori known.
The other, for randomized algorithms, is to focus on Unconscious Counting~\cite{conscious} as defined in Section~\ref{prelim}. 

%\dk{Since there is no eventually stopping PTAS algorithm for Counting
%if the exact number of \ldr nodes is unknown or (in case of randomized algorithms) there is no \ldr node,
%we pursue two directions. 
%One is to design a polynomial-time deterministic algorithm for exact Counting if the exact number of \ldr nodes is a priori known.
%The other direction is to design a randomized algorithm 
%computing the exact number of nodes with an arbitrary probability $1-\epsilon$, for any $\epsilon \in (0,1)$, even in scenario when no node is \ldr prior the computation. The latter algorithm is polynomial in the sense of expected number of rounds and also holds with high probability (i.e., probability polynomially close to $1$ in terms of $n$).} 

\section{Deterministic Counting}
\label{algorithm}

%\textcolor{cyan}{[[MM: should we make p and r functions of k everywhere??]]}
%
%\dk{[[[I would put a note that when it is clear from the contaext, we use simplified notation for these variables without emphasizing that they are functions of $k$]]]}
%
%\textcolor{cyan}{[[MM: should we put a subindex with the node label everywhere??]]}
%
%\dk{[[[Do you mean in the figure? I would keep it as it is,
%	perhaps in the name of the procedure i would somehow 
%	remind that this is for non-leader and for leader]]]}
%
%

In this section we present and analyze our deterministic Counting algorithm \nameD. 

%\mm{
The underlying idea of the algorithm is to search for the number of nodes testing different estimates until the correct value is found. Starting with an initial value, the estimate is doubled after each failed attempt. If at some point the estimate goes above $n$, the correct estimate is found by binary search in the 
%last 
range between the previous and the current estimate. To organize this exploration, we structure the execution of the algorithm in \emph{epochs}. In each epoch, a new estimate is tested. Once the correct estimate is found, the execution stops.
%}

%\mm{
In each epoch, to detect whether the estimate is low, correct, or high, we use some alarms, most of them based on the value of some \emph{potentials} held at nodes. 
To implement these alarms we use a gossip-based approach to collect at the \ldr nodes the potentials, which are initially held only by the \notldr nodes.
The way that potentials change, and their convergence values, depend on the correctness of the estimate.  
%}

%\mm{
For the above potential-collection process we could have the \ldr nodes collecting continuously until done, but to facilitate our analysis techniques we need all nodes participating in the gossiping process. Thus, rather than collecting all the potential at once, we further structure each epoch in \emph{phases} such that, during each phase, all nodes average the current potentials, and only at the end of the phase \ldr nodes remove their current potential to a separate accumulator.
%\footnote{This phased process may be slower than a continuous one. We leave the study of other analysis techniques to speed up the computation for future work. \mm{[[MM: should we say this?]]} \darek{[[[I would move it to the discussion at the end of the paper]]]}}
The number of rounds of communication in each phase are enough to reach such average or detect a wrong estimate. 
%}

%\mm{
In the following paragraphs, we give further details referring to the pseudocode in Algorithms~\ref{leaderAlg} and~\ref{otherAlg}.

\begin{algorithm}[htbp]
\caption{\nameD algorithm for each {\bf\emph{\ldr node}}. $N$ is the set of neighbors of this node in the current round. $\ell$ is the number of \ldr nodes. The parameters $d,p,r$ and $\tau$ are as defined in Theorem~\ref{thm:many}.}
\label{leaderAlg}
\DontPrintSemicolon
	$k \gets \ell+1, min\gets k, max\gets\infty$ \tcp*{initial size estimate and range}
	\Repeat(\tcp*[f]{iterating epochs}){$status=done$}{  \label{epochsleader}
		$status\gets probing$ \tcp*{status$=$probing$|$low$|$high$|$done}
		$\Phi\gets 0$ \tcp*{current potential}
		$\rho\gets 0$ \tcp*{potential accumulator}
		\For(\tcp*[f]{iterating phases}){$phase=1$ to $p$}{  \label{phasesleader}
			\For( \tcp*[f]{iterating rounds}){$round=1$ to $r$}{ \label{roundsleader}
				Broadcast $\langle\Phi,status\rangle$ 
				and Receive $\langle\Phi_i,status_i\rangle, \forall i\in N$ \;
				\If{$status=probing$ {\bf and} $|N|\leq d-1$ {\bf and} $\forall i\in N:status_i=probing$} {
					$\Phi\gets \Phi + \sum_{i\in N}\Phi_i/d - |N|\Phi/d$ \label{potupdate} \tcp*{update potential}	
				}
				\Else(\tcp*[f]{$k<n$}){  \label{leadertoomany}
					$status\gets low$, \label{alarminsecondleader}
					$\Phi\gets \ell$\;
				}
			}
			\lIf(\tcp*[f]{$k<n$}){$phase=1$ {\bf and} $\Phi> \tau$}{ 
					$status\gets low$,
					$\Phi\gets \ell$  \label{leaderthreshold}
			}
			\If(\tcp*[f]{prepare for next phase}){$status=probing$}{ 
				$\rho \gets \rho + \Phi$ \label{consume} \label{rhoupdate} \tcp*{consume potential} 
				$\Phi \gets 0$ \; \label{phireset}
			}
		}
		\If{$status=probing$}{ \label{statuscheck}
			\If(\tcp*[f]{$k=n$}){$(k-\ell)(1-k^{-\gamma})\leq \rho \leq (k-\ell)(1+k^{-\gamma})$}{ 
				$status\gets done$} \label{range}
			\lIf(\tcp*[f]{$k>n$}){$\rho < (k-\ell)(1-k^{-\gamma})$}{ 
				$status\gets high$}  \label{toobig}
			\lIf(\tcp*[f]{$k<n$}){$\rho > (k-\ell)(1+k^{-\gamma})$}{ 
				$status\gets low$}  \label{toolow}
		}
		\For(\tcp*[f]{disseminate status}){$round=1$ to $d$}{  
			Broadcast $\langle status\rangle$
			and Receive $\langle status_i\rangle, \forall i\in N$ \label{leadernotification}\;
		} 
		\eIf(\tcp*[f]{prepare for next epoch}){$status=low$}{  \label{leaderupdate}
			$min\gets k+1$ \;
			\leIf{$max=\infty$}{$k\gets 2k$}{$k\gets\lfloor(min+max)/2\rfloor$}
		}{
			\If{$status=high$}{
		         	$max\gets k-1$ \;
				$k\gets\lfloor(min+max)/2\rfloor$ \;
			}
		}
	}
	\textbf{return} $k$ \;

\end{algorithm}

\begin{algorithm}[htbp]
\caption{\nameD algorithm for each {\bf\emph{\notldr node}}. $N$ is the set of neighbors of this node in the current round. $\ell$ is the number of \ldr nodes. The parameters $d,p,r$ and $\tau$ are as defined in Theorem~\ref{thm:many}.}
\label{otherAlg}
\DontPrintSemicolon
	$k \gets \ell+1, min\gets k, max\gets\infty$ \tcp*{initial size estimate and range}
	\Repeat(\tcp*[f]{iterating epochs}){$status=done$}{  \label{epochsother}
		$status\gets probing$ \tcp*{status$=$probing$|$low$|$high$|$done}
		$\Phi\gets \ell$ \tcp*{current potential}
		\For(\tcp*[f]{iterating phases}){$phase=1$ to $p$}{ \label{phasesother}
			\For(\tcp*[f]{iterating rounds}){$round=1$ to $r$}{  \label{roundsother}
				Broadcast $\langle\Phi,status\rangle$
				and Receive $\langle\Phi_i,status_i\rangle, \forall i\in N$ \;
				\If{$status=probing$ {\bf and} $|N|\leq d-1$ {\bf and} $\forall i\in N:status_i=probing$}{ 
					$\Phi\gets \Phi + \sum_{i\in N}\Phi_i/d - |N|\Phi/d$ \label{newpot}
					\tcp*{update potential}
				}	
				\Else(\tcp*[f]{$k<n$}){  \label{othertoomany}
					$status\gets low$, \label{alarminsecondother}
					$\Phi\gets \ell$ \;
				}
			} 
			\lIf(\tcp*[f]{$k<n$}){$phase=1$ {\bf and} $\Phi> \tau$}{  
					$status\gets low$,
					$\Phi\gets \ell$ \label{otherthreshold}
			}
		}
		\For(\tcp*[f]{disseminate status}){$round=1$ to $d$}{  
			Broadcast $\langle status\rangle$
			and Receive $\langle status_i\rangle, \forall i\in N$ \label{othernotification}\;
			\lIf{$\exists i\in N:status_i\neq probing$}{$status\gets status_i$}
		}
		\If(\tcp*[f]{prepare for next epoch} ){$status=low$}{ \label{otherupdate}
			$min\gets k+1$ \;
			\leIf{$max=\infty$}{$k\gets 2k$}{$k\gets\lfloor(min+max)/2\rfloor$}
		}	
		\Else{ 
			\If{$status=high$}{
		         	$max\gets k-1$\;
				$k\gets\lfloor(min+max)/2\rfloor$\;
			}
		}
	}
	\textbf{return} $k$ \;

\end{algorithm}

%%%%%%%%%%%%%%%%%%%%%%%%%%%%%%%%%%%%%%%%%%%%%%%%%%%%%%

Initially, each of the $\ell$ \ldr nodes is assigned a potential of $0$ and each of the $n-\ell$ \notldr nodes is assigned a potential of~$\ell$. 
%\mm{\sout{Then, the algorithm is composed by epochs, each of which is divided into phases composed by rounds of communication.}} 
Epoch $k$ corresponds to a size estimate $k$ that is iteratively updated from epoch to epoch until the correct value $n$ is found.
Each epoch is divided into $p$ phases, %\mm{\sout{The purpose of each phase is for the \ldr nodes to collect as much potential as possible from \notldr nodes in a mass-distribution fashion as follows.}}
and each phase is composed by $r$ rounds of communication. 

In each round, each node
%\footnote{As opposed to previous work, in \nameD the leader also follows this procedure.} 
broadcasts its potential and receives the potential of all its neighbors. Each node keeps only a fraction $1/d$ of the potentials received. The parameters $p$, $r$, and $d$ are functions of $k$. 
The specific functions needed to guarantee correctness and saught efficiency are defined in Theorem~\ref{thm:many}. For the sake of presentation we simply write $p$, $r$, and $d$ instead of $p(k)$, $r(k)$, and $d(k)$.
This varying way of distributing potential is different from previous approaches using mass distribution. 
After communication, each node updates its own potential accordingly (cf. Lines~\ref{leaderAlg}.\ref{potupdate} and~\ref{otherAlg}.\ref{newpot}). That is, it adds a fraction $1/d$ of the potentials received, and subtracts a fraction $1/d$ of the potential broadcasted times the number of potentials received. Then, a new round starts. 
At the end of each phase, each \ldr node ``consumes'' its potential. 
%That is, it increases an internal accumulator $\rho$ with its current potential, which is zeroed for starting the next phase (cf. Lines~\ref{leaderAlg}.\ref{rhoupdate} and~\ref{leaderAlg}.\ref{phireset}). 
That is, it increases an internal accumulator $\rho$ by its current potential $\Phi$, and $\Phi$ is zeroed for starting the next phase (cf. Lines~\ref{leaderAlg}.\ref{rhoupdate} and~\ref{leaderAlg}.\ref{phireset}). 

The correctness (or incorrectness) of the estimate is detected by various alarms as follows. 
A node stops the update of potential described, raises its potential to $\ell$, and broadcasts an alarm status ``low'' in each round until the end of the epoch if any of the following happens: 
1) at the end of the first phase its potential is above some threshold $\tau$ as defined in Theorem~\ref{thm:many} (cf. Lines~\ref{leaderAlg}.\ref{leaderthreshold} and~\ref{otherAlg}.\ref{otherthreshold}), 
2) at any round it receives more than $d-1$ messages
(cf. Lines~\ref{leaderAlg}.\ref{leadertoomany} and~\ref{otherAlg}.\ref{othertoomany}), or 
3) at any round receives an alarm status ``low'' from one of its neighbors (cf. Lines~\ref{leaderAlg}.\ref{leadertoomany} and~\ref{otherAlg}.\ref{othertoomany}). 
Case 1) allows the \ldr nodes to detect that the estimate is wrong when $k^{1+\epsilon}<n$ for some $\epsilon>0$ (Lemmas~\ref{manyunalarmed} and~\ref{manyalarmsoon}), case 2) allows the \ldr nodes to detect that $d$ is too small and hence the estimate is low, and case 3) allows dissemination of these alarms.
(In ``low'' status the potential is set to $\ell$ to facilitate the analysis, but it is not strictly needed by the algorithm.)

At the end of each epoch, each \ldr node checks the value of $\rho$ and updates its status accordingly. If it is within some range, call it $\Gamma$, the current estimate is correct and each \ldr node changes its status to ``done'' (cf. Line~\ref{leaderAlg}.\ref{range}). Otherwise, the estimate is incorrect. If $\rho$ is below (resp. above) $\Gamma$ each \ldr node changes its status to ``high'' (resp. ``low'') indicating that the estimate is too big (resp. ``low'') (cf. Lines~\ref{leaderAlg}.\ref{toobig} and ~\ref{leaderAlg}.\ref{toolow} respectively). 
The case when $\rho$ is below (resp. above) $\Gamma$ allows to detect when $k>n$ (resp. $k<n\leq k^{1+\epsilon}$) (c.f. Lemmas~\ref{manyksquare} and~\ref{kaboven} respectively.).

After \ldr nodes update their status, the network is flooded with it for $d>k$ rounds (cf. Lines~\ref{leaderAlg}.\ref{leadernotification} and~\ref{otherAlg}.\ref{othernotification}). If $k\geq n$, those rounds are enough for all \notldr nodes to receive the ``done'' or ``high'' status. 
If they receive ``done'', after completing the $k$ rounds all nodes stop. Otherwise, after completing the $k$ rounds all nodes update $k$ according to status to start a new epoch (cf. Lines~\ref{leaderAlg}.\ref{leaderupdate} and~\ref{otherAlg}.\ref{otherupdate}). If $k$ has not been detected to be greater than $n$ since the computation started, it is doubled for the next epoch, otherwise it is updated as in binary search. 
%\mm{
Notice that at the beginning of each epoch the nodes' status are set to ``probing''. Hence, if $k<n$, \notldr nodes may or may not detect that the estimate is low, but if they do not detect it they will move to the next epoch increasing~$k$.
%}

\subsection{Analysis of \nameD}
\label{sec:manyanalysis}

In this section we analyze \nameD. 
References to algorithm lines are given as $\langle algorithm\#\rangle.\langle line\#\rangle$.
We use standard notations $\vec{I}$ for the unit vector, and $L_p$ for the norm of vector $\vec{x}=(x_1,x_2,\dots,x_n)$ as $||\vec{x}||_p = \left(\sum_{i=1}^n |x_i|^p\right)^{1/p}$, for any $p\geq 1$.
Only for the analysis, nodes are labeled as $0,1,2,\dots,n-1$.
The potential of a node $i$ at the beginning of round $s$ of phase $t$ is denoted as $\Phi_{s,t}[i]$, 
the potential of all nodes is denoted as a vector $\vec{\Phi}_{s,t}$, 
and the aggregated potential is then $||\vec{\Phi}_{s,t}||_1$. 
The subindices $s$, $t$, or both are omitted sometimes for clarity.
We will refer to the potential right after the last round of a phase as $\vec{\Phi}_{r+1}$. Such round does not exist in the algorithm, but we use this notation to distinguish between the potential right before \ldr nodes consume their own potential (cf. Line~\ref{leaderAlg}.\ref{rhoupdate}) and the potential at the beginning of the first round of the next phase.

First, we provide a broad description of our analysis of \nameD. 
%(Refer to the algorithm in Figures~\ref{leaderAlg} and~\ref{otherAlg} as needed.)
Consider the vector of potentials $\vec{\Phi}_i$ held by nodes at the beginning of any given phase $i$.
%\mm{
Let a $d$-lazy random walk be a walk that picks each adjacent edge with probability $1/d$, and with the remaining probability it stays at the current vertex.
%} 
The way that potentials are updated in each round (cf. Lines~\ref{leaderAlg}.\ref{potupdate} and~\ref{otherAlg}.\ref{newpot}) is equivalent to the progression of a $d$-lazy random walk on the evolving graph underlying the network topology~\cite{michal}, where the initial vector of potentials is equivalent to an initial distribution $\vec{\Pi}_i$ on the overall potential $||\vec{\Phi}_i||_1$
and the probability of choosing a specific neighbor is $1/d$. 
%For instance, the initial vector of potentials $\vec{\Phi}_0=\langle0,1,1,\dots\rangle$, corresponds to a distribution $\vec{\Pi}_0=\langle 0,1/(n-1),1/(n-1),\dots\rangle$ on the initial $||\vec{\Phi}_0||_1=n-1$.

Note that \nameD\ is not a simple ``derandomization'' of the lazy
random walk on evolving graphs. 
First, in the \ADN\ model neighbors cannot be 
distinguished, and even their number is unknown at transmission time
(only at receiving time the node learns the number of its neighbors). 
Second, due to unknown network parameters,
it may happen in an execution of \nameD\ that the total potential received
could be bigger than $1$. 
Third, our algorithm does not know a priori when to terminate and provide a result
even with some reasonable accuracy, as the formulas on mixing and cover time
of lazy random walks depend on the (a priori unknown) number of nodes $n$. 
Nevertheless, we can still use some results obtained in the context of 
analogous lazy random walks
in order to prove useful properties of parts of \nameD,
namely, some parts in which parameters are temporarily fixed and
the number of received messages does not exceed parameter $d$.

It was shown in~\cite{michal} that random walks on $d$-regular explorable evolving graphs have a uniform stationary distribution, and bounds on the mixing and cover time were proved as well. Moreover, it was observed that those properties hold even if the graph is not regular and $d$ is only an upper bound on the degree.\footnote{Their analysis relies on Lemma 12, which bounds the eigenvalues of the transition matrix as long as it is stochastic, connected, symmetric, and non-zero entries lower bounded by $1/d$. Those conditions hold for all the transition matrices, even if the evolving graph is not regular.}

Thus, for the cases where $d$ is an upper bound on the number of neighboring nodes, we analyze the evolution of potentials within each phase leveraging previous work on random walks on evolving graphs. Specifically, we use the following result which is an extension of Corollary 14 in~\cite{michal}.
\begin{theorem}
\label{koucky}
(Corollary 14 in~\cite{michal}.)
After $t$ rounds of a $d_{\max}$-lazy random walk on an evolving graph with $n$ nodes, dynamic diameter $D$, upper bound on maximum degree $ d_{\max}$, and initial distribution $\vec{\Pi}_0$, the following holds.
\begin{align*}
\left|\left|\vec{\Pi}_t - \frac{\vec{I}}{n}\right|\right|_2^2 \leq \left(1-\frac{1}{ d_{\max}Dn}\right)^t\left|\left|\vec{\Pi}_0 - \frac{\vec{I}}{n}\right|\right|_2^2
\end{align*}
\end{theorem}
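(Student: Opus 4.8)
The plan is to read Theorem~\ref{koucky} as a spectral-contraction statement and prove it one round at a time, exactly along the lines of~\cite{michal}. First I would write a single round of the $d_{\max}$-lazy walk as multiplication by a transition matrix: if $M_s$ denotes the transition matrix of round $s$, then $\vec{\Pi}_t = \vec{\Pi}_0 M_1 M_2\cdots M_t$. Three structural facts about each $M_s$ are all that matter. (i) $M_s$ is stochastic. (ii) $M_s$ is symmetric: from a vertex $v$ of (current) degree $\deg(v)\le d_{\max}-1$ the walk takes each incident edge with the same probability $1/d_{\max}$ and stays put with probability $1-\deg(v)/d_{\max}$, so $M_s[v,u]=1/d_{\max}=M_s[u,v]$ on edges; hence $M_s$ is doubly stochastic and the uniform vector is fixed, $(\vec{I}/n)M_s = \vec{I}/n$. (iii) Every non-zero entry of $M_s$ is at least $1/d_{\max}$, the self-loop entries $M_s[v,v]=1-\deg(v)/d_{\max}$ are strictly positive (laziness), and the round-$s$ graph is connected with diameter at most $D$.

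Next I would decompose the initial distribution as $\vec{\Pi}_0 = \vec{I}/n + \vec{e}_0$, where $\vec{e}_0 := \vec{\Pi}_0 - \vec{I}/n$ is orthogonal to $\vec{I}$ (both $\vec{\Pi}_0$ and $\vec{I}/n$ have coordinate sum~$1$). Since each $M_s$ fixes $\vec{I}/n$ and is symmetric, it maps $\vec{I}^{\perp}$ into itself, so the error vector $\vec{\Pi}_t - \vec{I}/n = \vec{e}_0 M_1\cdots M_t$ stays orthogonal to $\vec{I}$ at every step. The engine of the proof is then the one-step contraction: for every $\vec{e}\perp\vec{I}$ and every round $s$,
\begin{align*}
\|\vec{e}\,M_s\|_2^2 \le \left(1-\frac{1}{d_{\max}Dn}\right)\|\vec{e}\|_2^2 .
\end{align*}
This is precisely the eigenvalue estimate used in~\cite{michal} (their Lemma~12, feeding Corollary~14): on the orthogonal complement of $\vec{I}$ the operator norm of a connected, symmetric, stochastic matrix whose non-zero entries are at least $1/d_{\max}$, on an $n$-vertex graph of diameter at most $D$, is bounded by $1-1/(d_{\max}Dn)$ — the positive-side spectral gap comes from connectedness together with the diameter and minimum-entry bounds, while laziness keeps every eigenvalue bounded away from $-1$ so that $|\lambda_i|$ (not merely $\lambda_i$) is controlled. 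Iterating the one-step bound over $s=1,\dots,t$ gives
\begin{align*}
\left\|\vec{\Pi}_t-\frac{\vec{I}}{n}\right\|_2^2 = \|\vec{e}_0 M_1\cdots M_t\|_2^2 \le \left(1-\frac{1}{d_{\max}Dn}\right)^{t}\|\vec{e}_0\|_2^2 = \left(1-\frac{1}{d_{\max}Dn}\right)^{t}\left\|\vec{\Pi}_0-\frac{\vec{I}}{n}\right\|_2^2 ,
\end{align*}
which is the claim.

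The hard part is the one-step contraction bound itself, because the matrices here are not regular and change from round to round, so there is no single mixing operator to diagonalize once and for all. This is exactly where I would lean on~\cite{michal}: as the footnote above records, their argument needs only stochasticity, symmetry, connectedness, and the $1/d_{\max}$ lower bound on non-zero entries, all of which hold for each $M_s$ even though the underlying graphs are only degree-$(d_{\max}-1)$-bounded rather than $d_{\max}$-regular. Consequently the only genuinely new work is the bookkeeping (i)--(iii) verifying that our per-round transition matrices satisfy those hypotheses; the spectral estimate, and hence the stated inequality, is then imported essentially verbatim.
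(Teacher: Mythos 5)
The paper does not prove this statement itself: it imports it verbatim as Corollary~14 of~\cite{michal}, adding only a footnote that the underlying Lemma~12 there needs just stochasticity, symmetry, connectivity, and the $1/d_{\max}$ lower bound on non-zero entries, so it applies beyond regular graphs. Your round-by-round reconstruction — symmetric doubly stochastic per-round matrices fixing $\vec{I}/n$, the error vector staying orthogonal to $\vec{I}$, and the one-step spectral contraction by $1-1/(d_{\max}Dn)$ iterated over $t$ rounds — is exactly the argument the paper relies on from the cited source, so it is correct and takes essentially the same approach.
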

%In our notation, $1/d_{\max}$ is the fraction of potential $1/d$ shared by each node, and $h$ is the dynamic diameter $D$.

In between phases, \ldr nodes ``consume'' their potential, effectively changing the distribution at that point. Then, a new phase starts. 

In \nameD, given that $d$ is a function of the estimate $k$, if the estimate is low, there may be inputs for which $d$ is not an upper bound on the number of neighbors. We show in our analysis that in those cases the \ldr nodes detect the error and after some time all nodes increase the estimate.

\subsubsection*{Structure of the proof} 
The proof of correctness is structured in the following cases, depending on the relation between the size estimate $k$ and $n$. For some $\gamma$ and $\epsilon$, after completing an epoch with size estimate $k$, we prove that $\rho$, the potential accumulated by a \ldr node, must be within the following ranges.
\begin{center}
\begin{tabular}{rclc}
$k=n$ & $\Rightarrow$ & $(k-\ell) \left(1 - \frac{1}{k^\gamma}\right) \leq \rho \leq (k-\ell) \left(1 + \frac{1}{k^\gamma}\right)$ & (Lemma~\ref{manycorrect})\\
$k>n$ & $\Rightarrow$ & $\rho <  (k-\ell)\left(1-\frac{1}{k^\gamma}\right)$ & (Lemma~\ref{kaboven})\\
$k<n \leq k^{1+\epsilon}$ & $\Rightarrow$ & $\rho >  (k-\ell)\left(1+\frac{1}{k^\gamma}\right)$ & (Lemma~\ref{manyksquare})
%&k^{1+\epsilon} < n \Rightarrow \textrm{after first phase, the number of nodes that have ``too much'' potential to be $k\geq n$}\nonumber\\ 
%&\hspace{.8in}\textrm{is large enough to spread the alarm status ``low'' to all \ldr nodes.} 
\end{tabular}
\end{center}
For the remaining case when $k^{1+\epsilon} < n$, we prove first the following relation between $\Phi_{r+1,1}$, the potential of any node at the end of the first phase, and a threshold $\tau$. 
\begin{center}
\begin{tabular}{rclc}
$k^{1+\epsilon} < n$ & $\Rightarrow$ & $\Phi_{r+1,1} > \tau$ for at least one node & (Lemma~\ref{manyunalarmed})\\
$k\geq n$ & $\Rightarrow$ & $\Phi_{r+1,1}\leq\tau$ for all nodes & (Lemma~\ref{nolowalarm})
\end{tabular}
\end{center}
Thus, if $k^{1+\epsilon} < n$, and only if $k< n$, there is at least one node with potential above $\tau$, which moves to a status ``low'' and spreads this alarm. We complete the proof with the following.   
\begin{center}
\begin{tabular}{rclc}
$k^{1+\epsilon} < n$ & $\Rightarrow$ & all nodes receive an alarm ``low'' during phase $2$ & (Lemma~\ref{manyalarmsoon})
\end{tabular}
\end{center}
%\begin{center}
%\begin{tabular}{rclc}
%$k^{1+\epsilon} < n$ & $\Rightarrow$ & $\Phi_{r+1,1}\leq\ell(1-\ell/k^{1+\epsilon})$ for $\leq k^{1+\epsilon}$ nodes & Lemma~\ref{manyunalarmed}.
%\end{tabular}
%\end{center}
%That is, the number of nodes that have ``too much'' potential to be $k\geq n$
%is large enough to spread the alarm status ``low'' to all other nodes. 

%%%%%%%%%%%%%%%%%%%%%%%%%%%%%%%%%%%%%%%%%%%%%%%%%%%%%%%%%%

\subsubsection*{Analysis.} 
We start the analysis considering the case $k=n$ as follows.
\begin{lemma}
\label{manycorrect}
If $d \geq k = n$, 
for an ADN with $\ell < k$ \ldr nodes, 
for any $\gamma>0$ there is a $\alpha \geq \max\{2,1+\gamma+\log_k 3\}$ such that,
after running the \nameD protocol for $p \geq (2\gamma\ln k)/\left(\ell\left(\frac{1}{k}+\frac{1}{k^{\alpha}}\right)\right)$ phases, each of $r\geq 2\alpha dk^2\ln k$ rounds, the potential $\rho$ consumed by each of the $\ell$ \ldr nodes is such that 
$$(k-\ell) \left(1 - \frac{1}{k^\gamma}\right) \leq \rho \leq (k-\ell) \left(1 + \frac{1}{k^\gamma}\right).$$
\end{lemma}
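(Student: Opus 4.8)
The plan is the following. Since $k=n$ and $d\ge k>d_{\max}$, every node sees at most $d-1$ neighbours in every round, so the ``too many messages'' alarm (Lines~\ref{leaderAlg}.\ref{leadertoomany} and~\ref{otherAlg}.\ref{othertoomany}) never fires; assuming in addition that no node crosses the threshold $\tau$ at the end of phase~$1$ (for $k\ge n$ this is guaranteed by the choice of $\tau$, Lemma~\ref{nolowalarm}), no ``low'' status is ever raised, so throughout the epoch every node stays in status \emph{probing} and in each round its potential is updated by the $d$-lazy random walk transition $P$ of the current topology (keep a $1/d$ share of each received potential, Lines~\ref{leaderAlg}.\ref{potupdate} and~\ref{otherAlg}.\ref{newpot}). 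I would first record two facts. (a) Each such $P$ is symmetric and stochastic, hence doubly stochastic, so within any phase the aggregate potential $\|\vec{\Phi}\|_1$ is conserved and every coordinate stays non-negative. (b) By Theorem~\ref{koucky}, used with the parameter $d$ in the role of the degree bound (legitimate since $d>d_{\max}$) and with $D\le n-1<k$, a phase begun with aggregate potential $R$ and normalised distribution $\vec{\Pi}_0$ ends, after $r$ rounds, with distribution $\vec{\Pi}_r$ satisfying $\|\vec{\Pi}_r-\vec I/n\|_2^2\le (1-\tfrac1{dk^2})^r\,\|\vec{\Pi}_0-\vec I/n\|_2^2$; since $r\ge 2\alpha dk^2\ln k$ the factor is at most $e^{-2\alpha\ln k}=k^{-2\alpha}$, and since $\vec{\Pi}_0$ and $\vec I/n$ are probability distributions $\|\vec{\Pi}_0-\vec I/n\|_2^2\le 2$, so every node --- in particular every \ldr node --- ends the phase holding a potential in $R\,(\tfrac1k\pm Ck^{-\alpha})$ for an absolute constant $C$.

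\textbf{The cross-phase recursion.} Next I would track the aggregate potential $R_t$ at the start of phase~$t$: $R_1=(k-\ell)\ell$ since each of the $k-\ell$ \notldr nodes starts at $\ell$ and each \ldr node at $0$. The distribution opening phase~$t$ is supported on the \notldr nodes (the \ldr nodes zeroed their potential at the end of phase~$t-1$, Line~\ref{leaderAlg}.\ref{phireset}), and a short computation bounds $\|\vec{\Pi}_0^{(t)}-\vec I/n\|_2^2$ by an absolute constant for every $t$, so fact~(b) applies to each phase. Each \ldr node moves its end-of-phase potential $R_t(\tfrac1k\pm Ck^{-\alpha})$ into $\rho$ (Line~\ref{leaderAlg}.\ref{rhoupdate}); summing over the $\ell$ \ldr nodes gives $R_{t+1}=R_t\bigl(1-\tfrac\ell k\pm C\ell k^{-\alpha}\bigr)$, so $R_t$ decays geometrically with ratio $\tfrac{k-\ell}{k}\bigl(1\pm O(k^{1-\alpha})\bigr)$ and $\sum_{t\ge 1}R_t\le \tfrac{(k-\ell)k}{1-O(k^{1-\alpha})}$. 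I would then compare $\rho(v)$ for a given \ldr node $v$ to its ``ideal'' value, obtained by pretending each phase contributes exactly $R_t^{*}/k$ with $R_t^{*}=(k-\ell)\ell\,(\tfrac{k-\ell}{k})^{t-1}$:
\[
\rho^{*}=\sum_{t=1}^{p}\frac{R_t^{*}}{k}=(k-\ell)\Bigl(1-\bigl(\tfrac{k-\ell}{k}\bigr)^{p}\Bigr),
\qquad
\bigl|\rho(v)-\rho^{*}\bigr|\le O(k^{-\alpha})\sum_{t=1}^{p}R_t=O\bigl(k^{1-\alpha}\bigr)(k-\ell).
\]

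\textbf{Fitting the parameters.} It then remains to check that the stated $p$ and $\alpha$ make both terms small enough. For the truncation term, $-\ln(1-\tfrac\ell k)\ge\tfrac\ell k$ and $p\ge \tfrac{2\gamma\ln k}{\ell(1/k+1/k^{\alpha})}$ give $p\cdot\tfrac\ell k\ge \tfrac{2\gamma\ln k}{1+k^{1-\alpha}}\ge \tfrac43\gamma\ln k$ (using $\alpha\ge 2$), hence $\bigl(\tfrac{k-\ell}{k}\bigr)^{p}\le k^{-4\gamma/3}\le k^{-\gamma}$ and $\rho^{*}\in\bigl[(k-\ell)(1-k^{-\gamma}),\,k-\ell\bigr]$, with a $k^{-\gamma/3}$ factor of slack on the low side. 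For the mixing term, $\alpha\ge 1+\gamma+\log_k 3$ gives $k^{-\alpha}\le\tfrac1{3k^{1+\gamma}}$, so $O(k^{1-\alpha})(k-\ell)\le\tfrac{C'}{3}\,k^{-\gamma}(k-\ell)$ where $C'=C/(1-O(k^{1-\alpha}))=O(1)$ is kept bounded by $\alpha\ge2$; the factor $3$ in the definition of $\alpha$ is there precisely to absorb this $C'$, making the mixing deviation comfortably smaller than both the low-side slack of $\rho^{*}$ and the full budget $k^{-\gamma}(k-\ell)$ on the high side. Adding the two contributions, $\rho(v)\in\bigl[(k-\ell)(1-k^{-\gamma}),(k-\ell)(1+k^{-\gamma})\bigr]$ for every \ldr node $v$, which is the claim.

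\textbf{Main obstacle.} The delicate part is exactly the cross-phase error bookkeeping in the previous two paragraphs: one must show that the per-phase $L_2$ mixing error of order $k^{-\alpha}$ neither spoils the geometric decay of $R_t$ nor, when summed against the $R_t$'s over the $p\approx 2\gamma k\ln k/\ell$ phases, overruns the $k^{-\gamma}$ budget, and that the distribution opening each phase stays close enough to uniform-on-\notldr nodes for Theorem~\ref{koucky} to keep yielding a $k^{-\alpha}$ bound phase after phase. It is this accounting that dictates the stated shape of $\alpha$ (notably the $\log_k 3$ correction) and of $p$; a little extra care is needed in the regime of small $k$, where one also exploits that the dynamic diameter is then small relative to $k^2$, which sharpens the per-phase mixing bound (and where, moreover, $k^{-\gamma}$ is itself a wide interval, so the conclusion follows even from crude estimates).
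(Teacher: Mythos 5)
Your overall strategy is the same as the paper's: within each phase, Theorem~\ref{koucky} (with degree bound $d\ge k>d_{\max}$ and diameter $<k$) forces every node to end the phase holding a $1/k\pm k^{-\alpha}$ fraction of the current aggregate potential, the \ldr nodes' consumption makes the aggregate decay geometrically across phases, and one then sums the series and picks $p,\alpha$ to fit the $k^{-\gamma}$ window. The gap is in your final accounting, which is exactly where the specific constants in the statement ($\log_k 3$, the $2\gamma$, the form of $p$) have to earn their keep. First, your bound $|\rho(v)-\rho^{*}|\le O(k^{-\alpha})\sum_t R_t$ omits the drift term $\sum_t |R_t-R_t^{*}|/k$ coming from the fact that the actual aggregates $R_t$ deviate from the ideal $R_t^{*}=(k-\ell)\ell\,((k-\ell)/k)^{t-1}$; that term is of the same order $k^{1-\alpha}(k-\ell)$ and doubles the deviation. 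Second, the additive split of the budget is never verified and does not close in all regimes: your truncation bound only gives $\rho^{*}\ge (k-\ell)(1-k^{-4\gamma/3})$ (you use ratio $1-\ell/k$ rather than the ratio $1-\ell(1/k+1/k^{\alpha})$ that matches the denominator of $p$), so on the low side the available slack is $k^{-\gamma}(1-k^{-\gamma/3})(k-\ell)$, while the (corrected, two-term) mixing deviation is of order $k^{-\gamma}(k-\ell)$ -- with your unspecified constant $C$ (and note $\|\vec\Pi_0-\vec I/n\|_2^2\le 1$ for any probability vector, so you may take $C=1$, not $\sqrt 2$) this exceeds the slack whenever $k^{\gamma}$ is not large. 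Your closing remark that small $k^{\gamma}$ is the easy case is backwards for the lower bound: there the slack $1-k^{-\gamma/3}$ shrinks. (A smaller point: ruling out the threshold alarm via Lemma~\ref{nolowalarm} needs that lemma's own condition on $r$, which is not implied by $r\ge 2\alpha dk^2\ln k$; only the combined choice in Theorem~\ref{thm:many} gives both.)

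The paper avoids all of this by never introducing $\rho^{*}$: it sandwiches $\rho$ directly between two geometric series, $\rho\ge \ell(k-\ell)(\tfrac1k-\tfrac1{k^{\alpha}})\sum_{i<p}(1-\ell(\tfrac1k+\tfrac1{k^{\alpha}}))^{i}$ and the analogous upper bound, which evaluate to $(k-\ell)\tfrac{k^{\alpha}-k}{k^{\alpha}+k}\bigl(1-k^{-2\gamma}\bigr)\le\rho\le (k-\ell)\tfrac{k^{\alpha}+k}{k^{\alpha}-k}$ for the stated $p$. The two error sources are then handled multiplicatively, using $1-k^{-2\gamma}=(1-k^{-\gamma})(1+k^{-\gamma})$, so both endpoints reduce to the single inequality $\tfrac{k^{\alpha}-k}{k^{\alpha}+k}(1+k^{-\gamma})\ge 1$, i.e.\ $k^{\alpha-\gamma}\ge 2k+k^{1-\gamma}$, which holds exactly when $k^{\alpha-\gamma}\ge 3k$, i.e.\ $\alpha\ge 1+\gamma+\log_k 3$, for every $k>\ell\ge1$ and $\gamma>0$. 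If you redo your last two paragraphs this way (per-phase fraction $1/k\pm k^{-\alpha}$ with constant $1$, ratios $1-\ell(1/k\pm k^{-\alpha})$, multiplicative bookkeeping), your argument becomes the paper's proof and the stated parameters suffice without any asymptotic slack.
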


\begin{proof}
Consider the vector of potentials $\vec{\Phi}_{i,1}$ at the beginning of round $1$ of any phase $i$.
We analyze the evolution of potentials within phase $i$ as a random walk on the evolving graph underlying the network topology.
Consider the initial distribution $\vec{\Pi}_1$ on the overall potential $||\vec{\Phi}_{i,1}||_1$. 
Given that $d \geq k$, for any $\alpha\geq 0$ and using Theorem~\ref{koucky}, we know that after a phase $i$ of $r\geq 2\alpha dk^2\ln k$ rounds the distribution is such that 
\begin{align}
\left|\left|\vec{\Pi}_{r+1} - \frac{\vec{I}}{k}\right|\right|_2^2 
&\leq \left(1-\frac{1}{d{\cal D}k}\right)^r\left|\left|\vec{\Pi}_1 - \frac{\vec{I}}{k}\right|\right|_2^2, \textrm{ for $d{\cal D}k > 1$, it is }\label{distanceforell}\\
&\leq \exp\left(-\frac{r}{d{\cal D}k}\right)\nonumber\\
&\leq \exp\left(-\frac{2\alpha dk^2\ln k}{d{\cal D}k}\right),\textrm{ given that $k=n>{\cal D}$, it is }\nonumber\\
&\leq \exp\left(-2\alpha\ln k\right)\nonumber\\
&= \frac{1}{k^{2\alpha}} \ . \nonumber
\end{align}
Given that $(\Pi_{r+1}[j] - 1/k)^2 \leq \left|\left|\vec{\Pi}_{r+1} - \frac{\vec{I}}{k}\right|\right|_2^2$, for any node $j$, we have that $(\Pi_{r+1}[j]-1/k)^2 \leq 1/k^{2\alpha}$ and hence 
\begin{align}
\left|\Pi_{r+1}[j] - \frac{1}{k} \right| \leq \frac{1}{k^{\alpha}} \ .\label{fractionbound}
\end{align}
Notice that the latter is true for any initial distribution.
Therefore, after each phase a \ldr node consumes 
between $1/k-1/k^{\alpha}$ and $1/k+1/k^{\alpha}$ fraction of the total
potential in the system, and the total potential in the system drops by at least
$\ell(1/k-1/k^{\alpha})$  and by at most $\ell(1/k+1/k^{\alpha})$ fraction. 
Recall that the initial overall potential in the system is $||\vec{\Phi}_{1,1}||_1=\ell(k-\ell)$.

Using the latter observations, we first find conditions on $p$ to obtain the desired bounds on $\rho$, as follows.
After $p$ phases a \ldr node consumes {\em at least}
\begin{align}
\rho &\geq \ell(k-\ell) \left(\frac{1}{k} - \frac{1}{k^{\alpha}}\right) \sum_{i=0}^{p-1} \left(1- \ell\left(\frac{1}{k}+\frac{1}{k^{\alpha}}\right)\right)^i,\label{rholb}
\end{align}
and {\em at most}
\begin{align}
\rho &\leq \ell(k-\ell) \left(\frac{1}{k} + \frac{1}{k^{\alpha}}\right) \sum_{i=0}^{p-1} \left(1- \ell\left(\frac{1}{k}-\frac{1}{k^{\alpha}}\right)\right)^i.\label{rhoub}
\end{align}
Given that $0<\ell\left(\frac{1}{k}+\frac{1}{k^{\alpha}}\right)<1$ for $\alpha \geq 2$ and $k>\ell$, Equation~\ref{rholb} is
\begin{align*}
\rho &\geq \ell(k-\ell) \left(\frac{1}{k} - \frac{1}{k^{\alpha}}\right) \frac{1-\left(1- \ell\left(\frac{1}{k}+\frac{1}{k^{\alpha}}\right)\right)^p}{1-\left(1- \ell\left(\frac{1}{k}+\frac{1}{k^{\alpha}}\right)\right)} \\
&= (k-\ell) 
\frac{ k^{\alpha}-k}{k^{\alpha}+k}
\left(1-\left(1- \ell\left(\frac{1}{k}+\frac{1}{k^{\alpha}}\right)\right)^p\right).
\end{align*}
Given that $0<\ell\left(\frac{1}{k}+\frac{1}{k^{\alpha}}\right)<1$ 
%for $\ell<k$ and $c\geq4$, it is 
for $\ell<k$ and $\alpha \geq 2$, it is 
\begin{align*}
\rho &\geq (k-\ell) 
\frac{ k^{\alpha}-k}{k^{\alpha}+k}
\left(1-\exp\left(- p\ell\left(\frac{1}{k}+\frac{1}{k^{\alpha}}\right)\right)\right).
\end{align*}
Thus, to prove the lower bound on $\rho$, it is enough to find values of $p$ and $\alpha$ such that
\begin{align*}
\frac{ k^{\alpha}-k}{k^{\alpha}+k}
\left(1-\exp\left(- p\ell\left(\frac{1}{k}+\frac{1}{k^{\alpha}}\right)\right)\right)
&\geq 1-\frac{1}{k^\gamma} \ .
\end{align*}
We note first that for $$p\geq \frac{2\gamma\ln k}{\ell\left(\frac{1}{k}+\frac{1}{k^{\alpha}}\right)} \ ,$$ 
it is $$1-\exp\left(- p\ell\left(\frac{1}{k}+\frac{1}{k^{\alpha}}\right)\right) \geq 1-\frac{1}{k^{2\gamma}} \ .$$
Replacing, it is enough to prove
\begin{align}
\frac{ k^{\alpha}-k}{k^{\alpha}+k} \left(1+\frac{1}{k^\gamma}\right) &\geq 1 \label{eqceps}\\
%\left(k^{\alpha}-k\right) \left(1+\frac{1}{k^\gamma}\right) &\geq k^{\alpha}+k\\
%k^{\alpha}-k+\frac{k^{\alpha}-k}{k^\gamma} &\geq k^{\alpha}+k\\
%k^{\alpha-\gamma}-k^{1-\gamma} &\geq 2k\\
k^{\alpha-\gamma} &\geq 2k + k^{1-\gamma} \ . \nonumber
\end{align}
Thus, for $\gamma>0$ it is enough to prove
$k^{\alpha-\gamma} \geq 3k$,
which is true for $\alpha \geq 1+\gamma+\log_k 3$. %, and given that $k\geq 2$ it is enough to make $c \geq 2(1+\gamma+\log 3)$. 
We show now the upper bound on $\rho$ starting from Equation~\ref{rhoub}:
\begin{align*}
\rho &\leq \ell(k-\ell) \left(\frac{1}{k} + \frac{1}{k^{\alpha}}\right) \sum_{i=0}^{p-1} \left(1- \ell\left(\frac{1}{k}-\frac{1}{k^{\alpha}}\right)\right)^i.
\end{align*}
Given that $1- \ell\left(\frac{1}{k}-\frac{1}{k^{\alpha}}\right)<1$ for $\alpha \geq 2>1$, it is
\begin{align*}
\rho &\leq \ell(k-\ell) \left(\frac{1}{k} + \frac{1}{k^{\alpha}}\right) \frac{1-\left(1- \ell\left(\frac{1}{k}-\frac{1}{k^{\alpha}}\right)\right)^p}{1-\left(1- \ell\left(\frac{1}{k}-\frac{1}{k^{\alpha}}\right)\right)} \\
&= (k-\ell) 
\frac{ k^{\alpha}+k}{k^{\alpha}-k}
\left(1-\left(1- \ell\left(\frac{1}{k}-\frac{1}{k^{\alpha}}\right)\right)^p\right).
\end{align*}
Given that $0<\ell\left(\frac{1}{k}-\frac{1}{k^{\alpha}}\right)<1$ and $p>0$, it is $\left(1-\left(1- \ell\left(\frac{1}{k}-\frac{1}{k^{\alpha}}\right)\right)^p\right)<1$. Then, replacing, we get
\begin{align*}
\rho &\leq (k-\ell) 
\frac{ k^{\alpha}+k}{k^{\alpha}-k} \ .
\end{align*}
Thus, to prove the upper bound on $\rho$, it is enough to find a value of $c$ such that
\begin{align*}
\frac{ k^{\alpha}+k}{k^{\alpha}-k}
&\leq 1+\frac{1}{k^\gamma} \ .
\end{align*}
This is the same as Equation~\ref{eqceps} and hence the claim follows.
\end{proof}

%%%%%%%%%%%%%%%%%%%%%%%%%%%%%%%%%%%%%%%%%%%%%%%%%%%%%%%%%%

The previous lemma shows that, after running \nameD enough time, if 
for some \ldr node it is $ \rho > (k-\ell) \left(1 + \frac{1}{k^\gamma}\right)$
or
$\rho < (k-\ell) \left(1 - \frac{1}{k^\gamma}\right)$,
for some $\gamma>0$,
we know that the estimate $k$ is wrong. However, the complementary case, that is, 
$(k-\ell) \left(1 - \frac{1}{k^\gamma}\right) \leq \rho \leq (k-\ell) \left(1 + \frac{1}{k^\gamma}\right)$,
may occur even if the estimate is $k\neq n$ and hence the error has to be detected by other means. 
To prove correctness in that case we further separate the range of $k$ in three cases. 
The first one, when $k<n\leq k^{1+\epsilon}$, for some $\epsilon>0$, in the following lemma, 
which is based on upper bounding the potential left in the system after running \nameD long enough.
To ensure that $d\geq  \Delta+1$, we restrict $d\geq k^{1+\epsilon}$. 
%The proof is left to the Appendix for brevity.

\begin{lemma}
\label{manyksquare}
Under the following conditions
$1<k<n\leq k^{1+\epsilon}\leq d$, $\epsilon>0$,\\ 
after running the \nameD protocol for 
$p\geq 2\delta(\ln k)/(\ell\left(1/n+1/k^{\beta}\right))$
phases, each of 
$r\geq2\beta dk^{2+2\epsilon}\ln k$
rounds, under the following conditions
$\beta \geq \log_k (n(2k^{\delta} + 1))$,
$\beta >2$,
$\delta > \log_k (nk^\gamma/(nk^\gamma-(n-1)(k^\gamma+1)))$, and
$\gamma >\log_k (n-1)$.
%\begin{align*}
%\beta &\geq \log_k (n(2k^{\delta} + 1)),\\
%\beta &>2,\\
%\delta &> \log_k (nk^\gamma/(nk^\gamma-(n-1)(k^\gamma+1))), \textrm{and}\\
%\gamma &>\log_k (n-1).
%\end{align*}
Then, the potential $\rho$ consumed by any \ldr node is $\rho >  (k-\ell)\left(1+1/k^\gamma\right)$.
\end{lemma}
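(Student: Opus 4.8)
The plan is to mirror the structure of the proof of Lemma~\ref{manycorrect}, but now tracking the total potential remaining in the system (as opposed to the per-node fraction), and showing that when $k<n$ the \ldr nodes cannot drain it fast enough, so that the per-phase residual stays large and the accumulated $\rho$ overshoots the threshold $(k-\ell)(1+1/k^\gamma)$. First I would fix a phase $i$ and view the potential-update rule (Lines~\ref{leaderAlg}.\ref{potupdate}, \ref{otherAlg}.\ref{newpot}) as a $d$-lazy random walk on the evolving graph; the hypothesis $n\leq k^{1+\epsilon}\leq d$ guarantees $d\geq d_{\max}+1$, so the walk is well-defined and Theorem~\ref{koucky} applies with $\cal D\leq D < n$. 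Running $r\geq 2\beta d k^{2+2\epsilon}\ln k$ rounds and using $d{\cal D}k < d n k \leq d k^{2+\epsilon} k = d k^{3+\epsilon}$ — more carefully, $d{\cal D}k \leq d\cdot n\cdot k \leq d k^{2+2\epsilon}$ since $n\le k^{1+\epsilon}$ and $k\le k^{1+\epsilon}$ — I get
\begin{align*}
\left|\left|\vec{\Pi}_{r+1} - \frac{\vec{I}}{n}\right|\right|_2^2 \leq \exp\left(-\frac{2\beta d k^{2+2\epsilon}\ln k}{d{\cal D}k}\right) \leq \exp(-2\beta\ln k) = \frac{1}{k^{2\beta}},
\end{align*}
hence $\bigl|\Pi_{r+1}[j]-1/n\bigr|\leq 1/k^{\beta}$ for every node $j$, in particular for every \ldr node.

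Next I would convert this into a recursion on the total potential. At the end of a phase each of the $\ell$ \ldr nodes consumes a fraction of the current total potential that is \emph{at most} $1/n+1/k^{\beta}$, so the total potential surviving into the next phase is \emph{at least} a $\bigl(1-\ell(1/n+1/k^{\beta})\bigr)$ factor of the previous total; and each \ldr consumes \emph{at least} a $1/n-1/k^{\beta}$ fraction. Starting from $\|\vec\Phi_{1,1}\|_1=\ell(n-\ell)$ (note: total potential depends on $n$, not $k$), after $p$ phases a single \ldr node has accumulated
\begin{align*}
\rho \geq \ell(n-\ell)\left(\frac{1}{n}-\frac{1}{k^\beta}\right)\sum_{i=0}^{p-1}\left(1-\ell\left(\frac{1}{n}+\frac{1}{k^\beta}\right)\right)^i = (n-\ell)\frac{k^\beta - n}{k^\beta + n}\left(1-\left(1-\ell\left(\tfrac{1}{n}+\tfrac{1}{k^\beta}\right)\right)^p\right),
\end{align*}
where I have used $\beta>2$ and $\ell<k<n$ to ensure $0<\ell(1/n+1/k^\beta)<1$ and sum the geometric series. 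Using $1-(1-a)^p\geq 1-e^{-ap}$ and the hypothesis $p\geq 2\delta(\ln k)/(\ell(1/n+1/k^\beta))$, the parenthesized factor is at least $1-k^{-2\delta}\geq 1-k^{-\delta}$ (for $k\ge2$), so it remains to show
\begin{align*}
(n-\ell)\,\frac{k^\beta-n}{k^\beta+n}\left(1-\frac{1}{k^\delta}\right) \geq (k-\ell)\left(1+\frac{1}{k^\gamma}\right).
\end{align*}

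The last step is the purely arithmetic verification that the stated conditions on $\beta,\delta,\gamma$ suffice, and this is where the bookkeeping is most delicate. The condition $\gamma>\log_k(n-1)$ makes $1+1/k^\gamma < 1 + 1/(n-1) = n/(n-1)$, so $(k-\ell)(1+1/k^\gamma) < (k-\ell)\cdot n/(n-1)$; since $k<n$ this is at most $(n-\ell)\cdot\frac{(k-\ell)}{(n-\ell)}\cdot\frac{n}{n-1}$, and one checks $\frac{k-\ell}{n-\ell}\le \frac{k}{n}$ so the right side is bounded by $(n-\ell)\cdot\frac{k}{n-1}$; it then suffices that $\frac{k^\beta-n}{k^\beta+n}(1-k^{-\delta})\geq \frac{k}{n-1}$, which — after clearing denominators — reduces to a lower bound on $k^\beta$ of the form $k^\beta \geq n(2k^\delta+1)/(\text{positive slack})$, exactly the role of $\beta\geq\log_k(n(2k^\delta+1))$, with the $\delta$-condition $\delta>\log_k\!\bigl(nk^\gamma/(nk^\gamma-(n-1)(k^\gamma+1))\bigr)$ guaranteeing the denominator slack is positive. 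I expect the main obstacle to be getting these three nested inequalities to line up cleanly — in particular confirming that the $\delta$-condition's denominator $nk^\gamma-(n-1)(k^\gamma+1) = k^\gamma - (n-1) > 0$ is positive precisely because of the $\gamma$-condition, and that the chain of substitutions above is tight enough that the claimed parameter choices really close the gap rather than merely coming close. Once the algebra is assembled the lemma follows.
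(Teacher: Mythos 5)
Your first two-thirds of the argument is essentially the paper's proof: apply Theorem~\ref{koucky} with $r\geq 2\beta d k^{2+2\epsilon}\ln k$ and $\mathcal{D}<n\leq k^{1+\epsilon}$ to get $\left|\Pi_{r+1}[j]-1/n\right|\leq k^{-\beta}$, note that each \ldr node consumes at least a $1/n-1/k^{\beta}$ fraction per phase while the total (initially $\ell(n-\ell)$) shrinks by at most a $\ell(1/n+1/k^{\beta})$ fraction, sum the geometric series, and use the bound on $p$ to get $\rho \geq (n-\ell)\frac{k^{\beta}-n}{k^{\beta}+n}\left(1-k^{-2\delta}\right)$. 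The gap is in your final arithmetic reduction, and it is a real one: you bound $(k-\ell)(1+1/k^{\gamma}) < (n-\ell)\frac{k}{n-1}$ (via $1+1/k^{\gamma}<\frac{n}{n-1}$ and $\frac{k-\ell}{n-\ell}\leq\frac{k}{n}$) and then claim it suffices that $\frac{k^{\beta}-n}{k^{\beta}+n}\left(1-k^{-\delta}\right)\geq\frac{k}{n-1}$. That sufficient condition is unachievable for $k$ close to $n$: e.g.\ for $k=n-1$, which satisfies the lemma's hypotheses, the right-hand side equals $1$ while the left-hand side is strictly below $1$ no matter how large $\beta$ and $\delta$ are. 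The step $\frac{k-\ell}{n-\ell}\leq\frac{k}{n}$ throws away precisely the $\ell$-dependent slack (of order $\ell/k^{\gamma}$) on which the true inequality rests, so the three conditions cannot be made to ``line up'' along your chain.

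The paper closes the argument differently, in two stages that you should adopt. First, write $1-k^{-2\delta}=\left(1+k^{-\delta}\right)\left(1-k^{-\delta}\right)$ and use the condition $\beta\geq\log_k\bigl(n(2k^{\delta}+1)\bigr)$, i.e.\ $k^{\beta}\geq n(2k^{\delta}+1)$, to get $\frac{k^{\beta}-n}{k^{\beta}+n}\left(1+k^{-\delta}\right)\geq 1$, hence $\rho\geq(n-\ell)\left(1-k^{-\delta}\right)$; this is the role of the $\beta$-condition (not the denominator-clearing you sketch). Second, verify $(n-\ell)\left(1-k^{-\delta}\right)>(k-\ell)\left(1+k^{-\gamma}\right)$ while keeping the $\ell$ terms: since $k\leq n-1$ it is enough that $(n-\ell)k^{-\delta}+(k-\ell)k^{-\gamma}<1$, and the $\delta$-condition gives $k^{-\delta}<\frac{k^{\gamma}-(n-1)}{nk^{\gamma}}$ (its denominator being positive exactly by $\gamma>\log_k(n-1)$, as you observed), whence $(n-\ell)k^{-\delta}+(k-\ell)k^{-\gamma}<1-\frac{n-1}{k^{\gamma}}+\frac{n-1-\ell}{k^{\gamma}}=1-\frac{\ell}{k^{\gamma}}<1$. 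With your last paragraph replaced by this computation the proof is complete; as written, the concluding inequality you rely on is false in part of the lemma's range.
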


\begin{proof}
Given that $d\geq n$, we can use Theorem~\ref{koucky} as in Lemma~\ref{manycorrect} to show that after a phase of 
$r\geq2\beta dk^{2+2\epsilon}\ln k$ rounds the distribution is such that 
\begin{align*}
\left|\left|\vec{\Pi}_{r+1} - \frac{\vec{I}}{n}\right|\right|_2^2 
&\leq \left(1-\frac{1}{d{\cal D}n}\right)^r\left|\left|\vec{\Pi}_1 - \frac{\vec{I}}{n}\right|\right|_2^2\nonumber\\
&\leq \exp\left(-\frac{r}{d{\cal D}n}\right)\nonumber\\
&\leq \exp\left(-\frac{2\beta dk^{2+2\epsilon}\ln k}{d{\cal D}n}\right).
\end{align*}

Given that $k^{1+\epsilon}\geq n > {\cal D}$, we have that
\begin{align*}
\left|\left|\vec{\Pi}_{r+1} - \frac{\vec{I}}{n}\right|\right|_2^2 
&\leq \exp\left(-2\beta\ln k\right)\nonumber\\
&= \frac{1}{k^{2\beta}} \ .\nonumber
\end{align*}

For any node $i$, given that $(\Pi_{r+1}[i] - 1/n)^2 \leq \left|\left|\vec{\Pi}_{r+1} - \frac{\vec{I}}{n}\right|\right|_2^2$ we have that $(\Pi_{r+1}[i]-1/n)^2 \leq 1/k^{2\beta}$ and hence $\Pi_{r+1}[i] \geq 1/n - 1/k^{\beta}$.
The latter is true for any initial distribution.
Therefore, after each phase a \ldr node consumes 
at least $1/n-1/k^{\beta}$ fraction of the total
potential in the system, and the total potential in the system drops by at most $\ell(1/n+1/k^{\beta})$ fraction. 
Recall that the initial overall potential in the system is $||\vec{\Phi}_{1,1}||_1=\ell(n-\ell)$.

Using the latter observations, after $p$ phases, any given \ldr node consumes {\em at least}
\begin{align*}
\rho &\geq \ell(n-\ell) \left(\frac{1}{n} - \frac{1}{k^{\beta}}\right) \sum_{i=0}^{p-1} \left(1- \ell\left(\frac{1}{n}+\frac{1}{k^{\beta}}\right)\right)^i.
\end{align*}
Given that $0<\ell\left(\frac{1}{n}+\frac{1}{k^{\beta}}\right)<1$ for $\beta\geq 2$ and $k>\ell$, we have that
\begin{align*}
\rho &\geq \ell(n-\ell) \left(\frac{1}{n} - \frac{1}{k^{\beta}}\right) \frac{1-\left(1- \ell\left(\frac{1}{n}+\frac{1}{k^{\beta}}\right)\right)^p}{1-\left(1- \ell\left(\frac{1}{n}+\frac{1}{k^{\beta}}\right)\right)} \\
&= (n-\ell)  \frac{ k^{\beta}-n}{k^{\beta}+n}
\left(1-\left(1- \ell\left(\frac{1}{n}+\frac{1}{k^{\beta}}\right)\right)^p\right).
\end{align*}
Again using that $0<\ell\left(\frac{1}{n}+\frac{1}{k^{\beta}}\right)<1$ for $\beta\geq 2$ and $k>\ell$, 
and given that $1-x\leq e^{-x}$ for any $0<x<1$ (cf.~\cite{book:mitrinovic}),
we have that
\begin{align*}
\rho &\geq (n-\ell) \frac{ k^{\beta}-n}{k^{\beta}+n}
\left(1-\exp\left(- p\ell\left(\frac{1}{n}+\frac{1}{k^{\beta}}\right)\right)\right),
\textrm{ replacing } p\geq \frac{2\delta\ln k}{\ell\left(\frac{1}{n}+\frac{1}{k^{\beta}}\right)},\\ 
&\geq (n-\ell) \frac{ k^{\beta}-n}{k^{\beta}+n} \left(1- \frac{1}{k^{2\delta}}\right)\\
&\geq (n-\ell) \frac{ k^{\beta}-n}{k^{\beta}+n} \left(1+ \frac{1}{k^{\delta}}\right) \left(1- \frac{1}{k^{\delta}}\right)\\
&\geq (n-\ell) \left(1- \frac{1}{k^{\delta}}\right) \ .
\end{align*}
The latter inequality holds for $\beta \geq \log_k (n(2k^{\delta} + 1))$.
%\begin{align*}
%\frac{ k^{\beta}-n}{k^{\beta}+n} \left(1+ \frac{1}{k^{\delta}}\right) &\geq 1 \\
%1+ \frac{1}{k^{\delta}} &\geq \frac{k^{\beta}+n}{ k^{\beta}-n} \\
%\frac{1}{k^{\delta}} &\geq \frac{k^{\beta}+n}{ k^{\beta}-n} -1 \\
%k^{\delta} &\leq \frac{k^{\beta}-n}{2n}\\
%2nk^{\delta} &\leq k^{\beta}-n\\
%2nk^{\delta} + n &\leq k^{\beta}\\
%n(2k^{\delta} + 1) &\leq k^{\beta}\\
%\beta &\geq \log_k (n(2k^{\delta} + 1)).
%\end{align*}
Then, to complete the proof, it is enough to show that
\begin{align*}
(n-\ell) \left(1- \frac{1}{k^{\delta}}\right) &> (k-\ell)\left(1+\frac{1}{k^\gamma}\right).
\end{align*}
Which is true for $k<n$, $\delta> \log_k (nk^\gamma/(nk^\gamma-(n-1)(k^\gamma+1)))$ and $\gamma>\log_k (n-1)$.
\end{proof}

We now consider the case $k^{1+\epsilon}<n$.
First, we prove the following two claims that establish properties of the potential during the execution of \nameD. 
%The proofs are left to the Appendix for brevity.
(Recall that we use round $r+1$ to refer to potentials at the end of the phase right before \ldr nodes consume their potential in Line~\ref{leaderAlg}.\ref{phireset}.)

\begin{claim}
\label{manyconservation}
Given an ADN of $n$ nodes running \nameD with parameter $d$, for any round $t$ of the first phase, such that $1\leq t\leq r+1$, if $d$ was larger than the number of neighbors of each node $x$ for every round $t'<t$, then $||\vec{\Phi}_t||_1=(n-\ell)\ell$. 
\end{claim}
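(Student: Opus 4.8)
I would prove the statement by induction on the round index $t$, carrying along three invariants at once: (a) every node is in status \emph{probing} at round $t$ of the first phase; (b) every entry of $\vec{\Phi}_t$ is non-negative; and (c) $||\vec{\Phi}_t||_1=(n-\ell)\ell$. The base case $t=1$ is immediate: at the start of the current epoch each of the $\ell$ \ldr nodes resets $\Phi\gets 0$ and each of the $n-\ell$ \notldr nodes resets $\Phi\gets\ell$, and every node sets its status to \emph{probing} (cf.\ the initializations in Algorithms~\ref{leaderAlg} and~\ref{otherAlg}). Hence $\vec{\Phi}_1$ is non-negative, $||\vec{\Phi}_1||_1=(n-\ell)\ell$, and all statuses are \emph{probing}.

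For the inductive step, assume the three invariants at some round $t'<t\le r+1$. By hypothesis $d$ exceeds the degree of every node at round $t'$, i.e.\ $|N(x)|\le d-1$ for all $x$; combined with invariant (a) (no node sees a neighbor whose status differs from \emph{probing}, and no node receives more than $d-1$ messages), every node takes the averaging branch (Lines~\ref{leaderAlg}.\ref{potupdate} and~\ref{otherAlg}.\ref{newpot}), namely $\Phi_{t'+1}[x]=(1-|N(x)|/d)\,\Phi_{t'}[x]+(1/d)\sum_{i\in N(x)}\Phi_{t'}[i]$, and keeps status \emph{probing}. The only other status change available inside phase~$1$ is the threshold test performed after round $r$ (Lines~\ref{leaderAlg}.\ref{leaderthreshold} and~\ref{otherAlg}.\ref{otherthreshold}), which is irrelevant here since $\vec{\Phi}_{r+1}$ denotes, by convention, the vector obtained right after the round-$r$ update, before that test. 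Because $|N(x)|\le d-1$, the self-coefficient $1-|N(x)|/d$ is strictly positive, and by invariant (b) all the incoming terms are non-negative, so $\Phi_{t'+1}[x]\ge 0$ for every $x$; this re-establishes (a) and (b) at round $t'+1$ and, in particular, gives $||\vec{\Phi}_{t'+1}||_1=\sum_x\Phi_{t'+1}[x]$.

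It remains to verify invariant (c), i.e.\ that the averaging step conserves the total. Summing the update over all nodes, the change in the total potential is $(1/d)\big(\sum_x\sum_{i\in N(x)}\Phi_{t'}[i]-\sum_x|N(x)|\,\Phi_{t'}[x]\big)$. The key move is the symmetry of links assumed in the model ($i\in N(x)\iff x\in N(i)$): in the first double sum each node $i$ contributes $\Phi_{t'}[i]$ once for every neighbor it has, so $\sum_x\sum_{i\in N(x)}\Phi_{t'}[i]=\sum_i|N(i)|\,\Phi_{t'}[i]$, which cancels the second sum exactly. Hence $||\vec{\Phi}_{t'+1}||_1=||\vec{\Phi}_{t'}||_1=(n-\ell)\ell$, and chaining the inductive step from the base case up to round $t$ closes the argument.

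I do not expect a genuine obstacle: this is a conservation-law computation. The only two points needing care are (i) justifying that the degree hypothesis, together with the fact that all nodes enter phase~$1$ in status \emph{probing}, really forces every node to take the averaging branch (and so keep \emph{probing}) at every round up to $t$, so that the update formula applies uniformly, plus the bookkeeping around the virtual round $r+1$; and (ii) invoking link symmetry so the two correction terms cancel — and tracking invariant (b), since without non-negativity one could only conclude conservation of the signed sum rather than of the $L_1$ norm claimed.
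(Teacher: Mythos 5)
Your proposal is correct and follows essentially the same route as the paper: induction on the round index, with the link-symmetry observation ($i\in N(x)\iff x\in N(i)$) making the two correction terms cancel so the total potential is conserved. The extra invariants you carry (all nodes remain in \emph{probing} status, entries stay non-negative so the $L_1$ norm equals the plain sum) are bookkeeping the paper handles implicitly and in its separate claim bounding $0\leq\Phi_t[x]\leq\ell$, so they refine rather than change the argument.
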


\begin{proof}
For the first round the claim holds as the initial potential of each node is $\ell$ except the \ldr nodes that get $0$. That is, $||\vec{\Phi}_1||_1 = (n-\ell)\ell$.
For any given round $1< t\leq r+1$ in phase $1$ and any given node $x$, if $d$ is larger than the number of neighbors of $x$, the potential is updated only in Lines~\ref{leaderAlg}.\ref{potupdate} and~\ref{otherAlg}.\ref{newpot} as
\begin{align*}
\Phi_{t+1}[x] &= \Phi_{t}[x] + \sum_{i\in N_{t}[x]}\Phi_{t}[i]/d - |N_{t}[x]|\Phi_{t}[x]/d
\ .
\end{align*}
Where 
%$\Phi_{i}[j]$ is the potential of node $j$ at the beginning of round $i$, and 
$N_{t}[x]$ is the set of neighbors of node $x$ in round $t$.
Inductively, assume that the claim holds for some round $1\leq t\leq r$. 
We want to show that consequently it holds for $t+1$.
The potential for round $t+1$ is
\begin{align}
||\vec{\Phi}_{t+1}||_1 &= ||\vec{\Phi}_{t}||_1 + \frac{1}{d}\sum_{x\in V} \left( \sum_{y\in N_{t}[x]}\Phi_{t}[y] - |N_{t}[x]|\Phi_{t}[x] \right) \ .\label{manypotvecupdate}
\end{align}

In the ADN model, communication is symmetric. That is, for every pair of nodes $x,y\in V$ and round $t$, it is $x\in N_{t}[y] \iff y\in N_{t}[x]$. 
Fix a pair of nodes $x',y' \in V$ such that in round $t$ it is $y'\in N_{t}[x']$ and hence $x'\in N_{t}[y']$. 
Consider the summations in Equation~\ref{manypotvecupdate}.
Due to symmetric communication, we have that the potential $\Phi_{t}[y']$ appears with positive sign when the indices of the summations are $x=x'$ and $y=y'$, and with negative sign when the indices are $x=y'$ and $y=x'$. This observation applies to all pairs of nodes that communicate in any round $t$. 
Therefore, we can re-write Equation~\ref{manypotvecupdate} as
\begin{align*}
||\vec{\Phi}_{t+1}||_1 &= ||\vec{\Phi}_{t}||_1 + \frac{1}{d}\sum_{\substack{x,y\in V:\\y\in N_{t}[x]\\}} \bigg(\Phi_{t}[y] - \Phi_{t}[x] + \Phi_{t}[x] - \Phi_{t}[y] \bigg) \\
&= ||\vec{\Phi}_{t}||_1 \ .
\end{align*}
Thus, the claim follows.
%
%\hfill$\square$
\end{proof}

%%%%%%%%%%%%%%%%%%%%%%%%%%%%%%%%%%%%%%%%%%%%%%%%%%%%%%%%%

\begin{claim}
\label{manypotbounds}
Given an ADN of $n$ nodes running \nameD, for any round $t$ of any phase and any node $x$, it is $0\leq \Phi_t[x]\leq \ell$.  
\end{claim}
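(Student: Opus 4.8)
The plan is to prove both inequalities simultaneously by induction on the global round index, treating the whole execution (across all phases and epochs) as a single sequence of potential-update steps. The base case is the initialization: every \ldr node starts a phase with $\Phi = 0$ and every \notldr node with $\Phi = \ell$, so $0 \le \Phi \le \ell$ holds trivially at the start. I would also need to check the base case at the start of each subsequent phase: a \ldr node resets $\Phi \gets 0$ (Line~\ref{leaderAlg}.\ref{phireset}) and a \notldr node keeps whatever it had, which by the inductive hypothesis already lies in $[0,\ell]$; and at the start of each epoch the potentials are reset to $0$ (\ldr) or $\ell$ (\notldr) again. So the interesting content is the inductive step across a single communication round within a phase.

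For the inductive step I would distinguish the two ways a node can change its potential in a round. First, if the node executes the ``low'' branch (Lines~\ref{leaderAlg}.\ref{leadertoomany}/\ref{otherAlg}.\ref{othertoomany}, or the threshold check in Lines~\ref{leaderAlg}.\ref{leaderthreshold}/\ref{otherAlg}.\ref{otherthreshold}), it simply sets $\Phi \gets \ell$, which is in $[0,\ell]$; done. Second, and this is the main case, the node executes the averaging update
\[
\Phi_{t+1}[x] = \Phi_t[x] + \frac{1}{d}\sum_{i \in N_t[x]} \Phi_t[i] - \frac{|N_t[x]|}{d}\,\Phi_t[x]
= \left(1 - \frac{|N_t[x]|}{d}\right)\Phi_t[x] + \frac{1}{d}\sum_{i \in N_t[x]} \Phi_t[i].
\]
Crucially, this branch is only entered when the guard $|N_t[x]| \le d-1$ holds, so the coefficient $1 - |N_t[x]|/d$ is strictly positive, and all the coefficients $1 - |N_t[x]|/d,\ 1/d,\dots,1/d$ are nonnegative and sum to exactly $1$. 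Hence $\Phi_{t+1}[x]$ is a convex combination of $\Phi_t[x]$ and the $\Phi_t[i]$ for $i \in N_t[x]$, all of which lie in $[0,\ell]$ by the inductive hypothesis. A convex combination of values in $[0,\ell]$ stays in $[0,\ell]$, so $0 \le \Phi_{t+1}[x] \le \ell$. This closes the induction, and the claim follows for every round of every phase.

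The only subtlety — and the one place where the argument could go wrong if the algorithm were written slightly differently — is the interaction between nodes in different status values: a node in ``probing'' status could in principle receive and incorporate the potential $\Phi_i$ of a neighbor that has already switched to ``low'' and reset its potential to $\ell$. But since $\ell \in [0,\ell]$, that received value is still within the bound, so the convex-combination argument is unaffected; the inductive hypothesis applies uniformly to all nodes regardless of status. I therefore expect the proof to be routine; the one thing to be careful about is making explicit that the guard $|N_t[x]| \le d-1$ is exactly what guarantees the convexity (nonnegativity of the self-coefficient), and that the phase/epoch resets to $0$ or $\ell$ never violate the bound.
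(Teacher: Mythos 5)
Your proposal is correct and follows essentially the same route as the paper's proof: induction over rounds with a case split on node status, where the ``low'' branch trivially gives $\Phi=\ell$ and the averaging update stays in $[0,\ell]$ because the guard $|N|\leq d-1$ makes it a convex combination of values already in $[0,\ell]$ (the paper phrases this as the subtracted term being at most $\Phi_{t'}[x]$ and substituting the bound $\ell$, which is the same argument). Your handling of phase/epoch resets and of probing nodes receiving $\ell$ from ``low'' neighbors is a slightly more explicit bookkeeping of the same induction, so no gap.
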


\begin{proof}
If $t=1$ the potential of the \ldr nodes is $\Phi_1[0]=0$ and the potential of any \notldr node $x$ is $\Phi_1[x]=\ell$. Thus, the claim follows. 
Inductively, for any round $2<t\leq r+1$, we consider two cases according to node status. 
If a node $x$ is in alarm status ``low'' at the beginning of round $t$, then it is $\Phi_t[x]=\ell$ because, whenever the status of a node is updated to ``low'', its potential is set to $\ell$ and will not change until the next epoch (cf. Figures~\ref{leaderAlg} and~\ref{otherAlg}).

In the second case, if a node $x$ is in ``probing'' status at the beginning of round $t$, it means that it had its potential updated in all rounds $t'<t$ only in Lines~\ref{leaderAlg}.\ref{potupdate} or~\ref{otherAlg}.\ref{newpot} as
\begin{align*}
\Phi_{t'+1}[x] &= \Phi_{t'}[x] + \sum_{y\in N_{t'}[x]}\Phi_{t'}[y]/d - |N_{t'}[x]|\Phi_{t'}[x]/d \ .
\end{align*}
For all rounds $t'<t$, node $x$ exchanged potential with less than $d$ neighbors, because otherwise it would have been changed to alarm status ``low'' in Lines~\ref{leaderAlg}.\ref{alarminsecondleader} or~\ref{otherAlg}.\ref{alarminsecondother}.
Therefore it is $|N_{t'}[x]|\Phi_{t'}[x]/d < \Phi_{t'}[x]$ which implies $\Phi_t[x]\geq 0$. 
On the other hand, it can also be seen that $\Phi_t[x]\leq \ell$ because, for any $t'<t$, it is
\begin{align*}
\Phi_{t'+1}[x] &= \Phi_{t'}[x] + \sum_{y\in N_{t'}[x]}\Phi_{t'}[y]/d - |N_{t'}[x]|\Phi_{t'}[x]/d \ .
\end{align*}
In particular for $t'=t-1$, we know by inductive hypothesis that $ \Phi_{t-1}[v] \leq \ell$ for any $v\in V$. Replacing, in the latter equation we get that $\Phi_{t}[x] \leq \ell$.
\end{proof}

%%%%%%%%%%%%%%%%%%%%%%%%%%%%%%%%%%%%%%%%%%%%%%%%%%%%%%%%%%%%%%%%%%%%%%%%%%%

To show that if $k^{1+\epsilon}<n$ \nameD detects that the estimate is low, we focus on the first phase. We define a threshold $\tau$ and a number of rounds such that, after the first phase is completed, some nodes will have potential above $\tau$ and this can happen only if the estimate is low. Then we show that \ldr nodes receive an alarm indicating that.

First, we show an upper bound of at most $k^{1+\epsilon}$ nodes with potential at most $\tau$ at the end of the first phase (Lemma~\ref{manyunalarmed}).
Thus, given that $k^{1+\epsilon}<n$, we know that there is at least one node with potential above $\tau$ at the end of the first phase.
Second, we show that if the estimate is not low, that is $k\geq n$, then all nodes have potential at most $\tau$ at the end of the first phase (Lemma~\ref{nolowalarm}).  
That is, a potential above $\tau$ can only happen when indeed the estimate is low.
Finally, we show that if $k^{1+\epsilon}<n$ an alarm ``low'' initiated by nodes with potential above $\tau$ must be received after $k^{1+\epsilon}$ further rounds of communication (Lemma~\ref{manyalarmsoon}).

\begin{lemma}
\label{manyunalarmed}
For $\epsilon>0$,
after running the first phase of the \nameD protocol, 
there are at most $k^{1+\epsilon}$ nodes that 
have potential at most $\tau=\ell(1-\ell/k^{1+\epsilon})$.
\end{lemma}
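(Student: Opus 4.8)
The plan is to prove the bound by a conservation‑of‑potential counting argument, using the two claims just established. By Claim~\ref{manypotbounds} every node holds potential in $[0,\ell]$ in every round, and by Claim~\ref{manyconservation} the aggregate potential at the end of the first phase satisfies $||\vec{\Phi}_{r+1,1}||_1=(n-\ell)\ell$, provided $d$ exceeded the number of neighbors of every node throughout the phase. I would first note that the threshold check in Lines~\ref{leaderAlg}.\ref{leaderthreshold} and~\ref{otherAlg}.\ref{otherthreshold} only resets to $\ell$ those nodes whose potential exceeds $\tau$ (and $\ell>\tau$ since $\ell-\tau=\ell^2/k^{1+\epsilon}>0$), so the set $S$ of nodes holding potential at most $\tau$ after the first phase is exactly $\{x:\Phi_{r+1,1}[x]\le\tau\}$, and it is enough to bound $|S|$.

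The core step is short. Record the identity $\ell-\tau=\ell-\ell(1-\ell/k^{1+\epsilon})=\ell^2/k^{1+\epsilon}$. Now suppose for contradiction that $|S|>k^{1+\epsilon}$. Bounding the potential of each node in $S$ by $\tau$ and of each of the other $n-|S|$ nodes by $\ell$ (Claim~\ref{manypotbounds}), the aggregate potential is at most $|S|\tau+(n-|S|)\ell=n\ell-|S|(\ell-\tau)$; since $\ell-\tau>0$ and $|S|>k^{1+\epsilon}$ this is strictly less than $n\ell-k^{1+\epsilon}(\ell-\tau)=n\ell-\ell^2=(n-\ell)\ell$, contradicting Claim~\ref{manyconservation}. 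Hence $|S|\le k^{1+\epsilon}$, which is the lemma.

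The delicate point, and the one I expect to be the real obstacle, is the hypothesis that $d$ upper‑bounds the dynamic degree during the first phase, which is what Claim~\ref{manyconservation} requires. If it holds the proof above is complete. If it fails, then at some round $t^*\le r$ some node receives at least $d$ messages and is moved to status \emph{low} with potential $\ell>\tau$ (Lines~\ref{leaderAlg}.\ref{leadertoomany} and~\ref{otherAlg}.\ref{othertoomany}); from that round on the aggregate potential is no longer guaranteed to stay at $(n-\ell)\ell$, so the counting bound can degrade. I would dispatch this case by observing that it is precisely the degree‑alarm case~2) of the algorithm, so the estimate is already being detected as low by other means, and that the ``low'' status raised at $t^*$ then propagates: by $1$-interval connectivity, while not all nodes are low the connected communication graph contains an edge from the low set to its complement, so at least one more node becomes low each subsequent round; this, together with the counting argument applied at round $t^*$ (where Claim~\ref{manyconservation} still applies), keeps $|S|$ within $k^{1+\epsilon}$ at the end of the phase. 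For the write‑up the cleanest option may be to prove and invoke the lemma under the standing assumption that $d$ is a valid upper bound on the dynamic degree throughout the phase, since the complementary situation does not require this lemma at all.
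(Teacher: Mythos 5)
Your main case is correct and is in substance the paper's own argument, just phrased in contrapositive: the paper works with the slack $s_{r+1}[x]=\ell-\Phi_{r+1}[x]$, notes that each node of $L$ has slack at least $\ell-\tau=\ell^2/k^{1+\epsilon}$ while the total slack is $\ell^2$ when potential is conserved, and concludes $|L|\le\ell^2/(\ell-\tau)=k^{1+\epsilon}$ --- the same counting you do directly with potentials via Claims~\ref{manyconservation} and~\ref{manypotbounds}.

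The gap is in how you treat executions in which some node sees at least $d$ neighbors (or receives a ``low'' status) during the first phase. This case cannot be assumed away: the lemma is invoked precisely when $k^{1+\epsilon}<n$ with $d=k^{1+\epsilon}$, so $d$ need not bound the dynamic degree, and Lemma~\ref{manyalarmsoon} needs the bound $|L|\le k^{1+\epsilon}$ \emph{at the end of phase~1} to conclude that all remaining nodes are alarmed within $k^{1+\epsilon}$ further rounds; the mere fact that a degree alarm fired somewhere during the phase does not substitute for this, since the alarm may fire in the last rounds and propagation gains only one node per round, leaving no bound on how many nodes still have potential at most $\tau$ when the phase ends. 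Your sketch for this case --- applying the counting argument at the first alarm round $t^*$ and then invoking propagation of ``low'' --- does not close it: after $t^*$ the probing nodes keep averaging, so nodes above $\tau$ at $t^*$ can fall to or below $\tau$ by round $r+1$, and the growth of the alarmed set does not prevent that. The paper closes the case by a monotonicity argument on the total potential: when a node $x$ raises the alarm it resets its potential to $\ell$ (Lines~\ref{leaderAlg}.\ref{leadertoomany} and~\ref{otherAlg}.\ref{othertoomany}) while its neighbors still absorb the at least $d$ shares of size $\Phi_t[x]/d$ that it broadcast, so every alarm event can only increase $||\vec{\Phi}||_1$ (equivalently, decrease the total slack below $\ell^2$); hence $||\vec{\Phi}_{r+1}||_1\ge(n-\ell)\ell$ holds in \emph{every} execution, and your counting then goes through unconditionally at the end of the phase. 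That monotonicity step is what you need in place of the proposed standing assumption, which would weaken the lemma exactly in the regime where it is used.
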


\begin{proof}
We define the \emph{slack} of node $x$ at the beginning of round $t$ as $s_t[x]=\ell-\Phi_t[x]$ and the vector of slacks at the beginning of round $t$ as $\vec{s}_t$. In words, the slack of a node is the ``room'' for additional potential up to $\ell$. 
Recall that the overall potential at the beginning of round $1$ of phase $1$ is $||\vec{\Phi}_1||_1=(n-\ell)\ell$. 
Also notice that for any round and any node $x$ the potential of $x$ is non-negative as shown in Claim~\ref{manypotbounds}.
Therefore, the overall slack with respect to the maximum potential that could be held by all the $n$ nodes at the beginning of round $1$ is $||\vec{s}_1||_1=\ell^2$.

Consider a partition of the set of nodes $\{L,H\}$, where $L$ is the set of nodes with potential at most $\tau$ at the end of the first phase, before the \ldr nodes consume their own potential in Line~\ref{leaderAlg}.\ref{consume}. That is, $\Phi_{r+1}[x] \leq \tau$ for all $x\in L$ (and $\Phi_{r+1}[y] > \tau$ for all $y\in H$).
Assume that the slack held by nodes in $L$ at the end of the first phase is at most the overall slack at the beginning of the phase. That is, $\sum_{x\in L}s_{r+1}[x] \leq ||\vec{s}_1||_1 = \ell^2$. 
By definition of $L$, we have that for each node $x\in L$ it is $s_{r+1}[x]=(\ell-\Phi_{r+1}[x])\geq \ell-\tau$.
Therefore,
$|L|(\ell-\tau) \leq \sum_{x\in L} s_{r+1}[x] \leq \ell^2$.
Thus, $|L| \leq \ell^2/(\ell-\tau) = k^{1+\epsilon}$ (because $\tau=\ell(1-\ell/k^{1+\epsilon})$) and the claim follows.

Then, to complete the proof, it remains to show that $\sum_{x\in L}s_{r+1}[x]\leq \ell^2$.
Let the scenario where $d$ is larger than the number of neighbors that each node has in each round of the first phase be called ``case 1'', and ``case 2'' otherwise.
Claim~\ref{manyconservation} shows that in case 1 at the end of the first phase it is $||\vec{\Phi}_{r+1}||_1=\ell(n-\ell)$. Therefore, the slack held by all nodes is $||\vec{s}_{r+1}||_1=\ell^2$ and the slack held by nodes in $L\subseteq V$ is $\sum_{x\in L}s_{r+1}[x]\leq \ell^2$, proving the claim for case 1. We show now that, in fact, case 1 is a worst-case scenario. That is, in the complementary case 2 where some nodes have $d$ neighbors or more in one or more rounds, the slack is even smaller. To compare both scenarios we denote the slack for each round $t$, each node $x$, and each case $i$ as $s^{(i)}_t[x]$.

Assume that some node $x$ is the first one to have $d'>d-1$ neighbors. Let $1\leq t\leq r$ be the round of the first phase when this event happened.
We claim that $||\vec{s}_{t+1}^{(2)}||_1\leq ||\vec{s}_{t+1}^{(1)}||_1$. The reason is the following.
Given that more than $d-1$ potentials are received, node $x$ increases its potential to $\ell$ for the rest of the epoch (cf. Lines~\ref{leaderAlg}.\ref{leadertoomany} and~\ref{otherAlg}.\ref{othertoomany}). That is, the slack of $x$ is $s_{t+1}^{(2)}[x]\leq s_t^{(2)}[x]=s_t^{(1)}[x]$. 
%Then, counting only this effect, the overall slack $||\vec{s}_{t+1}^{(2)}||_1\leq ||\vec{s}_{t+1}^{(1)}||_1$. 
Additionally, the potential shared by $x$ with all neighbors during round $t$ is $d'\Phi_{t}[x]/d>\Phi_{t}[x](1-1/d)$ (cf. Lines~\ref{leaderAlg}.\ref{potupdate} and~\ref{otherAlg}.\ref{newpot}). That is, the potential shared by $x$ with neighbors in case 2 is more than the potential that $x$ would have shared in case 1. 
Then, combining both effects (the relative increase in potential of $x$ and its neighbors') the overall slack is $||\vec{s}_{t+1}^{(2)}||_1\leq ||\vec{s}_{t+1}^{(1)}||_1$. The same argument applies to all other nodes with $d$ or more neighbors in round $t$.

Additionally, for any round $t'$ of the first phase, such that $t<t'\leq r$, we have to also consider the case of a node $y$ that, although it does not receive more than $d-1$ potentials, it moves to alarm status ``low'' because it has received such status in round $t'$. 
Then, notice that the potential of $y$ is $\Phi_{t'+1}[y]=\ell \geq \Phi_{t'}[y]$, and it will stay in $\ell$ for the rest of the epoch (cf. Lines~\ref{leaderAlg}.\ref{alarminsecondleader} and~\ref{otherAlg}.\ref{alarminsecondother}). Therefore, the slack of $y$ is $s_{t+1}^{(2)}[y]\leq s_{t+1}^{(1)}[y]$. 

Combining all the effects studied over all rounds, the slack at the end of the first phase is $||\vec{s}_{r+1}^{(2)}||_1\leq ||\vec{s}_{r+1}^{(1)}||_1$. 
Given that $L\subseteq V$, it is $\sum_{x\in L}s_{r+1}^{(2)}[x] \leq ||\vec{s}_{r+1}^{(2)}||_1\leq ||\vec{s}_{r+1}^{(1)}||_1 \leq \ell^2$ which completes the proof.
%
%\hfill$\square$
\end{proof}

%%%%%%%%%%%%%%%%%%%%%%%%%%%%%%%%%%%%%%%%%%%%%%%%%%%%%%%%%%%%%%%%%%%%%%%%%%%

\begin{lemma}
\label{nolowalarm}
If $k\geq n$, $r\geq(4+2\epsilon- 2\ln(k^\epsilon-1)/\ln k)dk^2\ln k$, and $\epsilon>0$, given that $k>\ell\geq 1$, at the end of the first phase no individual node should have potential larger than $\tau=\ell(1-\ell/k^{1+\epsilon})$. 
\end{lemma}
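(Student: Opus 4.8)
The plan is to re-use, for a single phase, the lazy-random-walk analysis already carried out in Lemma~\ref{manycorrect}, this time choosing the number of rounds $r$ just large enough that the resulting per-node deviation from uniformity, once scaled back up by the (conserved) total potential, stays below the threshold $\tau$.

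First I would argue that throughout the whole first phase the potentials evolve exactly as a $d$-lazy random walk. Since $k\ge n$ and the parameter $d=d(k)$ satisfies $d\ge k^{1+\epsilon}\ge k\ge n$ (by the choice of parameters in Theorem~\ref{thm:many}), every node has at most $n-1\le d-1$ neighbors in every round, so the guard $|N|\le d-1$ in Lines~\ref{leaderAlg}.\ref{potupdate} and~\ref{otherAlg}.\ref{newpot} never fails; moreover no node can acquire the ``low'' status during phase~$1$, since the only first-phase threshold test is executed only after round $r+1$. Hence the averaging rule is applied in every round, and Claim~\ref{manyconservation} gives $\|\vec\Phi_{r+1}\|_1=(n-\ell)\ell$. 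Writing $\vec\Pi_t=\vec\Phi_t/\|\vec\Phi_t\|_1$, the initial distribution $\vec\Pi_1$ places mass $1/(n-\ell)$ on each \notldr node and $0$ on each \ldr node, so $\|\vec\Pi_1-\vec I/n\|_2^2\le 1$.

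Next I would invoke Theorem~\ref{koucky} exactly as in Lemmas~\ref{manycorrect} and~\ref{manyksquare}: after $r$ rounds, $\|\vec\Pi_{r+1}-\vec I/n\|_2^2\le\exp\bigl(-r/(d\mathcal{D}n)\bigr)$. Since $\mathcal{D}<n\le k$ gives $d\mathcal{D}n<dk^2$, the hypothesis on $r$ forces $r/(d\mathcal{D}n)>\bigl(4+2\epsilon-2\ln(k^\epsilon-1)/\ln k\bigr)\ln k$; using the identity $k^{2\ln(k^\epsilon-1)/\ln k}=(k^\epsilon-1)^2$ this collapses to
\[
\Bigl\|\vec\Pi_{r+1}-\frac{\vec I}{n}\Bigr\|_2^2 < \frac{(k^\epsilon-1)^2}{k^{4+2\epsilon}},
\qquad\text{and hence}\qquad
\Bigl|\Pi_{r+1}[j]-\frac1n\Bigr| < \frac{k^\epsilon-1}{k^{2+\epsilon}}
\]
for every node $j$. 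Then I would rescale by the conserved total $(n-\ell)\ell$ and use $n\le k$ twice (as $(n-\ell)/n\le 1-\ell/k$ and $(n-\ell)\ell<k\ell$) to get $\Phi_{r+1}[j]<\ell-\ell^2/k+\ell/k-\ell/k^{1+\epsilon}$, and finish by checking that this quantity is at most $\tau=\ell(1-\ell/k^{1+\epsilon})$; after cancellation this is the inequality $(\ell-\ell^2)(1/k-1/k^{1+\epsilon})\le 0$, which holds because $\ell\ge 1$ and $k^{1+\epsilon}>k$. In particular $\Phi_{r+1}[j]<\tau$ for all $j$, so the threshold test never fires.

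The mixing estimate is essentially word-for-word Lemma~\ref{manycorrect}, so I do not expect it to be the obstacle. The one point that needs care — and the reason the exponent $4+2\epsilon-2\ln(k^\epsilon-1)/\ln k$ in the bound on $r$ looks the way it does — is the final accounting: the ``ideal'' per-node potential $\ell(n-\ell)/n$ must be replaced by the purely-$k$ upper bound $\ell-\ell^2/k$, and then the tolerated mixing error, after being magnified by $(n-\ell)\ell<k\ell$, may contribute at most $\ell/k-\ell/k^{1+\epsilon}$ without pushing the total past $\tau$; one must choose $r$ precisely so that $|\Pi_{r+1}[j]-1/n|<(k^\epsilon-1)/k^{2+\epsilon}$, which is exactly what makes the three quantities ($\tau$, the ideal potential bound, and the permitted deviation) line up. Everything else is bookkeeping inherited from the earlier lemmas and claims.
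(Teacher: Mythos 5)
Your proposal is correct and follows essentially the same route as the paper's proof: conservation of total potential (Claim~\ref{manyconservation}) plus the mixing bound of Theorem~\ref{koucky} with the stated $r$, then rescaling the per-node deviation by $\ell(n-\ell)$ and comparing with $\tau$ via an inequality that holds because $\ell\geq 1$. Your bookkeeping (keeping the deviation relative to $1/n$ and passing to $k$ only at the end) is in fact slightly cleaner than the paper's, which switches between $1/n$ and $1/k$ midway, but the argument is the same.
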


\begin{proof}
Using Theorem~\ref{koucky}, we know that after phase $1$ of $r\geq(4+2\epsilon- 2\ln(k^\epsilon-1)/\ln k)dk^2\ln k$ rounds, the distribution is such that 
\begin{align*}
\left|\left|\vec{\Pi}_{r+1} - \frac{\vec{I}}{n}\right|\right|_2^2 
&\leq \left(1-\frac{1}{d{\cal D}n}\right)^r\left|\left|\vec{\Pi}_1 - \frac{\vec{I}}{n}\right|\right|_2^2\\
&\leq \exp\left(-\frac{r}{d{\cal D}n}\right)\nonumber\\
&\leq \exp\left(-\frac{(4+2\epsilon - 2\ln(k^\epsilon-1)/\ln k)dk^2\ln k}{d{\cal D}n}\right).
\end{align*}
Given that $k\geq n>{\cal D}$, we have that
\begin{align*}
\left|\left|\vec{\Pi}_{r+1} - \frac{\vec{I}}{k}\right|\right|_2^2 
&\leq \exp\left(-(4+2\epsilon- 2\ln(k^\epsilon-1)/\ln k)\ln k\right)\nonumber\\
&= 1/k^{4+2\epsilon- 2\ln(k^\epsilon-1)/\ln k} \ .\nonumber
\end{align*}

Given that for any node $j$, it is $(\Phi_{r+1}[j] - 1/k)^2 \leq \left|\left|\vec{\Phi}_{r+1} - \frac{\vec{I}}{k}\right|\right|_2^2$, we have that $(\Phi_{r+1}[j]-1/k)^2 \leq 1/k^{4+2\epsilon- 2\ln(k^\epsilon-1)/\ln k}$. Hence, it is  $\Phi_{r+1}[j] \leq 1/k + 1/k^{2+\epsilon- \ln(k^\epsilon-1)/\ln k}$ for any node $j$.
Moreover, if $d\geq k\geq n$ the total potential in the network would be $\ell(n-\ell)$ (cf. Claim~\ref{manyconservation}) and no individual node should have potential larger than $\ell(n-\ell)(1/k + 1/k^{2+\epsilon- \ln(k^\epsilon-1)/\ln k}) \leq  \ell(k-\ell)(1/k + 1/k^{2+\epsilon- \ln(k^\epsilon-1)/\ln k})$. We show that the latter is at most $\tau=\ell(1-\ell/k^{1+\epsilon})$ as follows:
\begin{align*}
\ell(k-\ell)\left(\frac{1}{k} + \frac{1}{k^{2+\epsilon- \ln(k^\epsilon-1)/\ln k}}\right) &\leq \ell\left(1-\frac{\ell}{k^{1+\epsilon}}\right)\\
(k-\ell)\left(\frac{1}{k} + \frac{1}{k^{2+\epsilon- \ln(k^\epsilon-1)/\ln k}}\right) &\leq 1-\frac{\ell}{k^{1+\epsilon}}\\
1-\frac{\ell}{k} + \frac{k-\ell}{k^{2+\epsilon- \ln(k^\epsilon-1)/\ln k}} &\leq 1-\frac{\ell}{k^{1+\epsilon}}\\
\frac{k-\ell}{k^{2+\epsilon- \ln(k^\epsilon-1)/\ln k}} &\leq \frac{\ell}{k}-\frac{\ell}{k^{1+\epsilon}}\\
\frac{k-\ell}{k^{2+\epsilon- \ln(k^\epsilon-1)/\ln k}} &\leq \ell\frac{k^\epsilon-1}{k^{1+\epsilon}}\\
k-\ell &\leq \ell (k^\epsilon-1)k^{1- \ln(k^\epsilon-1)/\ln k} \ , \textrm{ given that $k>\ell$,}  \\
\ln(k-\ell) &\leq \ln \ell + \ln(k^\epsilon-1) + \left(1- \frac{\ln(k^\epsilon-1)}{\ln k}\right) \ln k \\
\ln(k-\ell) &\leq \ln (\ell k) \ , \textrm{ given that $k>\ell$,}\\
k-\ell &\leq \ell k \ .
\end{align*}
And the latter is true because $\ell \geq1$.
%And the latter is true for any $k\geq \ell >1$.
%\hfill$\square$
\end{proof}

%%%%%%%%%%%%%%%%%%%%%%%%%%%%%%%%%%%%%%%%%%%%%%%%%%%%%%%%%%%%%%%%%%%%%%%%%%%

%In the following lemma, we show that if $k^{1+\epsilon}<n$ \ldr nodes detect the error. 
%The proof is left to the Appendix for brevity.

The previous lemma shows that, if the estimate is ``not-low'' ($k\geq n$), at the end of the first phase all nodes must have ``low'' potential ($\Phi_{1,r+1}\leq\tau$). 
(Notice the inverse relation between estimate and potential.)
In the following lemma we show that if $k^{1+\epsilon}<n$ (i.e. low estimate) there are some nodes with $\Phi_{1,r+1}>\tau$ (i.e. high potential), and that all the other nodes will know this within the following phase.

\begin{lemma}
\label{manyalarmsoon}
If $k^{1+\epsilon}<n$, 
$0<\epsilon\leq 1+\log_k 4d$,
and $r\geq(4+2\epsilon- 2\ln(k^\epsilon-1)/\ln k)dk^2\ln k$, 
%the following holds.
%If some node has potential larger than $\tau=\ell(1-\ell/k^{1+\epsilon})$ at the end of the first phase,
within the following $k^{1+\epsilon}$ rounds after the first phase of the \nameD protocol, 
all \ldr nodes have received an alarm status ``low''.
\end{lemma}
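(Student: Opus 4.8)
The plan is to combine the size bound on the set of ``unalarmed'' nodes given by Lemma~\ref{manyunalarmed} with a short connectivity argument describing how the status ``low'' propagates. Since we assume $k^{1+\epsilon}<n$, Lemma~\ref{manyunalarmed} guarantees that at the end of the first phase the set $L$ of nodes with potential at most $\tau=\ell(1-\ell/k^{1+\epsilon})$ satisfies $|L|\le k^{1+\epsilon}<n$, so the complementary set $H=V\setminus L$ is nonempty. Every node in $H$ has $\Phi_{r+1}>\tau$ and hence passes the threshold test (Lines~\ref{leaderAlg}.\ref{leaderthreshold} and~\ref{otherAlg}.\ref{otherthreshold}), setting its status to ``low'' at the end of phase~$1$; moreover, by inspection of the pseudocode and Claim~\ref{manypotbounds}, a node that is already ``low'' has potential $\ell>\tau$ and thus lies in $H$ too, so $L$ is precisely the set of nodes still in status ``probing'' after phase~$1$. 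Since statuses are never reset inside an epoch (only \ldr potentials are consumed/zeroed between phases), from the start of phase~$2$ on there is a nonempty ``alarmed'' set that only grows.

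Next I would show that the alarmed set grows by at least one node per round. In each round of phase~$2$, a node still in status ``probing'' first broadcasts its status and, upon receiving any non-``probing'' status from a neighbor, enters the \textbf{else} branch (Lines~\ref{leaderAlg}.\ref{alarminsecondleader} and~\ref{otherAlg}.\ref{alarminsecondother}) and adopts status ``low''. Let $L_j$ be the set of nodes still ``probing'' after $j$ rounds of phase~$2$, so $L_0=L$ and $|L_0|\le k^{1+\epsilon}$. If $L_j\neq\emptyset$, then by $1$-interval connectivity the round-$(j{+}1)$ topology contains an edge between $L_j$ and the nonempty already-alarmed set $V\setminus L_j$, so the ``probing'' endpoint of that edge turns ``low'' and $|L_{j+1}|\le|L_j|-1$. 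Iterating, $L_j=\emptyset$ for some $j\le|L_0|\le k^{1+\epsilon}$, i.e.\ within $k^{1+\epsilon}$ rounds after phase~$1$ every node, and in particular every \ldr node, holds status ``low''.

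The only point requiring care is confirming that these $k^{1+\epsilon}$ rounds take place while nodes are still running the potential/status loop of the same epoch, so that the status-adoption rule above is active and no epoch-boundary reset intervenes; this is where the hypothesis $0<\epsilon\le 1+\log_k 4d$ enters, since it yields $k^{1+\epsilon}\le 4dk^2$, whereas the assumed lower bound $r\ge(4+2\epsilon-2\ln(k^\epsilon-1)/\ln k)\,dk^2\ln k$ --- whose coefficient exceeds $4$ because $\ln(k^\epsilon-1)<\epsilon\ln k$ --- shows that a single phase already spans more than $k^{1+\epsilon}$ rounds. I expect this inequality bookkeeping (and the small-$k$ corner cases) to be the only mildly fiddly part: the substantive content is just the observation that a small non-alarmed set must lose a member every round under connectivity, which needs no probabilistic or spectral input, so the lemma is essentially a corollary of Lemma~\ref{manyunalarmed} together with $1$-interval connectivity.
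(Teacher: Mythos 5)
Your proposal matches the paper's own argument: apply Lemma~\ref{manyunalarmed} to get a nonempty set $H$ of nodes exceeding $\tau$ that turn ``low'' at the end of phase~$1$, then use $1$-interval connectivity to argue the alarmed set absorbs at least one node of $L$ per round, and finish by checking that phase~$2$ has more than $4dk^2\ge k^{1+\epsilon}$ rounds (using $\epsilon\le 1+\log_k 4d$ and $\ln(k^\epsilon-1)<\epsilon\ln k$). The extra bookkeeping you add (the sets $L_j$ and the observation that statuses are not reset within an epoch) is just a more explicit rendering of the same proof.
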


\begin{proof}
Consider a partition $\{L,H\}$ of the set of nodes, 
where $L$ is the set of nodes with potential at most $\tau$ at the end of the first phase. 
As shown in Lemma~\ref{manyunalarmed} the size of $L$ is at most $k^{1+\epsilon}$, and because $k^{1+\epsilon}<n$ the size of $H$ is at least $1$.

Based on their ``high'' potential (above $\tau$),
and the property proved in Lemma~\ref{nolowalarm} (that in case of not-low estimate, $k\geq n$, there would not be any node like them),
all nodes in $H$ 
move to alarm status ``low'' at the end of phase $1$ (cf. Lines~\ref{otherAlg}.\ref{otherthreshold} or~\ref{leaderAlg}.\ref{leaderthreshold}). 
(Notice the inverse relation between potential and status, which in turn indicates whether the estimate is low or not.)
Nodes in $L$ may not have low status, but  due to $1$-interval connectivity at least one new node from $L$ moves to status ``low'' in each of the following rounds (Lines~\ref{otherAlg}.\ref{alarminsecondother} or~\ref{leaderAlg}.\ref{alarminsecondleader}).

The number of rounds of the second phase is at least 
\begin{align*}
\left(4+2\epsilon- 2\frac{\ln(k^\epsilon-1)}{\ln k}\right)dk^2\ln k 
&=  \left(4+2\epsilon- 2\log_k(k^\epsilon-1)\right)dk^2\ln k \\
&> \left(4+2\epsilon- 2\log_kk^\epsilon\right)dk^2\ln k \\
&> 4dk^2 \ .
\end{align*}
Given that the size of $L$ is at most $k^{1+\epsilon}$, to complete the proof it is enough to show that $4dk^2 \geq k^{1+\epsilon}$, which is true for $\epsilon\leq 1+\log_k 4d$.
%
%\begin{align*}
%\left(4+2\epsilon- 2\frac{\ln(k^\epsilon-1)}{\ln k}\right)dk^2\ln k 
%&=  \left(4+2\epsilon- 2\log_k(k^\epsilon-1)\right)dk^2\ln k \\
%&> \left(4+2\epsilon- 2\log_kk^\epsilon\right)dk^2\ln k \\
%&> 4dk^2 \\
%-----\\
%4dk^2 &\geq k^{1+\epsilon} \\
%4dk &\geq k^{\epsilon} \\
%\log_k 4dk &\geq \epsilon \\
%1+\log_k 4d &\geq \epsilon 
%\end{align*}
%
%\hfill$\square$
\end{proof}

%%%%%%%%%%%%%%%%%%%%%%%%%%%%%%%%%%%%%%%%%%%%%%%%%%%%%%%%%%%%%%%%%%%%%%%%%%%

%\mm{MODIFIED respect to ICALP19 version -----------------------\\ modifications include: replaced n with d in theorem parameters, lemma proving detection when $k>n$, corollary updated, }

Finally, in the following lemma we show that if $k>n$, \ldr nodes detect that the potential consumed is too low for the estimate $k$ to be correct. 
\begin{lemma}
\label{kaboven}
Under the following conditions
$d>k>n>\ell>0$, 
for
$$\beta \geq \log_k (n(2k^{\delta} - 1)) \ ,$$ $$\delta > \log_k \frac{k^{\gamma}(n-\ell)}{k^\gamma-(n-\ell)-1} \ ,$$ %and 
$$\gamma>\log_k (n-\ell+1) \ ,$$
after running the \nameD protocol for $p$ phases and $r$ rounds such that
$$p\leq 2\delta\ln k \frac{1-\ell\left(\frac{1}{n}-\frac{1}{k^{\beta}}\right)}{\ell\left(\frac{1}{n}-\frac{1}{k^{\beta}}\right)} \ ,$$
$$r\geq2\beta dk^{2}\ln k \ ,$$
the potential $\rho$ consumed by any \ldr node is $$\rho <  (k-\ell)\left(1-\frac{1}{k^\gamma}\right) \ .$$
\end{lemma}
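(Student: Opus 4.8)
The plan is to mirror the proof of Lemma~\ref{manyksquare}, but now to extract an \emph{upper} bound on $\rho$ out of the \emph{upper} bound on the number of phases $p$. Since $d>k>n$, the parameter $d$ strictly exceeds every node's degree in every round, so the ``too many neighbors'' alarm is never triggered; combined with the argument behind Lemma~\ref{nolowalarm} in the regime $k\ge n$, no node enters status ``low'' during the epoch either, so inside each phase the potential evolves exactly as a $d$-lazy random walk on the evolving graph. First I would invoke Theorem~\ref{koucky} using $k>n>\mathcal{D}$: after $r\ge 2\beta dk^{2}\ln k$ rounds of a phase, $\left\|\vec{\Pi}_{r+1}-\vec{I}/n\right\|_2^2\le\exp(-r/(d\mathcal{D}n))\le\exp(-2\beta\ln k)=k^{-2\beta}$, so $1/n-1/k^{\beta}\le\Pi_{r+1}[i]\le 1/n+1/k^{\beta}$ for every node $i$. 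Hence in each phase a \ldr node consumes at most a $(1/n+1/k^{\beta})$ fraction of the current total potential, while the total potential drops by at least an $\ell(1/n-1/k^{\beta})$ fraction, because the $\ell$ \ldr nodes remove disjoint parts of it. Since the initial total is $\ell(n-\ell)$, the total at the start of phase $i$ is at most $\ell(n-\ell)\bigl(1-\ell(1/n-1/k^{\beta})\bigr)^{i-1}$.

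Then I would sum the resulting geometric series over the $p$ phases, obtaining
\[
\rho\le \ell(n-\ell)\Bigl(\tfrac1n+\tfrac1{k^{\beta}}\Bigr)\sum_{i=0}^{p-1}\Bigl(1-\ell\bigl(\tfrac1n-\tfrac1{k^{\beta}}\bigr)\Bigr)^{i}
=(n-\ell)\,\frac{k^{\beta}+n}{k^{\beta}-n}\,\Bigl(1-\bigl(1-\ell(\tfrac1n-\tfrac1{k^{\beta}})\bigr)^{p}\Bigr),
\]
where $0<\ell(1/n-1/k^{\beta})<1$ follows from $k>\ell$, $n>\ell$ and $k^{\beta}>n$ (the latter from $\beta\ge\log_k(n(2k^{\delta}-1))$). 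To exploit the \emph{upper} bound on $p$, I would use the elementary inequality $\ln(1-x)\ge -x/(1-x)$ valid for $0<x<1$: with $x=\ell(1/n-1/k^{\beta})$ and $p\le 2\delta\ln k\,(1-x)/x$ this gives $p\cdot(-\ln(1-x))\le 2\delta\ln k$, i.e. $(1-x)^{p}\ge k^{-2\delta}$, hence $1-(1-x)^{p}\le 1-k^{-2\delta}=(1-k^{-\delta})(1+k^{-\delta})$. The condition $\beta\ge\log_k(n(2k^{\delta}-1))$ makes $\frac{k^{\beta}+n}{k^{\beta}-n}\le\frac{k^{\delta}}{k^{\delta}-1}$, and multiplying these three factors telescopes the bound to the clean form $\rho\le(n-\ell)(1+k^{-\delta})$.

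The last step is purely algebraic: show $(n-\ell)(1+k^{-\delta})<(k-\ell)(1-k^{-\gamma})$. The hypothesis $\gamma>\log_k(n-\ell+1)$ guarantees $k^{\gamma}-(n-\ell)-1>0$, so the fraction in the lower bound on $\delta$ is well defined and positive; then $\delta>\log_k\frac{k^{\gamma}(n-\ell)}{k^{\gamma}-(n-\ell)-1}$ is equivalent to $k^{-\delta}<\frac{k^{\gamma}-(n-\ell)-1}{k^{\gamma}(n-\ell)}$, which yields $(n-\ell)(1+k^{-\delta})<(n-\ell)+1-(n-\ell+1)k^{-\gamma}$. Finally $(n-\ell)+1-(n-\ell+1)k^{-\gamma}\le(k-\ell)(1-k^{-\gamma})$ rearranges to $(n+1-k)(1-k^{-\gamma})\le 0$, which holds since $k>n$ and $k>1$. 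Chaining the strict and non-strict steps gives $\rho<(k-\ell)(1-k^{-\gamma})$.

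I expect the main obstacle to be bookkeeping rather than anything conceptual: keeping the direction of every inequality correct when converting the \emph{upper} bound on $p$ into the tail bound $(1-x)^{p}\ge k^{-2\delta}$ (this is where $\ln(1-x)\ge -x/(1-x)$ is needed, replacing the $1-x\le e^{-x}$ used in Lemma~\ref{manyksquare}), and verifying that the three parameter conditions on $\beta,\delta,\gamma$ are exactly what is needed to collapse the geometric-series bound to $(n-\ell)(1+k^{-\delta})$ and then below $(k-\ell)(1-k^{-\gamma})$. A secondary point to check is that the stated lower bound on $r$ is large enough for the no-false-alarm argument of Lemma~\ref{nolowalarm} when $k\ge n$, so that the $d$-lazy random walk analysis applies throughout the epoch.
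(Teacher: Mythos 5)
Your proposal is correct and follows essentially the same route as the paper's proof: the same invocation of Theorem~\ref{koucky} with $k>n>\mathcal{D}$, the same geometric-series bound on $\rho$ with per-phase consumption at most $(1/n+1/k^{\beta})$ and total-potential decay at least $\ell(1/n-1/k^{\beta})$, the same use of $1-x\ge e^{-x/(1-x)}$ to turn the upper bound on $p$ into $1-(1-x)^p\le 1-k^{-2\delta}$, the same intermediate bound $(n-\ell)(1+k^{-\delta})$ under $\beta\ge\log_k(n(2k^{\delta}-1))$, and the same final comparison with $(k-\ell)(1-k^{-\gamma})$ (which you simply carry out more explicitly than the paper does). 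No gaps.
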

\begin{proof}
Given that $d>k>n$, we can use Theorem~\ref{koucky} as in Lemma~\ref{manycorrect} to show that after a phase of 
$r\geq2\beta dk^{2}\ln k$ rounds the distribution is such that 
\begin{align*}
\left|\left|\vec{\Pi}_{r+1} - \frac{\vec{I}}{n}\right|\right|_2^2 
&\leq \left(1-\frac{1}{d{\cal D}n}\right)^r\left|\left|\vec{\Pi}_1 - \frac{\vec{I}}{n}\right|\right|_2^2\nonumber\\
&\leq \exp\left(-\frac{r}{d{\cal D}n}\right)\nonumber\\
&\leq \exp\left(-\frac{2\beta dk^{2}\ln k}{d{\cal D}n}\right).
\end{align*}
Given that $k> n > {\cal D}$, we have that
\begin{align*}
\left|\left|\vec{\Pi}_{r+1} - \frac{\vec{I}}{n}\right|\right|_2^2 
&\leq \exp\left(-2\beta\ln k\right)\nonumber\\
&= \frac{1}{k^{2\beta}} \ .\nonumber
\end{align*}
For any node $i$, given that $(\Pi_{r+1}[i] - 1/n)^2 \leq \left|\left|\vec{\Pi}_{r+1} - \frac{\vec{I}}{n}\right|\right|_2^2$ we have that $(\Pi_{r+1}[i]-1/n)^2 \leq 1/k^{2\beta}$ and hence $\Pi_{r+1}[i] \geq 1/n - 1/k^{\beta}$.
The latter is true for any initial distribution.
Therefore, after each phase a \ldr node consumes 
at most $1/n+1/k^{\beta}$ fraction of the total
potential in the system, and the total potential in the system drops by at least $\ell(1/n-1/k^{\beta})$ fraction. 
Recall that the initial overall potential in the system is $||\vec{\Phi}_{1,1}||_1=\ell(n-\ell)$.
Using the latter observations, after $p$ phases, any given \ldr node consumes {\em at most}
\begin{align*}
\rho &\leq \ell(n-\ell) \left(\frac{1}{n} + \frac{1}{k^{\beta}}\right) \sum_{i=0}^{p-1} \left(1- \ell\left(\frac{1}{n}-\frac{1}{k^{\beta}}\right)\right)^i.
\end{align*}
Given that $0<\ell\left(\frac{1}{n}-\frac{1}{k^{\beta}}\right)<1$ for $\beta\geq 1$ and $k>n>\ell$, we have that
\begin{align*}
\rho &\leq \ell(n-\ell) \left(\frac{1}{n} + \frac{1}{k^{\beta}}\right) \frac{1-\left(1- \ell\left(\frac{1}{n}-\frac{1}{k^{\beta}}\right)\right)^p}{1-\left(1- \ell\left(\frac{1}{n}-\frac{1}{k^{\beta}}\right)\right)} \\
&= (n-\ell)  \frac{ k^{\beta}+n}{k^{\beta}-n}
\left(1-\left(1- \ell\left(\frac{1}{n}-\frac{1}{k^{\beta}}\right)\right)^p\right).
\end{align*}

Again using that $0<\ell\left(\frac{1}{n}-\frac{1}{k^{\beta}}\right)<1$ for $\beta\geq 1$ and $k>n>\ell$, we have that
\begin{align*}
\rho &\leq (n-\ell) \frac{ k^{\beta}+n}{k^{\beta}-n}
\left(1-\exp\left(- p\frac{\ell\left(\frac{1}{n}-\frac{1}{k^{\beta}}\right)}{1-\ell\left(\frac{1}{n}-\frac{1}{k^{\beta}}\right)}\right)\right),\\
&\hspace{2in}\textrm{ replacing } p\leq 2\delta\ln k \frac{1-\ell\left(\frac{1}{n}-\frac{1}{k^{\beta}}\right)}{\ell\left(\frac{1}{n}-\frac{1}{k^{\beta}}\right)},\\ 
&\leq (n-\ell) \frac{ k^{\beta}+n}{k^{\beta}-n} \left(1- \frac{1}{k^{2\delta}}\right)\\
&= (n-\ell) \frac{ k^{\beta}+n}{k^{\beta}-n} \left(1+ \frac{1}{k^{\delta}}\right) \left(1- \frac{1}{k^{\delta}}\right)\\
&\leq (n-\ell) \left(1+ \frac{1}{k^{\delta}}\right) \ .
\end{align*}
The latter inequality holds for $\beta \geq \log_k (n(2k^{\delta} - 1))$ and $\delta \geq \log_k (3/2)$, 
The second inequality is true because $\log_k \frac{k^{\gamma}(n-\ell)}{k^\gamma-(n-\ell)-1} > \log_k (3/2)$ for $k>n>\ell>0$. 
%\begin{align*}
%\frac{ k^{\beta}+n}{k^{\beta}-n} \left(1- \frac{1}{k^{\delta}}\right) &\leq 1\\
%(k^{\beta}+n) \left(k^{\delta}- 1\right) &\leq (k^{\beta}-n) k^{\delta}\\
%k^{\beta}k^{\delta}+k^{\delta}n -k^{\beta}-n &\leq k^{\beta}k^{\delta}-nk^{\delta}\\
%(2k^{\delta}-1)n &\leq  k^{\beta}\\
%\log_k ((2k^{\delta}-1)n) &\leq  \beta
%\end{align*}
%\begin{align*}
%\log_k ((2k^{\delta}-1)n) &\geq 1\\
%\log ((2k^{\delta}-1)n) &\geq \log k\\
%(2k^{\delta}-1)n &\geq k\\
%(2k^{\delta}-1)n &\geq 2n-1\\
%(2k^{\delta}-1)n &\geq 2n\\
%k^{\delta} &\geq 3/2\\
%\delta &\geq \log_k (3/2)
%\end{align*}
Then, to complete the proof, it is enough to show that
\begin{align*}
(n-\ell) \left(1+ \frac{1}{k^{\delta}}\right) &< (k-\ell) \left(1- \frac{1}{k^{\gamma}}\right),
%\\(n-\ell) \left(1+ \frac{1}{k^{\delta}}\right) &< (n+1-\ell) \left(1- \frac{1}{k^{\gamma}}\right)\\
%1+ \frac{1}{k^{\delta}} &< \left(1+ \frac{1}{n-\ell}\right) \left(1- \frac{1}{k^{\gamma}}\right)\\
%1+ \frac{1}{k^{\delta}} &< 1+ \frac{1}{n-\ell} - \frac{1}{k^{\gamma}} - \frac{1}{n-\ell} \frac{1}{k^{\gamma}}\\
%\frac{1}{k^{\delta}} &< \frac{1}{n-\ell} - \frac{1}{k^{\gamma}} - \frac{1}{n-\ell} \frac{1}{k^{\gamma}}\\
%\frac{1}{k^{\delta}} &< \frac{k^\gamma-n+\ell-1}{(n-\ell)k^{\gamma}}\\
%k^{\delta} &> \frac{(n-\ell)k^{\gamma}}{k^\gamma-n+\ell-1}\\
%\delta &> \log_k \frac{k^{\gamma}(n-\ell)}{k^\gamma-(n-\ell)-1}.
\end{align*}
which is true for $k>n$, $\delta > \log_k \frac{k^{\gamma}(n-\ell)}{k^\gamma-(n-\ell)-1}$ and $\gamma>\log_k (n-\ell+1)$. 
Hence, the claim follows.
\end{proof}
%
%
%%%%%%%%%%%%%%%%%%%%%%%%%%%%%%%%%%%%%%%%%%%%%%%%%%%%%%%%%%%%%%%%%%%%%%%%%%%

Based on the above lemmata, we establish the correctness and running time of \nameD in the following theorem.

\begin{theorem}
\label{thm:many}
Given an ADN with $n$ nodes, which includes $\ell$ \ldr nodes such that $n>\ell\geq 1$,
and where $\ell$ is known to all nodes,
after running \nameD for each estimate $k=\ell+1,\ell+2,\ell+3,\dots,n$ with parameters: 
%$d = k^{1+\epsilon}$, 
%$p = \left\lceil \frac{2\ln k}{\ell} \max\left\{ 
%\frac{\gamma}{1/k + 1/k^{\alpha}},
%\frac{\delta}{1/n+1/k^\beta},
%\right\}\right\rceil$, 
%$r = \left\lceil 2dk^2(\ln k) \max\left\{ 
%\alpha,
%\beta k^{2\epsilon},
%2+\epsilon-\frac{\ln(k^\epsilon-1)}{\ln k} 
%\right\}\right\rceil$,
%$\tau = \ell\left(1-\frac{\ell}{k^{1+\epsilon}}\right)$,
%
\begin{align*}
d &= k^{1+\epsilon} \ ,\\ 
p &= \left\lceil \frac{2\ln k}{\ell} \max\left\{ 
\frac{\gamma}{1/k + 1/k^{\alpha}},
\frac{\delta}{1/d+1/k^\beta}
\right\}\right\rceil \ ,\\ 
r &= \left\lceil 2dk^2(\ln k) \max\left\{ 
\alpha,
\beta k^{2\epsilon},
2+\epsilon-\frac{\ln(k^\epsilon-1)}{\ln k} 
\right\}\right\rceil \ ,\\
\tau &= \ell\left(1-\frac{\ell}{k^{1+\epsilon}}\right) \ ,
\end{align*}
under the following conditions:
%
%$\epsilon > 0$,
%$\alpha \geq 1+\gamma+\log_k 3$,
%$\beta \geq \log_k(n(2k^\delta+1))$,
%$\delta > \log_k \frac{nk^\gamma}{nk^\gamma-(n-1)(k^\gamma+1)}$,
%$\gamma > \log_k (n-1)$. 
%
\begin{align*}
\epsilon &> 0 \ ,\\
\alpha &\geq 1+\gamma+\log_k 3 \ ,\\ %\max\{4,2(1+\gamma+\log_k 3)\},\\
\beta &\geq \log_k(d(2k^\delta+1)) \ ,\\ % > 2,\\
\gamma &> \log_k (d-1) \ ,\\ % > 0.
\delta &> \log_k \frac{dk^\gamma}{k^{\gamma}+1-d} \ . % \geq 1,\\
\end{align*}
Then, all nodes stop after at most $\sum_{k\in E\cup B} (pr+d)$ rounds of communication and output $n$,
for $E=\{2^i(\ell+1):i=0,1,\dots,\log\lceil n/(\ell+1)\rceil\}$,
and $B=\{(2^{\log\lceil n/(\ell+1)\rceil}-2^i)(\ell+1):i=0,1,\dots,\log\lceil n/(\ell+1)\rceil-2\}$.
\end{theorem}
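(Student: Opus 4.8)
The plan is to assemble Theorem~\ref{thm:many} by gluing together the case analysis already laid out in Lemmas~\ref{manycorrect}, \ref{manyksquare}, \ref{manyunalarmed}, \ref{nolowalarm}, \ref{manyalarmsoon}, and~\ref{kaboven}, and then to bound the total number of rounds by summing the epoch costs over the estimates actually visited. First I would verify that the stated parameter choices $d,p,r,\tau$ and the side conditions on $\epsilon,\alpha,\beta,\gamma,\delta$ simultaneously satisfy the hypotheses of each of the six lemmas. This is where the $\max$ in the definitions of $p$ and $r$ matters: the $\gamma$-branch of $p$ and the $\alpha$-branch of $r$ feed Lemma~\ref{manycorrect} (the $k=n$ case), the $\delta$-branch of $p$ and the $\beta k^{2\epsilon}$-branch of $r$ feed Lemmas~\ref{manyksquare} and~\ref{kaboven} (the $k<n\le k^{1+\epsilon}$ and $k>n$ cases, noting $d=k^{1+\epsilon}$ so $dk^{2+2\epsilon}=dk^2\cdot k^{2\epsilon}$), and the $2+\epsilon-\ln(k^\epsilon-1)/\ln k$-branch of $r$ together with $\tau=\ell(1-\ell/k^{1+\epsilon})$ feeds Lemmas~\ref{nolowalarm} and~\ref{manyalarmsoon} (the $k^{1+\epsilon}<n$ case via Lemma~\ref{manyunalarmed}). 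Because each branch is taken as a maximum, every lemma's lower bound on $p$ and $r$ is met; one also checks $d=k^{1+\epsilon}\ge k^{1+\epsilon}\ge n$ whenever $k\ge n$, and that $d>k>n$ covers the $k>n$ case, so the ``$d$ is an upper bound on degree'' requirement used to invoke Theorem~\ref{koucky} holds in exactly the cases where it is needed.

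Next I would run the epoch-by-epoch correctness argument. At the start, $k=\ell+1\le n$, and the estimate is doubled after each epoch in which the nodes learn $k<n$, until the first epoch with $k\ge n$; thereafter binary search between $min$ and $max$ converges to $n$. For each epoch I argue: if $k^{1+\epsilon}<n$, Lemma~\ref{manyunalarmed} gives a node with potential above $\tau$ which (by Lemma~\ref{nolowalarm}, this cannot happen when $k\ge n$, so it is a sound alarm) enters status \emph{low}, and Lemma~\ref{manyalarmsoon} shows all \ldr nodes receive this alarm within the allotted rounds, so every node ends the epoch with status \emph{low} and increases $k$; if $k<n\le k^{1+\epsilon}$, Lemma~\ref{manyksquare} shows $\rho>(k-\ell)(1+k^{-\gamma})$ at every \ldr node, so line~\ref{leaderAlg}.\ref{toolow} sets \emph{low} and $k$ is increased; if $k>n$, Lemma~\ref{kaboven} gives $\rho<(k-\ell)(1-k^{-\gamma})$ and line~\ref{leaderAlg}.\ref{toobig} sets \emph{high}, decreasing $max$; and if $k=n$, Lemma~\ref{manycorrect} puts $\rho$ in the range of line~\ref{leaderAlg}.\ref{range}, so every \ldr node sets \emph{done}. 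I also need the $d$-round status-dissemination block (Lines~\ref{leaderAlg}.\ref{leadernotification}, \ref{otherAlg}.\ref{othernotification}): when $k\ge n$ we have $d=k^{1+\epsilon}\ge n>D\ge\mathcal{D}$, so $d$ rounds suffice to flood any non-\emph{probing} status to all \notldr nodes, guaranteeing that \ldr and \notldr nodes agree on the action (stop, or update $k$ identically via the same arithmetic) and that when \emph{done} is reached everyone stops in the same round with output $k=n$. Since the search is monotone — $k$ only increases while $k<n$ and binary search never discards $n$ — the estimate $n$ is eventually tested and triggers \emph{done}, so all nodes stop and return $n$.

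Finally I would bound the running time. Each epoch with estimate $k$ costs $pr+d$ rounds, and the set of estimates visited is contained in $E\cup B$ as defined: the doubling phase visits $\ell+1,2(\ell+1),4(\ell+1),\dots$ up to the first power-of-two multiple of $\ell+1$ that is $\ge n$, which is the set $E$; then binary search between the last two such values visits at most $\log\lceil n/(\ell+1)\rceil$ further estimates, and a direct description of the midpoints gives the set $B$ (each binary-search estimate is $(2^{\log\lceil n/(\ell+1)\rceil}-2^i)(\ell+1)$ for successive $i$). Hence the total is $\sum_{k\in E\cup B}(pr+d)$. For the asymptotic claim $O((n^{4+\epsilon}/\ell)\log^3 n)$ one notes every visited $k$ is $O(n)$, so $d=O(n^{1+\epsilon})$, $p=O((\log n)/\ell\cdot n)=O((n\log n)/\ell)$ (the dominant branch being $\gamma/(1/k)=\gamma k$, up to the $k^{2\epsilon}$ factor absorbed into $\epsilon$), $r=O(n^{1+\epsilon}\cdot n^2\log n\cdot n^{2\epsilon})=O(n^{3+\epsilon'}\log n)$; multiplying and summing over $O(\log n)$ epochs yields the bound after renaming $\epsilon$. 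The main obstacle I anticipate is purely bookkeeping rather than conceptual: checking that the single parameter setting in the theorem statement really does dominate the per-lemma requirements in every case (in particular that $d=k^{1+\epsilon}$ is consistent with both ``$d\ge n$'' when $k\ge n$ and the strict ``$d>k$'' needed in Lemmas~\ref{manyksquare} and~\ref{kaboven}, and that the $\beta,\gamma,\delta$ conditions stated with $d$ in place of $n$ imply the $n$-versions used inside the lemmas since $d>n$), and that the epoch-count set $E\cup B$ is exactly the set of estimates the doubling-then-binary-search schedule produces — plus the routine simplification of the $\max$-of-three expressions into the clean $O$-bound.
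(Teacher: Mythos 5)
Your proposal follows essentially the same route as the paper's proof: it checks that the single parameter setting (via the branches of the $\max$ in $p$ and $r$, and the $\beta,\gamma,\delta$ conditions stated with $d\ge n$ in place of $n$) dominates the hypotheses of Lemmas~\ref{manycorrect}, \ref{manyksquare}, \ref{manyunalarmed}, \ref{nolowalarm}, \ref{manyalarmsoon}, and~\ref{kaboven}, runs the same four-case epoch analysis ($k=n$, $k>n$, $k<n\le k^{1+\epsilon}$, $k^{1+\epsilon}<n$) with the $d$-round status dissemination under $1$-interval connectivity, and sums $pr+d$ over the doubling estimates $E$ and the binary-search estimates $B$. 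The one small imprecision is your claim that the visited estimates are contained in $E\cup B$: the binary search may visit other midpoints, and the paper instead argues that since $p$, $r$, $d$ are monotonically increasing in $k$, the sequence $B$ (search always landing in the upper half) is the worst case and termwise dominates any actual execution.
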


\begin{proof}
%\begin{align*}
%\log_k \frac{nk^\gamma}{nk^\gamma-(n-1)(k^\gamma+1)}
%&= \log_k \frac{nk^\gamma}{nk^\gamma-n(k^\gamma+1)+(k^\gamma+1)}\\
%&= \log_k \frac{nk^\gamma}{nk^\gamma-nk^\gamma-n+(k^\gamma+1)}\\
%&= \log_k \frac{nk^\gamma}{k^\gamma+1-n}
%\end{align*}
First we show that the parameters and conditions of the theorem fulfill the conditions of previous lemmas. 
Notice that for $k<n\leq k^{\epsilon}$ it is 
\begin{align*}
\gamma &> \log_k (d-1) = \log_k (k^{1+\epsilon}-1) \geq \log_k (n-1) \ ,\\
\delta 
&> \log_k \frac{dk^\gamma}{k^{\gamma}+1-d}
= \log_k \frac{k^{1+\epsilon}k^\gamma}{k^\gamma+1-k^{1+\epsilon}}\\
&\geq \log_k \frac{nk^\gamma}{k^\gamma+1-n}
= \log_k \frac{nk^\gamma}{nk^\gamma-(n-1)(k^\gamma+1)} \ ,\\
\beta &\geq \log_k(d(2k^\delta+1))
= \log_k(k^{1+\epsilon}(2k^\delta+1))
\geq \log_k(n(2k^\delta+1)) \ ,\\
\frac{\delta}{1/d+1/k^\beta}
&= \frac{\delta}{1/k^{1+\epsilon}+1/k^\beta}
\geq \frac{\delta}{1/n+1/k^\beta} \ ,
\end{align*}
as required by Lemma~\ref{manyksquare}. 
On the other hand,
for $d>k>n>\ell>0$, 
it is
\begin{align*}
\beta &\geq \log_k(d(2k^\delta+1))
\geq \log_k (n(2k^{\delta} - 1)) \ ,\\
\delta &> \log_k \frac{dk^\gamma}{k^{\gamma}+1-d}
> \log_k \frac{k^{\gamma}(n-\ell)}{k^\gamma-(n-\ell)-1} \ ,\\
\gamma &> \log_k (d-1)  
>\log_k (n-\ell+1) \ ,
\end{align*}
and
\begin{align*}
p &= \left\lceil \frac{2\ln k}{\ell} \max\left\{ 
\frac{\gamma}{1/k + 1/k^{\alpha}} \ ,
\frac{\delta}{1/d+1/k^\beta}
\right\}\right\rceil
\leq 2\delta\ln k \frac{1-\ell\left(\frac{1}{n}-\frac{1}{k^{\beta}}\right)}{\ell\left(\frac{1}{n}-\frac{1}{k^{\beta}}\right)} \ ,\\ 
r &= \left\lceil 2dk^2(\ln k) \max\left\{ 
\alpha,
\beta k^{2\epsilon},
2+\epsilon-\frac{\ln(k^\epsilon-1)}{\ln k} 
\right\}\right\rceil
\geq2\beta dk^{2}\ln k \ ,
\end{align*}
as required by Lemma~\ref{kaboven}. 
And also we verify that for $d=k^{1+\epsilon}$ it is
$1+\log_k 4d = 2+\epsilon+\log_k4 \geq \epsilon$,
as required by Lemma~\ref{manyalarmsoon}.

We prove first that \nameD is correct. To do so, it is enough to show that for each estimate $k\neq n$ the algorithm detects the error and moves to the next estimate, and that if otherwise $k=n$ the algorithm stops and outputs $k$. We consider four cases: $k=n$, $k>n$, $k<n\leq k^{1+\epsilon}$, and $k^{1+\epsilon}<n$, for some $\epsilon>0$.

Assume first that $k^{1+\epsilon}<n$. Lemmas~\ref{manyunalarmed}~and~\ref{manyalarmsoon} show that within the following $k^{1+\epsilon}$ rounds after the first phase all nodes have received and/or produced an alarm status ``low'', even if no node has detected more than $d-1$ neighbors during the execution of this phase. 
Moreover, Lemma~\ref{nolowalarm} shows that the event triggering this alarm would not have happened if the estimate was not low.
For the given function $p$ and $k\geq \ell+1$, the epoch has more than one phase. Therefore, within $k^{1+\epsilon}$ rounds into the second phase, all nodes will have status ``low'' (Lines~\ref{leaderAlg}.\ref{alarminsecondleader} and~\ref{otherAlg}.\ref{alarminsecondother}), and will not change their status in this epoch (which could only happen if \ldr nodes were in status ``probing'' in Line~\ref{leaderAlg}.\ref{statuscheck}). Hence, all nodes will continue to the next epoch after updating $k$ accordingly in Lines~\ref{leaderAlg}.\ref{leaderupdate} and~\ref{otherAlg}.\ref{otherupdate}. 

Assume now that $k<n\leq k^{1+\epsilon}$. Even if some \ldr nodes do not receive an alarm status ``low'' during the execution, as shown in Lemma~\ref{manyksquare}, at the end of the epoch in Line~\ref{leaderAlg}.\ref{toolow} all \ldr nodes will detect that $\rho$ is too large and will change their status to ``low''. Then, they will disseminate their status in the loop of Line~\ref{leaderAlg}.\ref{leadernotification}, which has enough iterations to reach all nodes due to $1$-interval connectivity and $d=k^{1+\epsilon} \geq n$.
Then, all nodes will continue to the next epoch after updating $k$ appropriately.

If $k>n$, Lemma~\ref{kaboven}~~shows that the accumulated potential $\rho$ of every \ldr node after running \nameD for $p$ phases will be less than $(k-\ell)(1-1/k^\gamma)$, which is detected by the \ldr nodes in Line~\ref{leaderAlg}.\ref{toobig}. Thus, all \ldr nodes will change their status to ``high''.
Then, they will disseminate their status in the loop of Line~\ref{leaderAlg}.\ref{leadernotification}, which has enough iterations to reach all nodes  due to $1$-interval connectivity and $d>k>n$. Then, all nodes will continue to the next epoch after updating $k$ appropriately.

Finally, if $k=n$, Lemma~\ref{manycorrect} shows that the accumulated potential $\rho$ of every \ldr node after running \nameD long enough will be $(k-\ell)(1-1/k^\gamma) \leq \rho \leq (k-\ell)(1+1/k^\gamma)$. Thus, in Line~\ref{leaderAlg}.\ref{range} all \ldr nodes will change their status to ``done'', and in the loop of Line~\ref{leaderAlg}.\ref{leadernotification} will inform all other nodes that the current estimate is correct disseminating their status. 
The number of iterations is enough to do so due to $1$-interval connectivity.

The claimed running time can be obtained by inspection of either algorithm, \ldr nodes or \notldr nodes, since they are synchronized.
Refer for instance to Algorithm~\ref{leaderAlg}.
For each epoch, Line~\ref{leaderAlg}.\ref{phasesleader} starts a loop of $p$ phases followed by $d$ rounds in Line~\ref{leaderAlg}.\ref{leadernotification}. 
Each of the $p$ phases has $r$ rounds. 
In each round, nodes communicate exactly once.
Then, the number of communication rounds on each epoch is $pr+d$.
The parameters $p$, $r$, and $d$ are all monotonically increasing functions of $k$. 
Hence, the running time $pr+d$ for each epoch depends on the estimate size $k$ used in that epoch, and if the values of $k$ for a given network depend on the execution, the execution with largest values of $k$ is the worst case. 

For a given network, the sequence of values of $k$ used is initially $2^{i}(\ell+1)$ for $i=0,1,\dots,\log\lceil n/(\ell+1)\rceil$, which is the set $E$ defined in the theorem, independently of the execution. 
Once $i=\log\lceil n/(\ell+1)\rceil$, if $k\neq n$, $k$ will be updated in a binary search fashion within the range $(2^{i-1}(\ell+1),2^{i}(\ell+1))$. The worst case then corresponds to a binary search concentrated in the larger half of each successive range. Such sequence of values of $k$ is $(2^{\log\lceil n/(\ell+1)\rceil}-2^i)(\ell+1)$ for $i=0,1,\dots,\log\lceil n/(\ell+1)\rceil-2$, which is the set $B$ defined in the theorem. 

Thus, the claimed worst-case number of communication rounds follows.
%\hfill$\square$
\end{proof}

\begin{corollary}
\label{cor:mmctime}
%The time complexity of \nameD on an ADN with $\ell$ \ldr nodes and $n-\ell$ \notldr nodes is $O\left(\frac{n^4}{\ell} \log^4 n \right)$.
The time complexity of \nameD on an ADN with $\ell$ \ldr nodes and $n-\ell$ \notldr nodes is $O\left(\frac{n^{4+\epsilon}}{\ell} \log^3 n \right)$, for any $\epsilon>0$.
\end{corollary}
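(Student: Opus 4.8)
The plan is to start from the exact round bound $\sum_{k\in E\cup B}(pr+d)$ furnished by Theorem~\ref{thm:many} and turn it into a closed form in $n$ and $\ell$. The argument splits into three stages: (1) fix the auxiliary exponents $\alpha,\beta,\gamma,\delta$ as \emph{constants} depending only on $\epsilon$; (2) bound the per-epoch cost $pr+d$ for a single estimate $k$; (3) sum over all $O(\log n)$ estimates. Throughout I take Theorem~\ref{thm:many} (correctness and the round bound) as given and argue only about the asymptotics.

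For stage (1), I would substitute $d=k^{1+\epsilon}$ into the five side conditions of Theorem~\ref{thm:many} and check that each is met by a constant. The relevant observations are: $\log_k(k^{1+\epsilon}-1)<1+\epsilon$ for every $k\ge 2$, so $\gamma=1+\epsilon$ is admissible; with that $\gamma$ one has $k^\gamma+1-d=1$, hence $\delta>\log_k(k^{2+2\epsilon})=2+2\epsilon$ suffices and any constant $\delta>2+2\epsilon$ works; then $\beta\ge\log_k\!\big(d(2k^\delta+1)\big)$ holds for a constant $\beta\ge 3+\epsilon+\delta$ once we use $\log_k 3\le\log_2 3$ for $k\ge 2$, and $\alpha\ge 1+\gamma+\log_k 3$ (which also meets $\alpha\ge 2$) holds for a constant $\alpha\ge 4+\epsilon$. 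This stage is the conceptual crux, since every later estimate relies on $\alpha,\beta,\gamma,\delta$ being $O(1)$ in $k$.

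For stage (2), with those exponents fixed I would bound $p=O\!\big(\tfrac{1}{\ell}k^{1+\epsilon}\log k\big)$ using $\tfrac{\gamma}{1/k+1/k^\alpha}\le\gamma k\le\gamma k^{1+\epsilon}$ and $\tfrac{\delta}{1/d+1/k^\beta}\le\delta k^{1+\epsilon}$ (the ceiling is absorbed because $k>\ell$ makes the bracketed quantity exceed $1$), and $r=O\!\big(k^{3+3\epsilon}\log k\big)$ by verifying that each of the three terms in the maximum defining $r$ is $O(k^{2\epsilon})$ for $k\ge 2$ — the delicate one being $2+\epsilon-\ln(k^\epsilon-1)/\ln k$, which is decreasing in $k$, equals the finite constant $2+\epsilon-\ln(2^\epsilon-1)/\ln 2$ at $k=2$, and tends to $2$ as $k\to\infty$. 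Since $d=k^{1+\epsilon}$ is then dominated by $pr$ whenever $\ell<k$, this yields $pr+d=O\!\big(\tfrac{1}{\ell}k^{4+4\epsilon}\log^2 k\big)$.

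For stage (3), the definitions of $E$ and $B$ give $|E\cup B|=O(\log\lceil n/(\ell+1)\rceil)=O(\log n)$ and every estimate used satisfies $k=O(n)$; as the bound from stage (2) is monotone in $k$, summing over $E\cup B$ produces $O\!\big(\tfrac{1}{\ell}n^{4+4\epsilon}\log^3 n\big)$. Finally, since $\epsilon>0$ is arbitrary, I would re-run the whole argument with $\epsilon/4$ in place of $\epsilon$, which absorbs the factor $4$ in the exponent and gives the claimed $O\!\big(\tfrac{n^{4+\epsilon}}{\ell}\log^3 n\big)$. I expect the main obstacle to be the bookkeeping in stages (1) and (2), in particular confirming that the $r$-term $2+\epsilon-\ln(k^\epsilon-1)/\ln k$ stays $O(1)$ uniformly over the range of estimates — a careless treatment there could spuriously inject a dependence on $k$ (or on $1/\epsilon$) into the exponent.
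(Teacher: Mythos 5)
Your proposal is correct and follows essentially the same route as the paper's own proof: instantiate the exponents $\alpha,\beta,\gamma,\delta$ (with $d=k^{1+\epsilon}$, $\gamma$ near $1+\epsilon$, $\delta>2+2\epsilon$) so they are $O(1)$ in $k$, deduce $p=O(k^{1+\epsilon}\log k/\ell)$ and $r=O(k^{3+3\epsilon}\log k)$, hence a per-epoch cost $pr+d=O(k^{4+4\epsilon}\log^2 k/\ell)$, and multiply by the $O(\log n)$ estimates in $E\cup B$ with $k=O(n)$. Your explicit rescaling $\epsilon\mapsto\epsilon/4$ at the end only makes precise a step the paper leaves implicit, so there is no substantive difference.
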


\begin{proof}
The following algebraic manipulations refer to the parameters and conditions in Theorem~\ref{thm:many}.
Fixing 
%\begin{align*}
%\gamma &=\log_k d \ ,\\ %=1+\epsilon,\\
%\alpha &= 1+\gamma+\log_k 3 \ ,\\ % = \log_k (3dk)=(2+\epsilon)\log_k 3,\\
%\beta &= \log_k(d(2k^\delta+1)) \ ,
%\end{align*}
$\alpha = 1+\gamma+\log_k 3$,
$\beta = \log_k(d(2k^\delta+1))$,
and $\gamma = \log_k d$,
the conditions on $\alpha$, $\beta$, and $\gamma$ are fulfilled. 
Replacing $d=k^{1+\epsilon}$ and $\gamma = \log_k d$ in the condition on $\delta$, we have that
%\begin{align*}
%\delta &= 2(1+\epsilon) \ ,\\
%\frac{\delta}{1/d+1/k^\beta} &> \frac{\gamma}{1/k + 1/k^{\alpha}} \ ,
%\end{align*}
$\delta > 2(1+\epsilon)$.
Therefore, in the maximization in the definition of $p$, we get that
$\max\left\{\frac{\delta}{1/d+1/k^\beta} , \frac{\gamma}{1/k + 1/k^{\alpha}} \right\} =
\frac{\delta}{1/d+1/k^\beta}$.
Thus, it is
%\begin{align*}
%p &= \left\lceil \frac{2\ln k}{\ell} \frac{\delta}{1/d+1/k^\beta}\right\rceil 
%\\&= \left\lceil \frac{2\ln k}{\ell} \frac{2(1+\epsilon)}{1/d+1/(d(2d^2+1))}\right\rceil \\
%&= \left\lceil \frac{2\ln d}{\ell} \frac{d(2d^2+1)}{d^2+1}\right\rceil &\\
%< \left\lceil \frac{4d\ln d}{\ell}\right\rceil \ .
%\end{align*}
\begin{align*}
p &= \left\lceil \frac{2\ln k}{\ell} \frac{\delta}{1/d+1/k^\beta}\right\rceil \\
&= \left\lceil \frac{2\ln k}{\ell} \frac{\delta}{1/d+1/(d(2k^\delta+1))}\right\rceil \\
&= \left\lceil \frac{2d\ln k}{\ell} \frac{\delta}{1+1/(2k^\delta+1)}\right\rceil \\
&< \left\lceil \frac{2d\ln k^{\delta}}{\ell} \right\rceil.
\end{align*}

On the other hand, replacing $\alpha$, $\beta$ and $\gamma$ in the maximization in the definition of $r$, that is
$\max\left\{\alpha, \beta k^{2\epsilon} , 2+\epsilon-\frac{\ln(k^\epsilon-1)}{\ln k}\right\}$, 
we observe that 
$\beta k^{2\epsilon} > 2+\epsilon-\frac{\ln(k^\epsilon-1)}{\ln k}$
and
$\beta k^{2\epsilon} > \alpha$.
Thus, it is 
%\begin{align*}
%r &= \left\lceil 2dk^2(\ln k) \beta k^{2\epsilon}\right\rceil
%\\&= \left\lceil 2d^3\beta \ln k \right\rceil\\
%&= \left\lceil 2d^3  \log_k(d(2k^\delta+1)) \ln k \right\rceil\\
%&= \left\lceil 2d^3  \ln(d(2k^\delta+1)) \right\rceil\\
%&= \left\lceil 2d^3  \ln(d(2d^2+1)) \right\rceil\\&
%= \left\lceil 2d^3  \ln(2d^3+d) \right\rceil.
%\end{align*}
\begin{align*}
r &= \left\lceil 2dk^2(\ln k) \beta k^{2\epsilon}\right\rceil
\\&= \left\lceil 2d^3\beta \ln k \right\rceil\\
&= \left\lceil 2d^3  \log_k(d(2k^\delta+1)) \ln k \right\rceil\\
&= \left\lceil 2d^3  \ln(d(2k^\delta+1)) \right\rceil.
\end{align*}

Then, replacing in $pr+d$, we get 
%\begin{align*}
%pr+d 
%< \left\lceil \frac{4d\ln d}{\ell}\right\rceil \left\lceil 2d^3  \ln(2d^3+d) \right\rceil + d
%\in O\left(\frac{d^4}{\ell} \log^2 d\right).
%\end{align*}
\begin{align*}
pr+d 
< 
 \left\lceil \frac{2d}{\ell} \ln k^{\delta}\right\rceil
\left\lceil 2d^3  \ln(d(2k^\delta+1)) \right\rceil 
+ d
\in O\left(\frac{d^4}{\ell} \log^2 d\right).
\end{align*}

Replacing $d=k^{1+\epsilon} < (2n)^{1+\epsilon}$ and noticing that 
the total number of terms in the summation of the running time of Theorem~\ref{thm:many} is $O(\log (n/\ell))$,
the claim follows.
\end{proof}

%%%%%%%%%%%%%%%%%%%%%%%%%%%%%%%%%%%%%%%%%%%%%%%%%%%%%%%%%%%%%%%%%%%%%%%%%%%

%%%%%%%%%%%%%%%%%%%%%%%%%%%%%%%%%%%%%%%%%%%%%%%%%%%%%%%%%%%%%%%%%%%%%%%%%%%

\section{Randomized Unconscious Counting}
\label{sec:randomized}

\newcommand{\Count}{Count}
\newcommand{\EmptyThreads}{EmptyThreads}

In this section, we present and analyze our randomized algorithm for Unconscious Counting~\cite{conscious} as defined in Section~\ref{prelim}, that is, in finite time all nodes must obtain the correct count, possibly not knowing 
when exactly they have got~it.

In our algorithm, that we call  \LLMC (\nameR), nodes update their count, possibly multiple times.
We show that there is a point in the execution where all nodes have obtained the correct count, and do not change it to an incorrect count anymore. 
Nodes do not know when the count is final, hence they continue executing the algorithm forever. 
Although \nameR does not terminate, our analysis shows an upper bound on the time to obtain the final count. Given that the time bound is a function of $n$ and $n$ is unknown, it cannot be used as a termination condition. 

%In this section, we present and analyze our randomized Counting algorithm \nameR. 
We first give the intuition of the algorithm referring to the pseudocode in Algorithm~\ref{randAlg}. \nameR uses a modified version of \nameD as a subroutine. We summarize those modifications in Algorithm~\ref{MMCAlgModified}. Further pseudocode details can be found in Algorithms~\ref{leaderAlgModif} and~\ref{otherAlgModif}. %in the Appendix. 
Recall that references to algorithms lines are given as $\langle algorithm\#\rangle.\langle line\#\rangle$.

The main idea of \nameR is the following. 
We consider consecutive powers of $2$ as values of some variable $K$. 
For each such $K$, we select \ldr nodes locally with probability corresponding to the inverse of $K$,
and we run \nameD configured for one \ldr node. %$\ell=1$. 
We call this modified algorithm \nameT with parameters $K$ and $\ellp$, where $\ellp$ is the aimed number of \ldr nodes. Thus, it is $\ellp=1$, whereas $\ell$ is the actual number of \ldr nodes after the random choices.

If $K\geq n$ and there is indeed one \ldr node (i.e. $\ell=\ellp=1$),
\nameT guarantees that all nodes obtain the exact count $n$ of nodes.  
Because $K$ is doubled iteratively at some point it will be $K\geq n$. 
There is however a problem: what to do when $K\geq n$ but there is no \ldr node or at least two \ldr nodes.
Algorithm \nameR overcomes this issue with two techniques:
\begin{itemize}
\item 
Introducing parallel threads (Line~\ref{randAlg}.\ref{threads}) and carefully counting locally
the number of threads with no \ldr nodes recorded (Line~\ref{randAlg}.\ref{counthreads})
and requiring their ratio to be bigger than half (Line~\ref{randAlg}.\ref{llmciter}).
The latter is to 
have guarantees that in at least one thread it is $\ell=1$ with sufficiently high probability.
\item
Making use of the fact
that having more than one \ldr node in the execution of \nameT
configured for $\ellp=1$ (as a subroutine of \nameR) cannot return
an estimate bigger than if it was run with one \ldr node;
therefore, taking the maximum of returned estimates over
threads (Line~\ref{randAlg}.\ref{maxcount}) addresses the potential problem of more than
one \ldr node.
\end{itemize}

The main control parameter used in \nameR is $K$, which is an upper bound for estimates considered in one execution of the while loop (Line~\ref{randAlg}.\ref{llmciter}), which we call \emph{iteration} $K$. 
We start from a value of $K$ (Line~\ref{randAlg}.\ref{startK}) appropriate for our analysis. 
%We start from sufficiently large $K$ (Line~\ref{randAlg}.\ref{startK}), to assure that 
%\mm{??? \sout{starting from this value of $K$ the chance of getting incorrect output or behavior of nodes (e.g., stopping at different times) is smaller than $\zeta/2$. }}
%bigger than chance of getting  more than one \ldr node (and this ratio even increases for the next values of $K\ge n$).
Within one iteration, we initiate $f(K)$ parallel threads (Line~\ref{randAlg}.\ref{threads}).
That is, nodes run $f(K)$ instances of the same algorithm specified in Lines~\ref{randAlg}.\ref{beginthread} to~\ref{randAlg}.\ref{endthread}, sharing among all instances a variable $EmptyThreads$ and a set $Count$ for each node.
For each thread, we select \ldr nodes for the whole thread (Line~\ref{randAlg}.\ref{selblacks}) ---  trying to make sure that the chance of getting one \ldr node is sufficiently large,
%\mm{\sout{ (we need it to argue about correct stopping)}}
especially for $K\geq n$ (recall that we could not recognize for sure whether $K\ge n$ or not).
% and it is at least $1/\zeta$ bigger than the unwanted case of >1 \ldr node.
Then, in each thread independently, we run \nameT configured for $\ellp=1$ as a subroutine (Line~\ref{randAlg}.\ref{MMCcall}), hoping that we selected exactly one \ldr node ($\ell=\ellp$) in the beginning of the iteration. \nameT checks all possible values of $k$ from $1$ to $K$ aiming to find a good estimate of $n$. I.e., in \nameT we trim the execution of \nameD to estimates $k\le K$ (Line~\ref{MMCAlgModified}.\ref{trim}). 

This approach does not work for $K<n$, as then the probability
of getting more than one \ldr node could be bigger than the one for one \ldr node; however, we can eliminate such cases by monitoring
the number of threads with no \ldr node (Line~\ref{randAlg}.\ref{counthreads}), which in case of
$K<n$ should be compared with a large threshold (Line~\ref{randAlg}.\ref{llmciter}). 
Intuitively, we want \nameR to reach this threshold with high probability when $K$ is close to $n$.
Note that in \nameT a no-\ldr-node thread will output an indicator of such event, and a thread that does not reach a result (e.g. if $\ell=1$ but $K<n$) will output a count of zero (Line~\ref{MMCAlgModified}.\ref{returncount}).
If instead a thread identifies a good estimate (i.e. \nameT returns a count larger than $0$ in that thread) it adds its value to the set $\Count$ and rather than entering the next iteration, stops returning the maximum value in the set $\Count$ (Line~\ref{randAlg}.\ref{maxcount}). 
The latter is to give
preference to threads with one \ldr node over those with more than
one \ldr node (we will argue that threads with more than one \ldr node could return values, but not bigger than ones by threads with one \ldr node).

For $K<n$, there is a chance of having some threads with undetected \ldr nodes for some nodes and detected for others, and some of those nodes may obtain a count, but such count will be smaller than $n$ because $K<n$ and \nameT outputs an estimate $k\leq K$. Thus, this incorrect count will be updated later on when $K$ increases (cf. Line~\ref{randAlg}.\ref{maxcount}).

For $K\geq n$ there will be an iteration when all nodes obtain the correct count with large enough probability, as we show in our analysis. In later iterations, some threads may have more than one \ldr node yielding \nameT to return an incorrect count, but such count will be smaller than $n$. Hence, nodes disregard it (cf. Line~\ref{randAlg}.\ref{maxcount}).

\begin{algorithm}[htbp]
\caption{\nameR algorithm for each node. $\zeta \in (0,1)$ and $K$ is notation for the maximum size estimate.}
\label{randAlg}
\DontPrintSemicolon
$count \gets 0$\;
$K \gets \lceil\lceil 12/\zeta \rceil\rceil$ \label{initK} \tcp*{$\lceil\lceil x \rceil\rceil$: the smallest power of 2 bigger than $x$} \label{startK}
$\Count \gets \emptyset$ \tcp*{set of potentially "good" estimates computed} 
$\EmptyThreads\gets 0$ \tcp*{\# threads with no \ldr node detected}
%\While{$\Count = \emptyset$ or $\EmptyThreads \le f(K)/2$}{ \label{llmciter}
\While{true}{ \label{llmciter}
	$\Count \gets \emptyset$, $\EmptyThreads\gets 0$ \;
	$K \gets 2K$ \;
	Initiate $f(K) = 64\frac{\log (K/\zeta)}{\log(e/(e-2))}$ parallel threads \label{threads} \tcp*{parallel computation and messages sharing same resources/medium}
	\For{each thread}{ 
		\For{each node}{ \label{beginthread}
			Select to be a \ldr node with probability $1/g(K)$, where $g(K)=K/2$ \; \label{selblacks}
		}
		$\langle k,b\rangle \gets \nameT(K,1)$ \label{MMCcall} \tcp*{refer to Algorithm~\ref{MMCAlgModified}}
		\lIf{$k>0$}{$\Count \gets \Count\cup\{k\}$\label{store}}
		\If(\tcp*[f]{no \ldr node detected}){$b = false$}{ 
			$\EmptyThreads \gets \EmptyThreads +1$ \; \label{counthreads}
		}\label{endthread}
	}
	\If{$\Count \neq \emptyset$ and $\EmptyThreads > f(K)/2$}{
		$count \gets \max\{count,\max(\Count)\}$ \tcp*{update $n$} \label{maxcount}
	}
}
%\textbf{return} $\max(\Count)$ \tcp*{output the maximum number in $\Count$ as the size $n$} \label{maxcount}

\end{algorithm}

\begin{algorithm}[htbp]
\caption{Summary of modifications of \nameD to be used as a subroutine for \nameR. Refer to Algorithms~\ref{leaderAlgModif} and~\ref{otherAlgModif} for the complete details of \nameT. $\ellp$ is the aimed number of \ldr nodes and $K$ is the maximum size estimate.}
\label{MMCAlgModified}
\DontPrintSemicolon
%\Procedure{\nameD($K$,$\ell$)}{}%{$\zeta$-counting}{}
\SetKwFunction{KwFn}{\nameT}
\SetKwProg{Fn}{Function}{}{end}
\Fn{\KwFn{$K$,$\ellp$}}{
Run \nameD modified as follows: \;
\hspace{.2in} -- Stop iterations when size estimate $k>K$ \label{trim} \label{stopbigk} \;
\tcp*{Lines~\ref{leaderAlgModif}.\ref{leaderitercond} and~\ref{otherAlgModif}.\ref{otheritercond}}
\hspace{.2in} -- If estimate $k<K$, remain idle until end of phase $K$ \;
\tcp*{for synch, Lines~\ref{leaderAlgModif}.\ref{leadersynchdelay}and~\ref{otherAlgModif}.\ref{othersynchdelay}}

\hspace{.2in} -- Include a Boolean $b_j$ in each node $j$ as follows: \; \label{initiatebegins}
\hspace{.4in} --- Initially: \;
\hspace{.6in} \leIf{node $j$ is \ldr}{$b_j\gets true$}{$b_j\gets false$} \tcp*{Lines~\ref{leaderAlgModif}.\ref{leaderinitp} and~\ref{otherAlgModif}.\ref{otherinitp}}
\hspace{.4in} --- In each iteration: \;
\hspace{.6in} Broadcast and Receive messages including $b_j$ \; \label{blackflagbroadcast}
\hspace{.6in} \lIf{$b_i=true$ received from some neighbor $i$}{$b_j\gets true$} \label{initiateends} \tcp*{Lines~\ref{otherAlgModif}.\ref{otherupdatep1} and~\ref{otherAlgModif}.\ref{otherupdatep2}}

Upon completion, for each node $j$: \; 
\hspace{.2in} \leIf{$status = done$}{ \textbf{return} $\langle k,b_j\rangle$}{\textbf{return} $\langle 0,b_j\rangle$} \label{returncount}  \label{zerocount} \tcp*{Lines~\ref{leaderAlgModif}.\ref{leaderreturn} and~\ref{otherAlgModif}.\ref{otherreturn}}
}
\end{algorithm}

\begin{algorithm}[htbp]
\caption{\nameT algorithm for each {\bf\emph{\ldr node}}. 
Sections that are identical to \nameD are grayed-out.
$N$ is the set of neighbors of this node in the current round, 
$\ellp$ and $K$ are notations for the aimed
number of \ldr nodes 
and the maximum size estimate respectively.
The parameters $d,p,r$ and $\tau$ are as defined in Theorem~\ref{thm:many} for \nameD, as functions of $k,\ellp$ instead of $k,\ell$, 
and $round_{\max} = \sum_{k=2^i:i=1,2,\dots,\lceil\log K\rceil} (pr+d)$.}
\label{leaderAlgModif}
\DontPrintSemicolon
%\Procedure{\nameD}{\mm{$K,\ellp$}}
\SetKwFunction{KwFn}{\nameT}
\SetKwProg{Fn}{Function}{}{end}
\Fn{\KwFn{$K$,$\ellp$}}{

\textcolor{gray}{
	$k \gets \ellp+1, min\gets k, max\gets\infty$ \tcp*{initial size estimate and range}
}
	$b \gets true$ \label{leaderinitp} \tcp*{existence of \ldr nodes}
	$\#rounds \gets 0$ \tcp*{counter of communication rounds for synchronization}
	\Repeat(\tcp*[f]{iterating epochs}){$status=done$ {\bf or} $k>K$}{  \label{epochsleaderModif}
\textcolor{gray}{
		 $status\gets probing$, $\Phi\gets 0$ \tcp*{status$=$probing$|$low$|$high$|$done, current potential}
		 $\rho\gets 0$ \tcp*{potential accumulator}
		\For(\tcp*[f]{iterating phases}){$phase=1$ to $p$}{  \label{phasesleaderModif}
			\For(\tcp*[f]{iterating rounds}){$round=1$ to $r$}{  \label{roundsleaderModif}
				 Broadcast $\langle\Phi,status,b\rangle$
				and Receive $\langle\Phi_i,status_i,b_i\rangle, \forall i\in N$ \;
				$\#rounds \gets \#rounds +1$ \;
				\If{$status=probing$ {\bf and} $|N|\leq d-1$ {\bf and} $\forall i\in N:status_i=probing$}{ 
					 $\Phi\gets \Phi + \sum_{i\in N}\Phi_i/d - |N|\Phi/d$ \tcp*{update potential} \label{potupdateModif}	
				}
				\Else(\tcp*[f]{$k<n$}){ \label{leadertoomanyModif}
					 $status\gets low$, \label{alarminsecondleaderModif}
					$\Phi\gets \ellp$
				}
			} 
			\If(\tcp*[f]{$k<n$} ){$phase=1$ {\bf and} $\Phi> \tau$}{ 
					 $status\gets low$,
					$\Phi\gets \ellp$ \label{leaderthresholdModif}
			} 
			\If(\tcp*[f]{prepare for next phase}){$status=probing$}	{ \label{rhoupdateModif}
				 $\rho \gets \rho + \Phi$ \label{consumeModif} \tcp*{consume potential} 
				 $\Phi \gets 0$ \label{phiresetModif}
			}
		} 
		\If{$status=probing$}{ \label{statuscheckModif}
			\lIf(\tcp*[f]{$k=n$}){$(k-\ellp)(1-k^{-\gamma})\leq \rho \leq (k-\ellp)(1+k^{-\gamma})$}{ 
				$status\gets done$  \label{rangeModif}
			}
			\lIf(\tcp*[f]{$k>n$}){$\rho < (k-\ellp)(1-k^{-\gamma})$}{ 
				$status\gets high$  \label{toobigModif}
			}
			\lIf(\tcp*[f]{$k<n$}){$\rho > (k-\ellp)(1+k^{-\gamma})$}{ 
				$status\gets low$  \label{toolowModif}
			}
		}
}
		\For(\tcp*[f]{disseminate status and existence of \ldr nodes}){$round=1$ to $d$}{  
			 Broadcast $\langle status, b\rangle$
			and Receive $\langle status_i, b_i\rangle, \forall i\in N$ \;
				$\#rounds \gets \#rounds +1$ \;
		} 
\textcolor{gray}{
		\If(\tcp*[f]{prepare for next epoch}){$status=low$}{  \label{leaderupdateModif}
			$min\gets k+1$ \;
			\leIf{$max=\infty$}{$k\gets 2k$}{$k\gets\lfloor(min+max)/2\rfloor$}
		}
		\Else{ 
			\If{$status=high$}{
		         	 $max\gets k-1$\;
				 $k\gets\lfloor(min+max)/2\rfloor$\;
			}
		}
}
	} \label{leaderitercond}
	\While(\tcp*[f]{for synchronization among threads} ){$\#rounds \leq round_{\max}$}{ \label{leadersynchdelay}
			 Broadcast $\langle status, b\rangle$
			and Receive $\langle status_i, b_i\rangle, \forall i\in N$ \;
			 $\#rounds \gets \#rounds +1$ \;
	}
	\leIf{$status=done$}{\textbf{return} $\langle k,b\rangle$}{\textbf{return} $\langle0,b\rangle$} \label{leaderreturn}

}
\end{algorithm}

\begin{algorithm}[htbp]
\caption{\nameT algorithm for each {\bf\emph{\notldr node}}.
Sections that are identical to \nameD are grayed-out. 
$N$ is the set of neighbors of this node in the current round, 
$\ellp$ and $K$ are notations for the aimed
number of \ldr nodes 
and the maximum size estimate respectively.
The parameters $d,p,r$ and $\tau$ are as defined in Theorem~\ref{thm:many} for \nameD, but as functions of $k,\ellp$ instead of $k,\ell$, 
and $round_{\max} = \sum_{k=2^i:i=1,2,\dots,\lceil\log K\rceil} (pr+d)$.}
\label{otherAlgModif}
\DontPrintSemicolon
%\Procedure{\nameD}{\mm{$K,\ellp$}}
\SetKwFunction{KwFn}{\nameD}
\SetKwProg{Fn}{Function}{}{end}
\Fn{\KwFn{$K$,$\ellp$}}{

\textcolor{gray}{
	 $k \gets \ellp+1, min\gets k, max\gets\infty$ \tcp*{initial size estimate and range}
}
	 $b \gets false$ \label{otherinitp} \tcp*{existence of \ldr nodes}
	 $\#rounds \gets 0$ \tcp*{counter of communication rounds for synchronization}
	\Repeat (\tcp*[f]{iterating epochs}){$status=done$ {\bf or} $k>K$}{ \label{epochsotherModif}
\textcolor{gray}{
		 $status\gets probing$, $\Phi\gets \ellp$ \tcp*{status$=$probing$|$low$|$high$|$done, current potential}
		\For( \tcp*[f]{iterating phases}){$phase=1$ to $p$}{ \label{phasesotherModif}
			\For(\tcp*[f]{iterating rounds}){$round=1$ to $r$}{  \label{roundsotherModif}
				 Broadcast $\langle\Phi,status,b\rangle$
				and Receive $\langle\Phi_i,status_i,b_i\rangle, \forall i\in N$ \;
				$\#rounds \gets \#rounds +1$ \;
				\lIf{$\exists i\in N : b_i = true$}{$b\gets true$} \label{otherupdatep1}
				\If{$status=probing$ {\bf and} $|N|\leq d-1$ {\bf and} $\forall i\in N:status_i=probing$}{ 
					 $\Phi\gets \Phi + \sum_{i\in N}\Phi_i/d - |N|\Phi/d$ \label{newpotModif}
					\tcp*[f]{update potential}	
				}
				\lElse (\tcp*[f]{$k<n$}){ \label{othertoomanyModif}
					 $status\gets low$, \label{alarminsecondotherModif}
					$\Phi\gets \ellp$
				}
			}
			\lIf( \tcp*[f]{$k<n$} ){$phase=1$ {\bf and} $\Phi> \tau$}{
					 $status\gets low$,
					$\Phi\gets \ellp$ \label{otherthresholdModif}
			} 
		}
}
		\For( \tcp*[f]{disseminate status} ){$round=1$ to $d$}{
			 Broadcast $\langle status, b\rangle$
			and Receive $\langle status_i, b_i\rangle, \forall i\in N$ \;
			\lIf{$\exists i\in N:status_i\neq probing$}{$status\gets status_i$}
			$\#rounds \gets \#rounds +1$\;
			\lIf{$\exists i\in N : b_i = true$}{$b\gets true$} \label{otherupdatep2}
		}
\textcolor{gray}{
		\If(\tcp*[f]{prepare for next epoch}){$status=low$}{  \label{otherupdateModif}
			 $min\gets k+1$\;
			\leIf{$max=\infty$}{$k\gets 2k$}{$k\gets\lfloor(min+max)/2\rfloor$}
		}
		\Else{ 
			\If{$status=high$}{
		         	 $max\gets k-1$\;
				 $k\gets\lfloor(min+max)/2\rfloor$\;
			}
		}
}
	} \label{otheritercond}
	\While(\tcp*[f]{for synchronization among threads}){$\#rounds \leq round_{\max}$}{  \label{othersynchdelay}
			 Broadcast $\langle status, b\rangle$
			and Receive $\langle status_i, b_i\rangle, \forall i\in N$\;
			 $\#rounds \gets \#rounds +1$\;
			\lIf{$\exists i\in N : b_i = true$}{$b\gets true$}
	}
	\leIf{$status=done$}{\textbf{return} $\langle k,b\rangle$}{\textbf{return} $\langle0,b\rangle$} \label{otherreturn}

}
\end{algorithm}

%%%%%%%%%%%%%%%%%%%%%%%%%%%%%%%%%%%%%%%%%%%%%%%%%%%%%%

\subsection{Analysis of \nameR}

%We may assume that $n>1$, otherwise the algorithm would easily recognize $n=1$ by no received communication.
Throughout the analysis that follows, we consider an adaptive adversary, %\dk{[[[definition needed somewhere in the model]]]}
as we allow network changes to be done online
when viewing the whole history of the computation
up to the current round.

Recall that we refer to an execution of the while loop in Line~\ref{randAlg}.\ref{llmciter} for fixed parameter $K$ as iteration $K$.
Recall also that $K$
is a power of $2$ bigger than $\lceil\lceil12/\zeta\rceil\rceil$ (cf. Line~\ref{randAlg}.\ref{initK}). %\mm{[[MM: do we still need this initial value?]]}
%\dk{[[[We use it in the last paragraph of the proof of Theorem 5, and although could be taken differenty, that value seems to be pretty standard]]]}
By $\lceil\lceil x 
\rceil\rceil$ we denote the smallest power of $2$ bigger than~$x$.

%Observe from the structure of \nameR that nodes could only stop at the end of an iteration. In the analysis that follows, we will often
%prove some properties under the condition that stopping times are synchronized, i.e., either all nodes stop or none. We remove this assumption at the end of the analysis. 
%That is, we will show that in fact all nodes synchronize their stopping time and output the correct value of $n$ with desired probability at least $1-\zeta$.
We conjecture that all arbitrary constants in the algorithm,
i.e., in the definition of the starting value of $K$, functions
$f(K)$ and $g(K)$, could be substantially lowered, as we
set them high to avoid too many cases in the analysis
(so making it as much focused on main arguments as possible).

In the analysis of \nameR that follows, we refer to lemmas proving properties of \nameD. Those properties also hold for \nameT, after changing $\ell$ by $\ellp$ and $n$ by $K$ as needed. 

While $K<n$, nodes may obtain a count, and such count may be incorrect. 
%But 
However,
given that \nameR continues executing doubling $K$, we focus 
on: showing that for $K\geq n$ the count obtained is correct, 
%on 
bounding the time to reach that state,
and 
%on 
showing that after the correct count is obtained the nodes do not go back to an incorrect count. 
We start from stating the following structural property.

\begin{proposition}
\label{f:max}
For any iteration $K\ge n$, 
%\sout{\mm{if all nodes are active at the beginning of the iteration,}} 
if the number of threads with no \ldr node is bigger than $f(K)/2$, and there is at least one thread with one \ldr node,
then all threads with one \ldr node store the same value in the set $\Count$
and it is the biggest value stored in $\Count$ in this iteration.
\end{proposition}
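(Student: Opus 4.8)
The plan is to split the statement into two facts: (a) every thread that ends up with exactly one \ldr node contributes exactly the value $n$ to $\Count$ at every node, and (b) no thread ever contributes a value larger than $n$. Granting these, the hypothesis that there is at least one thread with one \ldr node immediately gives $n\in\Count$, that all threads with one \ldr node stored the same value, and that $n=\max(\Count)$ in this iteration. (The hypothesis that more than $f(K)/2$ threads have no \ldr node is not used here; it is only the guard of Line~\ref{randAlg}.\ref{llmciter}, carried along for context.)

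For (a): a thread with exactly one \ldr node runs \nameT$(K,1)$ (so $\ellp=1$) with the actual number of \ldr nodes also equal to $1$, which by construction is \nameD for a single \ldr node, augmented only with the rule that epochs stop once the size estimate $k$ exceeds $K$ (Line~\ref{MMCAlgModified}.\ref{trim}) and with a final synchronization loop that does not change the output. Since $K$ is a power of $2$ with $K\ge n$, the estimates visited by \nameD's doubling/binary-search schedule never exceed $K$ before the correct one is reached: the first power of $2$ that is at least $n$ equals $2^{\lceil\log_2 n\rceil}\le K$, and every estimate produced by the ensuing binary search is strictly smaller than it. Hence the trimming rule never fires before \nameD locates $n$, so by Theorem~\ref{thm:many} every node of the thread ends with $status=done$ and $k=n$, and by Line~\ref{MMCAlgModified}.\ref{returncount} adds $n$ to $\Count$ via Line~\ref{randAlg}.\ref{store}. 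Thus the thread's only contribution to $\Count$ is $n$.

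For (b): a value is placed in $\Count$ only when a node ends \nameT with $status=done$. A thread with no \ldr node contributes nothing, since Algorithm~\ref{leaderAlgModif} is then never executed, so the assignment $status\gets done$ (Line~\ref{leaderAlgModif}.\ref{rangeModif}) never occurs and $done$ has no origin from which to propagate. For a thread with $\ell\ge 2$ \ldr nodes it suffices to show that no estimate $k\ge n$ is ever \emph{confirmed} (the ``done'' test never succeeds for such $k$), so that any stored value is $<n$. Fix $k\ge n$ and its epoch. Since $d=k^{1+\epsilon}>n-1$, no node ever has $d$ or more neighbors, and by a bound analogous to Lemma~\ref{nolowalarm} --- now with \nameT's initialization, in which each \notldr node starts with potential $\ellp=1$, so the initial total potential is $n-\ell$ --- no node's potential exceeds $\tau$ at the end of phase $1$; consequently no ``low'' alarm is raised and the $d$-lazy random walk runs throughout the epoch. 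Carrying out the geometric-series estimate of Lemma~\ref{kaboven}, whose sum is bounded above independently of the number of phases $p$, and using $\beta\ge\log_k(n(2k^\delta+1))$, each \ldr node's accumulator satisfies $\rho\le\frac{n-\ell}{\ell}\bigl(1+k^{-\delta}\bigr)\le n-2\le k-2$; since $\gamma>\log_k(k^{1+\epsilon}-1)$ gives $k^{\gamma}>k^{1+\epsilon}-1\ge k-1$, we get $k-2<(k-1)\bigl(1-k^{-\gamma}\bigr)=(k-\ellp)\bigl(1-k^{-\gamma}\bigr)$, so the ``done'' test of Line~\ref{leaderAlgModif}.\ref{rangeModif} fails and $k$ is not confirmed. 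This proves (b).

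Combining (a) and (b) proves the proposition. The technical heart is the $\ell\ge 2$ case of (b): surplus \ldr nodes cannot inflate the reported count, because \nameT's $\ellp=1$ initialization already makes the pooled potential small and $\ell$ simultaneous consumers drain it even faster, pinning every accumulator $\rho$ well below the ``done'' band $(k-\ellp)(1\pm k^{-\gamma})$ for all $k\ge n$. The point that requires care, and the main obstacle, is that Lemma~\ref{kaboven} cannot be invoked verbatim: its hypotheses include an upper bound on the number of phases $p$ that the \nameT parameters --- tuned for $\ellp=1$ rather than for the actual $\ell$ --- need not satisfy; one must instead re-run its potential estimate using only the phase-count-free bound on the geometric series, and verify, as above, that the resulting weaker bound still separates $\rho$ from $(k-\ellp)(1-k^{-\gamma})$.
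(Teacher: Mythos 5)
Your proposal is correct and takes essentially the same route as the paper: one-\ldr threads store exactly $n$ (Theorem~\ref{thm:many}, with the cutoff at $K$ never triggered since $K$ is a power of two at least $n$), no-\ldr threads store nothing, and threads with $\ell\ge 2$ never confirm an estimate $k\ge n$, so $n$ is the maximum of $\Count$. The only difference is depth of detail: the paper merely asserts that threads with two or more \ldr nodes yield a count at most $n$, whereas you actually prove it via the phase-count-free geometric bound $\rho\le\frac{n-\ell}{\ell}\left(1+k^{-\delta}\right)\le n-2<(k-\ellp)\left(1-k^{-\gamma}\right)$, correctly observing that Lemma~\ref{kaboven} cannot be cited verbatim because its upper bound on $p$ and its $\ell$-tuned parameters do not match \nameT's $\ellp=1$ configuration.
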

\begin{proof}
By properties of algorithm \nameT run for $\ellp=1$ as a subroutine, when $K\geq n$, threads with one \ldr node store the correct value of $n$ in the set $\Count$ (c.f., Theorem~\ref{thm:many} for \nameD and Line~\ref{randAlg}.\ref{store}).
Threads with no \ldr node do not store any value in $\Count$, while in threads with at least two \ldr nodes, \ldr nodes acummulate a potential that yield a count at most $n$, hence nodes store a value at most $n$ in $\Count$ (c.f. {Line~\ref{leaderAlgModif}.\ref{rangeModif} in \nameT or Line~\ref{leaderAlg}.\ref{range} in \nameD}).
Therefore, by taking the maximum value from the set $\Count$ in Line~\ref{randAlg}.\ref{maxcount} in \nameD, each node obtains the correct count in the considered case.
\end{proof}

\begin{lemma}
	\label{lem:good-stop}
Consider an iteration $K\ge n$. 
%and assume that all nodes are active in the beginning of this iteration.
The probability of event: in iteration $K$ all nodes 
%stop with 
obtain the correct count, is at least
$1-\exp(-f(K)/64) - \exp\left(-\frac{nf(K)}{g(K)e}\right)$.
\end{lemma}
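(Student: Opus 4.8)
The plan is to reduce the event ``in iteration $K$ all nodes obtain the correct count'' to two simple events about the random \ldr-selection performed at the start of iteration $K$, bound the failure probability of each by a Chernoff bound and a union bound, and finish by invoking Proposition~\ref{f:max} together with the correctness guarantee for \nameT supplied by Theorem~\ref{thm:many}. Fix an iteration $K\ge n$. Since the algorithm runs $f(K)$ threads and in each thread every node independently becomes \ldr with probability $1/g(K)$, the number $\ell_j$ of \ldr nodes in thread $j$ is binomially distributed, $\mathrm{Bin}(n,1/g(K))$, independently over $j$. Let $\mathcal{M}$ be the event that more than $f(K)/2$ threads satisfy $\ell_j=0$, and let $\mathcal{U}$ be the event that at least one thread satisfies $\ell_j=1$.

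First I would show that $\mathcal{M}\cap\mathcal{U}$ forces every node to end iteration $K$ with its count equal to $n$, and that this holds no matter how the adaptive adversary behaves, so that $\mathcal{M}\cap\mathcal{U}$ is in fact measurable with respect to the iteration-$K$ \ldr-selection coins alone. Indeed: a thread with $\ell_j=1$ runs \nameT with $\ellp=1$ on the $n$ nodes with $K\ge n$, so by Theorem~\ref{thm:many} it returns $\langle n,\mathit{true}\rangle$ at every node for every topology; a thread with $\ell_j=0$ returns $\langle 0,\mathit{false}\rangle$ at every node; and a thread with $\ell_j\ge 2$ returns a value at most $n$ (the returned estimate cannot grow when extra \ldr nodes are present) and makes every node set its flag to $\mathit{true}$. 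Hence $\EmptyThreads$ is the common number of $\ell_j=0$ threads, $\Count$ contains $n$ and is nonempty, and $\max(\Count)=n$ --- which is precisely Proposition~\ref{f:max}. On $\mathcal{M}$ the guard of the update step (Line~\ref{randAlg}.\ref{maxcount}) is satisfied, so every node performs $count\gets\max(count,n)$, and since at every point of the execution $count$ is a maximum of values each of which is at most $n$, every node ends with $count=n$.

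Next I would lower-bound $\Pr[\mathcal{M}\cap\mathcal{U}]$ by a union bound over the two complements. For $\mathcal{M}$, $\EmptyThreads\sim\mathrm{Bin}(f(K),q)$ with $q=(1-1/g(K))^{n}$, and because $K\ge n$ while $g(K)$ is linear in $K$, $q$ is bounded below by a constant strictly above $1/2$; a multiplicative Chernoff bound then gives $\Pr[\mathcal{M}^{c}]=\Pr[\EmptyThreads\le f(K)/2]\le\exp(-f(K)/64)$. For $\mathcal{U}$, the threads are independent and $p_1:=\Pr[\ell_j=1]=n\cdot\tfrac{1}{g(K)}\bigl(1-\tfrac{1}{g(K)}\bigr)^{n-1}\ge\tfrac{n}{g(K)\,e}$, using $K\ge n$ and the size of $g(K)$ to bound the factor $(1-1/g(K))^{n-1}$ below by $1/e$; therefore $\Pr[\mathcal{U}^{c}]=(1-p_1)^{f(K)}\le\exp(-p_1 f(K))\le\exp\!\left(-\tfrac{nf(K)}{g(K)\,e}\right)$. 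Combining,
\[
\Pr[\text{all nodes obtain the correct count in iteration }K]\ \ge\ \Pr[\mathcal{M}\cap\mathcal{U}]\ \ge\ 1-\exp(-f(K)/64)-\exp\!\left(-\tfrac{nf(K)}{g(K)\,e}\right),
\]
which is the asserted bound.

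I expect the main obstacle to be the Chernoff estimate for $\mathcal{M}$: one has to verify that for every $K\ge n$ --- given the starting value of $K$ and the explicit definitions of $f(K)$ and $g(K)$ --- the empty-thread probability $q=(1-1/g(K))^{n}$ stays far enough above $1/2$ that the large-deviation exponent dominates the prescribed $f(K)/64$, and likewise that $(1-1/g(K))^{n-1}\ge 1/e$. This is exactly the point at which the (deliberately oversized) constants in $f(K)$ and $g(K)$ are calibrated, and it is the only quantitatively delicate step; the reduction in the second paragraph --- in particular the adversary-independence of the pinned outcomes of the $0$- and $1$-\ldr threads and the ``$\le n$'' bound for $\ge 2$-\ldr threads --- follows directly from Theorem~\ref{thm:many} and the monotonicity property already used to prove Proposition~\ref{f:max}.
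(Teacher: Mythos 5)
Your proof follows essentially the same route as the paper's: the same decomposition into the events ``more than $f(K)/2$ threads with no \ldr node'' and ``at least one thread with exactly one \ldr node,'' the reduction to correctness via Proposition~\ref{f:max}, a Chernoff bound for the first event and the estimate $\frac{n}{g(K)}\left(1-\frac{1}{g(K)}\right)^{n-1}\ge \frac{n}{g(K)e}$ for the second, combined by a union bound. The one caveat --- which the paper's own proof shares, since it establishes both probability bounds only under the assumption $K\ge 8n$ rather than the stated $K\ge n$ --- is that your claims that $(1-1/g(K))^{n}$ stays above $1/2$ and that $(1-1/g(K))^{n-1}\ge 1/e$ fail for $K$ as small as $n$ (with $g(K)=K/2$ both quantities are about $e^{-2}$); they hold once $K$ is a suitable constant multiple of $n$, which is exactly how the lemma is invoked in Corollary~\ref{cor:good-stop}.
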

\begin{proof}
Observe that the considered event is implied by the following
case: the number of threads without a \ldr node is 
bigger than $f(K)/2$ and there is a thread with exactly one \ldr node.
%Indeed, the first part allows to exit the while loop (c.f. Line~\ref{randAlg}.\ref{llmciter})
%while the other part guarantees that the proper answer is returned (c.f. Proposition~\ref{f:max}).
Thus, by Proposition~\ref{f:max} all nodes obtain the correct value.
Therefore, it is enough to estimate from below the probability 
of the case to hold. 

The probability of a fixed thread to be without a \ldr node
is 
\[
\left(1-\frac{1}{g(K)}\right)^{n}
\ge
\min\left\{4^{-n/g(K)},1-\frac{n}{g(K)}\right\} 
\ge
\frac{3}{4}
\]
for $K\ge 8n$,
by simple calculus and definition of function $g$. 
Therefore, the expected number of threads with no \ldr node
is at least $3f(K)/4$ for $K\ge 8n$. By Chernoff bound, the probability that this number is bigger than $f(K)/2$ is
at least $1-\exp(-f(K)/64)$, for $K\ge 8n$.

On the other hand, the probability that there is a thread with exactly one \ldr node is
\begin{align*}
1-\left( 1-\frac{n}{g(K)}\left( 1-\frac{1}{g(K)}\right)^{n-1}\right)^{f(K)}
&\ge
1-\left(1-\frac{n}{g(K)e}\right)^{f(K)}\\
%1-\left(\frac{2}{e^2}\right)^{f(K)}
&\ge
1-\exp\left(-\frac{nf(K)}{g(K)e}\right)
%1-\exp(-f(K))
\ .
\end{align*}

Putting the two probabilities together, we get that
the probability of the considered event is at least
\[
1-\exp(-f(K)/64) - \exp\left(-\frac{nf(K)}{g(K)e}\right)
%\ge
%1-\zeta/K
\ .
\]
\end{proof}

\begin{corollary}
\label{cor:good-stop}
Consider iteration $K= \lceil\lceil 8n \rceil\rceil$. 
%and assume that all nodes are active in the beginning of this iteration.
The probability of event: in iteration $K$ all nodes %stop with
obtain the correct count, is at least
$1-\zeta/n$.
\end{corollary}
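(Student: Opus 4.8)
The plan is to read the claim off Lemma~\ref{lem:good-stop} by instantiating it at the single iteration $K=\lceil\lceil 8n\rceil\rceil$ and then showing that each of the two exponential error terms appearing there is at most $\zeta/(2n)$. The first thing I would record are the elementary facts about this value of $K$: since $\lceil\lceil\,\cdot\,\rceil\rceil$ returns the smallest power of two strictly larger than its argument, we have $8n<K\le 16n$. In particular $K\ge 8n$, so the hypothesis of Lemma~\ref{lem:good-stop} is met; moreover, since $\zeta\in(0,1)$ we also get $K/\zeta>8n$, and $g(K)=K/2\le 8n$, i.e.\ $n/g(K)=2n/K\ge 1/8$. (If it happened that $\lceil\lceil 8n\rceil\rceil<\lceil\lceil 12/\zeta\rceil\rceil$, the first iteration already has $K\ge 8n$ and the bound below only improves, since $f$ is increasing in $K$; so the stated iteration is well defined for the argument.)

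Next I would bound the first error term. Substituting $f(K)=64\log(K/\zeta)/\log(e/(e-2))$ gives $\exp(-f(K)/64)=\exp\!\big(-\log(K/\zeta)/\log(e/(e-2))\big)$, and using the lower bound $K/\zeta>8n$ together with $n>1$ one verifies by a direct computation that this quantity is at most $\zeta/(2n)$. The content of this step is purely that the constant $64$ in the thread count $f$ and the base $e/(e-2)$ were chosen precisely so as to leave this slack uniformly over all admissible $n$ and $\zeta$.

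For the second error term I would first use $n/g(K)\ge 1/8$ to write $nf(K)/(g(K)e)\ge f(K)/(8e)=8\log(K/\zeta)/\big(e\log(e/(e-2))\big)$, and then substitute the same lower bound $K/\zeta>8n$; simplifying shows $\exp\!\big(-nf(K)/(g(K)e)\big)\le \zeta/(2n)$ as well, again by routine calculus, the relevant choices being $g(K)=K/2$ and the constant in $f$. Combining the two estimates, Lemma~\ref{lem:good-stop} yields that in iteration $K=\lceil\lceil 8n\rceil\rceil$ all nodes obtain the correct count with probability at least $1-\zeta/(2n)-\zeta/(2n)=1-\zeta/n$, which is the claim.

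The only genuine work is the constant-chasing in the two middle steps: one must check that the specific values $64$, $e/(e-2)$, and $g(K)=K/2$ are large/small enough that \emph{both} $\exp(-f(K)/64)\le\zeta/(2n)$ and $\exp(-nf(K)/(g(K)e))\le\zeta/(2n)$ hold \emph{simultaneously} and for \emph{every} $n>1$ and $\zeta\in(0,1)$, not merely asymptotically; care is needed because the first term is the weaker of the two and governs the choice of the multiplicative constant in $f$. Everything else is a mechanical substitution of $K\in(8n,16n]$ into the bound of Lemma~\ref{lem:good-stop}.
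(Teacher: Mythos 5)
Your proposal is essentially the paper's own proof: instantiate Lemma~\ref{lem:good-stop} at $K=\lceil\lceil 8n\rceil\rceil$ and argue that each of the two exponential terms is at most $\zeta/(2n)$, which is exactly what the paper does (even more tersely, simply asserting the two bounds). One caution about the step you defer to ``direct computation'': with the constants as literally defined, $\exp(-f(K)/64)=\exp\bigl(-\log(K/\zeta)/\log(e/(e-2))\bigr)=(K/\zeta)^{-1/\ln(e/(e-2))}\approx (K/\zeta)^{-0.75}$, and since $K\le 16n$ this is not bounded by $\zeta/(2n)$ once $n/\zeta$ is moderately large, so the first (binding) term really only goes through under a more generous reading of the constants in $f$ --- a point the paper's proof glosses over in the same way, so it does not distinguish your argument from theirs.
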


\begin{proof}
By substituting $K$ by $\lceil\lceil 8n \rceil\rceil$ in
Lemma~\ref{lem:good-stop} we get the probability of the
considered event to be at least 
$1-\exp(-f(K)/64) - \exp\left(-\frac{nf(K)}{g(K)e}\right)
\ge 
1 - \zeta/(2n) - \zeta/(2n)
= 
1-\zeta/n$.
%\[
%\dk{
%1-\exp(-f(K)/64) - \exp\left(-\frac{nf(K)}{g(K)e}\right)
%\ge 
%1 - \zeta/(2n) - \zeta/(2n)
%= 
%1-\zeta/n
%}
%\ .
%\]
\end{proof}

\begin{theorem}
	\label{thm:rand-alg}
For any given $\epsilon>0$ and $\zeta>0$, 
with probability at least $1-\zeta$, 
all nodes running \nameR 
%simultaneously stops at all nodes returning 
obtain the correct count $n$ %of nodes
in $O((n+1/\zeta)^{4+\epsilon}\log^3 (n+1/\zeta))$ rounds and will not change it to an incorrect count after that, 
even when running against an adaptive adversary.
\end{theorem}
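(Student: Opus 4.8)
The plan is to isolate a single ``good'' iteration $K^\dagger=O(n+1/\zeta)$ at which, with probability at least $1-\zeta$, every node sets its variable $count$ to $n$; to show that from that point on $count$ can never change again; and to bound the number of rounds executed up to the end of that iteration.

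I would first record two purely deterministic facts, valid in every execution and for every $1$-interval connected (even adaptively chosen) topology sequence. First, $count$ is non-decreasing, since the only place it is modified is Line~\ref{randAlg}.\ref{maxcount}, via $count\gets\max\{count,\max(\Count)\}$. Second, the value $\max(\Count)$ produced in any iteration $K$ is at most $n$: if $K<n$ then \nameT trims every size estimate to $k\le K<n$ (Line~\ref{MMCAlgModified}.\ref{trim}), so every thread returns a count strictly below $n$; if $K\ge n$ then a thread with no \ldr node returns $0$ (no node ever reaches $status=done$, so Line~\ref{MMCAlgModified}.\ref{returncount} returns a zero), a thread with exactly one \ldr node returns $n$ by correctness of \nameT run with $\ellp=1$ (Theorem~\ref{thm:many} with $\ell=\ellp=1$, using $K\ge n$), and a thread with two or more \ldr nodes returns a value at most $n$ by the no-inflation property invoked in the proof of Proposition~\ref{f:max}. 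Since only strictly positive values are ever inserted into $\Count$ (Line~\ref{randAlg}.\ref{store}), this gives $\Count\subseteq\{1,\dots,n\}$ and hence $0\le count\le n$ at all times; combined with monotonicity, once $count=n$ at a node it stays $n$ forever, whatever the later coins and adversary do. This already settles the ``will not change to an incorrect count after that'' clause, and it handles the transient iterations $K<n$, in which $count$ may be positive but is always too small and gets corrected when $K$ grows.

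Next I would take $K^\dagger$ to be the first value assumed by $K$ in the doubling schedule of Line~\ref{randAlg}.\ref{llmciter} that is at least $8n$; since $K$ runs through powers of two at least $\lceil\lceil 12/\zeta\rceil\rceil$, we have $K^\dagger\le 2\max\{\lceil\lceil 12/\zeta\rceil\rceil,\lceil\lceil 8n\rceil\rceil\}=O(n+1/\zeta)$. Invoking Lemma~\ref{lem:good-stop} at $K=K^\dagger$ (equivalently Corollary~\ref{cor:good-stop} when $K^\dagger=\lceil\lceil 8n\rceil\rceil$), the event $\mathcal G$ that ``in iteration $K^\dagger$ the number of \ldr-free threads exceeds $f(K^\dagger)/2$ and at least one thread has exactly one \ldr node'' has probability at least $1-\exp(-f(K^\dagger)/64)-\exp(-nf(K^\dagger)/(g(K^\dagger)e))$, and a short calculation from $f(K)=64\log(K/\zeta)/\log(e/(e-2))$, $g(K)=K/2$ and $K^\dagger\ge\max\{8n,12/\zeta\}$ bounds this below by $1-\zeta$ for the constants chosen. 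On $\mathcal G$ the single-\ldr thread makes \emph{every} node reach $status=done$ with $k=n$ (Theorem~\ref{thm:many}), so every node inserts $n$ into its $\Count$; the flooding of the flag $b$ (Lines~\ref{MMCAlgModified}.\ref{blackflagbroadcast}--\ref{MMCAlgModified}.\ref{initiateends}), together with $1$-interval connectivity and the fact that $round_{\max}(K^\dagger)$ exceeds the dynamic diameter, forces all nodes to agree on which threads are \ldr-free and hence on $EmptyThreads$; therefore every node passes the test of Line~\ref{randAlg}.\ref{maxcount} and, by Proposition~\ref{f:max}, executes $count\gets\max(\Count)=n$. With the permanence fact this gives $count=n$ at all nodes from the end of iteration $K^\dagger$ on, with probability at least $1-\zeta$.

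For the running time, iteration $K$ runs its $f(K)$ threads in parallel over the shared medium, so it lasts exactly $round_{\max}(K)=\sum_{i=1}^{\lceil\log K\rceil}\bigl(p(2^i)\,r(2^i)+d(2^i)\bigr)$ rounds; by the per-epoch estimate in the proof of Corollary~\ref{cor:mmctime} with $\ellp=1$, $p(k)r(k)+d(k)=O(k^{4+\epsilon}\log^2 k)$, and the geometric sum gives $round_{\max}(K)=O(K^{4+\epsilon}\log^2 K)$; summing over the $O(\log K^\dagger)$ iterations up to and including $K^\dagger$ and using $K^\dagger=O(n+1/\zeta)$ produces $O((n+1/\zeta)^{4+\epsilon}\log^3(n+1/\zeta))$ rounds, after rescaling $\epsilon$ to absorb the $(1+\epsilon)$ exponents hidden in $d=k^{1+\epsilon}$. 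The whole argument survives an adaptive adversary because the only randomness \nameR uses is the per-thread, pre-communication \ldr selection of Line~\ref{randAlg}.\ref{selblacks}, whose distribution the adversary cannot touch, while the deterministic correctness statements used above (Theorem~\ref{thm:many} and the no-inflation property) hold for \emph{every} admissible topology sequence. I expect the main obstacle to be precisely the probability bookkeeping at the good iteration: one must guarantee that the doubling schedule reaches a $K^\dagger$ that is at least $8n$ but not so much larger that the chance of a thread holding exactly one \ldr node is destroyed, and then check that the two error terms of Lemma~\ref{lem:good-stop} really sum below $\zeta$ with the chosen constants in $f$ and $g$; a secondary but necessary point is the cross-node consistency of $EmptyThreads$ and of $\max(\Count)$, which rests on the flag-$b$ flooding and on Theorem~\ref{thm:many} delivering the value $n$ to \emph{every} node of a one-\ldr thread.
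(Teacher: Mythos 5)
Your proposal follows the paper's own route almost step for step: the permanence/no-inflation argument (count is monotone and $\max(\Count)\le n$, which is exactly what Proposition~\ref{f:max} and the paper's remark about iterations beyond $\lceil\lceil 8n\rceil\rceil$ provide), the ``good iteration'' argument via Lemma~\ref{lem:good-stop}, and the time bound obtained by summing $round_{\max}(K)$ over the doubling schedule as in Corollary~\ref{cor:mmctime}. The cross-node consistency of $\EmptyThreads$ via the flooding of $b$, and the adaptive-adversary remark, are also in the spirit of the paper (which treats them more tersely).

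The one step where your argument does not go through as claimed is the probability bookkeeping at $K^\dagger$, precisely the point you flag as the main obstacle. You define $K^\dagger$ as the first scheduled value with $K^\dagger\ge 8n$ and assert that Lemma~\ref{lem:good-stop} then yields failure probability below $\zeta$ ``for the constants chosen.'' This is true when the schedule actually reaches $\lceil\lceil 8n\rceil\rceil$, i.e.\ when $2\lceil\lceil 12/\zeta\rceil\rceil\le\lceil\lceil 8n\rceil\rceil$, because then $n/g(K^\dagger)$ is a constant fraction and both error terms are tiny (this is exactly Corollary~\ref{cor:good-stop}, which gives $1-\zeta/n$). But in the complementary regime $n\ll 1/\zeta$ the first scheduled value is $K^\dagger=2\lceil\lceil 12/\zeta\rceil\rceil\gg 8n$, and then the second term of Lemma~\ref{lem:good-stop} is $\exp\bigl(-nf(K^\dagger)/(g(K^\dagger)e)\bigr)=\exp\bigl(-\Theta(n\zeta\log(1/\zeta))\bigr)\approx 1$, so no ``short calculation'' with $f(K)=64\log(K/\zeta)/\log(e/(e-2))$ and $g(K)=K/2$ bounds the failure below $\zeta$; the chance that some thread has exactly one \ldr node in that iteration is only $O(n\zeta\log(1/\zeta))$, and it decreases in later iterations. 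The paper's proof does not make this claim: it anchors the probability argument exclusively at the iteration $K=\lceil\lceil 8n\rceil\rceil$ through Corollary~\ref{cor:good-stop}, and treats the case where the starting value $\lceil\lceil 12/\zeta\rceil\rceil$ exceeds $\lceil\lceil 8n\rceil\rceil$ only in the running-time accounting. So you should either restrict the probability claim to the case where the schedule contains $\lceil\lceil 8n\rceil\rceil$ (as the paper does) or supply a genuinely different argument for the regime $K^\dagger\gg n$; as written, that sentence of your proof is the gap.
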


\begin{proof}
By Corollary~\ref{cor:good-stop}, 
all nodes %stop with 
obtain the correct count at the end of iteration 
$K=\lceil\lceil 8n \rceil\rceil$, with probability at least $1-\zeta/n$.
%, provided all of them start the iteration.

As $K$ increases beyond $\lceil\lceil 8n \rceil\rceil$, the probability of having more than $f(K)/2$ threads without \ldr nodes increases. Thus, threads with \ldr nodes may produce a count in future iterations. If there is a thread with exactly one \ldr node, all nodes obtain the correct count by Proposition~\ref{f:max}. If on the other hand in those threads either $\ell=0$ or $\ell>1$, nodes do not obtain a new count or obtain a count smaller than $n$ respectively. Hence, nodes do not update their count in Line~\ref{randAlg}.\ref{maxcount}.

%We prove that indeed all the nodes start iteration 
%$K=\lceil\lceil 8n \rceil\rceil$ with probability at least 
%$1-\zeta/2$, unless they all stopped before at the same
%time and with the same correct output. 
%By Lemma~\ref{lem:bad-stop},
%the probability that some nodes could stop earlier than others
%or incorrect value could be returned (by some or all nodes)
%is at most $\zeta/2$. Hence, unless they all stopped 
%earlier at the same time and with the same correct output,
%they all start iteration $K=\lceil\lceil 8n \rceil\rceil$
%with probability at least $1-\zeta/2$.
%Combining all these together, all nodes stop
%{\em by} the end of iteration $K=\lceil\lceil 8n \rceil\rceil$
%with correct output with probability at least
%$1-\zeta/n-\zeta/2\ge 1-\zeta$.

%We prove now an asymptotic upper bound on the running time of \nameR. This bound relies on all nodes stopping by  the end of iteration $K=\lceil\lceil 8n \rceil\rceil$
%with correct output. Hence, the time bound holds with probability at least $1-\zeta$. 

%Recall that we introduce delays in the execution of \nameT as needed for synchronization among threads (c.f. Lines~\ref{leaderAlgModif}.\ref{leadersynchdelay} and~\ref{otherAlgModif}.\ref{othersynchdelay}). Thus, 
The total number of communication rounds for each iteration of \nameR is the number of rounds defined in \nameT, which is $round_{\max} = \sum_{k=2^i:i=1,2,\dots,\lceil\log K\rceil} (pr+d)$ (recall that $p$, $r$, and $d$ depend on $k$).
The value of $round_{\max}$ corresponds to the worst case scenario (with respect to running time) of $\ell=1$. As in Corollary~\ref{cor:mmctime} (of \nameD) we can show that the latter is  $round_{\max} \in O(K^{4+\epsilon}\log^3 K)$, which is then the running time of each iteration~$K$.

Now we consider two cases. If $\lceil\lceil 12/\zeta\rceil\rceil$, which is the initial value of $K$, is larger than $\lceil\lceil 8n\rceil\rceil$, which is the value of $K$ in the last iteration, we know that \nameR will have only one iteration. Thus, the running time would be $O(K^{4+\epsilon}\log^3 K) \in O((1/\zeta)^{4+\epsilon}\log^3 (1/\zeta))$. 
Otherwise, if $\lceil\lceil 12/\zeta\rceil\rceil\leq\lceil\lceil 8n\rceil\rceil$, the number of rounds by the end of iteration
$K=\lceil\lceil 8n \rceil\rceil$ is
$O(n^{4+\epsilon}\log^3 n)$, since $K=O(n)$, by the end of iteration $K$
there were less than $\log K$ iterations, $K,K/2,K/2^2,\ldots$,
and each iteration $K/2^i$ had $O((K/2^i)^{4+\epsilon} \log^3(K/2^i))$ rounds.
Summing up these
times we get $O(K^{4+\epsilon}\log^3 K) \in O(n^{4+\epsilon}\log^3 n)$.
Thus, the claimed time to obtain the correct count follows.
\end{proof}

\section{Conclusions}
\label{sec:conclude}

This paper expanded the knowledge about feasibility of
polynomial computation in anonymous dynamic environment/networks.
In particular, a counting-type of computation could be done deterministically if symmetry
is broken by existing a partition of nodes in which one part
knows its size. It is also feasible without any distinction between the nodes with an arbitrary probability, albeit without termination.
Our algorithms are structured in phases, where nodes average some potential values. 
This phased process, which could be slower than a continuous one, is needed by our analysis. 
We leave the study of how to simplify the structure of the algorithm and to speed up the computation for future work.
Other natural open directions include studying randomized approximation
solutions and extension to other related models and problems,
as well as deriving tighter upper and lower bounds in the considered setting.

\section*{Acknowledgements}
We thank anonymous reviewers for numerous comments and suggestions to improve the quality of arguments and presentation.
This work was supported by 
the National Science Center Poland (NCN) grant 2017/25/B/ST6/02553;
the UK Royal Society International Exchanges 2017 Round 3 Grant \#170293; 
and Pace University SRC Grant and Kenan Fund.
%

%\section*{References}

\bibliography{Comprehensive_2010}

\clearpage

%\appendix
%\input{appendix}

\end{document}